\DeclareMathOperator{\C}{\mathbb{C}}
\DeclareMathOperator{\N}{\mathbb{N}}
\DeclareMathOperator{\re}{Re}
\DeclareMathOperator{\R}{\mathbb{R}}
\DeclareMathOperator{\sgn}{\textup{sgn}}
\DeclareMathOperator{\supp}{\textup{supp}}
\DeclareMathOperator{\Z}{\mathbb{Z}}
\DeclareMathAlphabet{\mathpzc}{OT1}{pzc}{m}{it}
\newcommand{\bDelta}{{\mbox{$\triangle$}\hspace{-8.0pt}\scalebox{0.8}{$\triangle$}}}
\newcommand{\bA}{\boldsymbol{A}}
\newcommand{\bk}{\boldsymbol{k}}
\newcommand{\bR}{\boldsymbol{R}}
\newcommand{\brr}{\boldsymbol{r}}
\newcommand{\bx}{\boldsymbol{x}}
\newcommand{\by}{\boldsymbol{y}}
\newcommand{\brho}{\boldsymbol{\rho}}
\newcommand{\sx}{\mathrm{x}}
\newcommand{\sR}{\mathrm{R}}
\newcommand{\uA}{\textup{A}}
\newcommand{\uH}{\textup{CS}} 
\newcommand{\uS}{\textup{LL}} 
\newcommand{\Dc}{\mathscr{D}}
\newcommand{\Ec}{\mathcal{E}}
\newcommand{\Pc}{\mathcal{P}}
\newcommand{\ep}{\varepsilon}
\newcommand{\bep}{{\boldsymbol{\varepsilon}}}
\theoremstyle{plain}
\newtheorem{thm}{Theorem}
\newtheorem{lem}[thm]{Lemma}
\newtheorem*{cor}{Corollary}
\newtheorem{prop}[thm]{Proposition}
\theoremstyle{definition}
\theoremstyle{remark}
\newtheorem*{ack}{Acknowledgements}
\begin{document}

\title{Local exclusion and Lieb-Thirring inequalities for intermediate and fractional statistics}

%
%
%
%

\author{Douglas Lundholm$^{a,}$\thanks{Work partly done while visiting FIM, ETH Z\"urich. (e-mail: lundholm@math.ku.dk)} 
\ and Jan Philip Solovej$^{b,}$\thanks{Work partly done while visiting Institut Mittag-Leffler. (e-mail: solovej@math.ku.dk)}
\\[5pt]
\scriptsize $^a$ Institut Mittag-Leffler, Aurav\"agen 17, SE-182 60 Djursholm, Sweden\\ 
\scriptsize $^b$ Department of Mathematical Sciences, University of Copenhagen\\ 
\scriptsize Universitetsparken 5, DK-2100 Copenhagen \O, Denmark
}



\date{}

\maketitle

\begin{abstract}
	In one and two spatial dimensions there is a logical possibility
	for identical quantum particles different from bosons and fermions,
	obeying intermediate or fractional (anyon) statistics.
	We consider applications of a recent Lieb-Thirring inequality 
	for anyons in two dimensions,
	and derive new Lieb-Thirring inequalities for intermediate
	statistics in one dimension with implications for models of 
	Lieb-Liniger and Calogero-Sutherland type.
	These inequalities follow from a local form of the exclusion principle
	valid for such generalized exchange statistics.

	MSC2010: 81Q10, 81S05, 35P15, 46N50

	Keywords: anyons, Calogero-Sutherland model, eigenvalue bounds, fractional statistics, Lieb-Liniger model, Lieb-Thirring inequality, self-adjoint extensions, stability
\end{abstract}



\section{Introduction}

	Fundamental to 
	many-body quantum mechanics is the notion of
	identical particles and associated particle statistics.
	Many of the remarkable quantum phenomena observed in nature
	are direct consequences of the fact that identical particles
	in three spatial dimensions are 
	either \emph{bosons}, obeying Bose-Einstein statistics,
	or \emph{fermions}, obeying Fermi-Dirac statistics.
	For example, the amplification of light in lasers
	by superposition of photons is possible because they are bosons, 
	while the periodic table of elements arises from the sequential 
	filling of atomic shells by electrons which are fermions.
	This binary classification has been known since the early days of quantum
	theory, and follows logically from the requirement 
	that an $N$-body wave function
	$\psi(\bx_1,\ldots,\bx_N)$, 
	modeled as a square-integrable complex-valued function\footnote{
	We will for simplicity always model particles as non-relativistic 
	and \emph{scalar}, i.e. without internal symmetries or spin.}
	on the $N$-particle configuration space $(\R^3)^N$, 
	or alternatively as an element of the tensor product 
	of one-particle spaces
	$\bigotimes^N L^2(\R^3)$,
	has to be either symmetric (corresponding to bosons) 
	or antisymmetric (corresponding to fermions) under the action
	of permutations of the particle labels 
	--- which, 
	if the particles are 
	indistinguishable,
	must leave the observable probability $|\psi|^2$ invariant.
	However, it was discovered in the 
	1970-80's \cite{Streater-Wilde:70,Leinaas-Myrheim:77,Goldin-Menikoff-Sharp:81,Wilczek:82} 
	that this simple picture 
	for identical particles and statistics
	is logically incomplete 
	and should be replaced by a more general framework, 
	which in three (or higher) space dimensions nicely reduces to 
	the above binary picture,
	but in the one and two dimensional cases actually admits 
	a wider range of possibilities.
	Identical particles classified via this generalized framework,
	but falling outside the usual boson/fermion classification, 
	were said to obey \emph{intermediate} or \emph{fractional statistics}.
	Although \emph{elementary} identical particles live in 
	three space dimensions and hence must be either bosons or fermions, 
	these more exotic one- and two-dimensional possibilities 
	have recently become more than just 
	a mathematical curiosity,
	with the advent of trapped bosonic condensates 
	\cite{Kinoshita-Wenger-Weiss:04,Paredes_et_al:04}
	and quantum Hall physics 
	\cite{Arovas-Schrieffer-Wilczek:84,Laughlin:99,Prange-Girvin:90},
	and thereby the discoveries of 
	effective models of particles (or \emph{quasi}-particles)
	which seem to obey these generalized rules for identical 
	particles and statistics
	(see \cite{Bloch-Dalibard-Zwerger:08,Froehlich:90,Khare:05,Lerda:92,Myrheim:99,Ouvry:07,Polychronakos:99,Wilczek:90} 
	for reviews).

	In two space dimensions the usual requirement, 
	that the phase change of the
	wave function which arises from an interchange of two identical particles
	(or, in the more precise general framework, 
	from a \emph{continuous} simple interchange of two particles)
	needs to be either +1 (boson) or -1 (fermion), is no longer valid.
	This interchange phase is instead allowed to be \emph{any} 
	fixed unit complex number $e^{i\alpha\pi}$.
	The corresponding particles are called \emph{anyons},
	and are alternatively parameterized by the real-valued 
	statistics parameter $\alpha \in (-1,1]$.
	A standard way to model such anyons 
	is by means of bosons in $\R^2$ together with magnetic potentials 
	$\bA_j(\sx)$ 
	of Aharonov-Bohm-type with strength $\alpha$ 
	between every pair of particles
	(see \eqref{kinetic_energy_A} below),
	giving rise to the correct phase of the wave function
	as the particles interchange or encircle each other
	\cite{Leinaas-Myrheim:77,Wu:84}.
	The free kinetic energy operator for $N$ anyons 
	is thus given by
	(see \eqref{kinetic_energy_A}; we use the unit conventions $\hbar = m = 1$)
	$$
		\hat{T}_{\uA} := \frac{1}{2} \sum_{j=1}^N D_j^2,
		\qquad D_j := -i\nabla_j + \bA_j(\sx),
	$$
	and acting on completely symmetric wave functions
	$\psi \in \bigotimes^N_{\textup{sym}} L^2(\R^2)$.
	
	In one dimension, the above geometric picture of 
	continuous inter\-chan\-ges of particles
	breaks down 
	--- particles have to collide in order to be interchanged.
	In a quantum mechanical context this necessitates 
	the prescription of boundary conditions 
	for the wave function at the collision points
	$r := x_{j+1}-x_j \to 0^+$
	(i.e. at the boundary of the proper configuration space
	$X := \{ \sx \in \R^N : x_1 < x_2 < \ldots < x_N \}$,
	in this case with codimension one).
	There are two ways in which one may arrive naturally at a set of
	boundary conditions. 
	One approach \cite{Leinaas-Myrheim:77,Aneziris_et_al:91}, 
	which is generally referred to as \emph{Schr\"odinger-type quantization}, 
	starts out with the proper configuration space $X$
	and imposes a general (Robin) boundary condition of the form
	$\partial_r \psi = \eta \psi$ at $r=0^+$,
	with $\eta \in \R$ an arbitrary but fixed statistics parameter.
	The case $\eta=0$ corresponds to bosons (Neumann b.c.)
	and $\eta = +\infty$ to fermions (Dirichlet b.c.).
	Such boundary conditions can, 
	if $\psi$ is extended from $X$ to $\R^N$ and modeled symmetric about $r=0$,
	alternatively be encoded by a delta potential $2\eta\delta(r)$,
	and the resulting model for 1D intermediate statistics, 
	represented with a modified kinetic energy operator 
	$$
		\hat{T}_{\uS} 
		:= -\frac{1}{2}\sum_{j=1}^N \frac{\partial^2}{\partial x_j^2}
		+ \sum_{1\le j < k \le N} 2\eta \,\delta(x_k - x_j)
	$$
	and acting on completely symmetric wave functions 
	$\psi \in \bigotimes^N_{\textup{sym}} L^2(\R)$,
	is equivalent to the Lieb-Liniger model \cite{Lieb-Liniger:63}
	for a gas of bosons on $\R$ with delta interactions.
	
	Another, inequivalent but 
	just as well-motivated,
	approach to 1D intermediate statistics 
	is to consider \emph{Heisenberg-type quantization} 
	\cite{Leinaas-Myrheim:88,Polychronakos:89,Isakov:92,Leinaas-Myrheim:93} 
	in the sense that one starts out with the classical phase space 
	$\R^N \times \R^N$ and its observables
	(both should be symmetrized w.r.t.
	relabeling of particles)
	and considers representations of a corresponding algebra of operators.
	One finds, in the two-particle case, that the relevant representations
	are labeled by a continuous parameter 
	$\alpha \in \R$, and that 
	in the standard coordinate representation this amounts
	to imposing a boundary condition 
	$\psi(r) \sim r^\alpha$ for the wave function
	as the relative coordinate $r \to 0^+$
	(this relation with boundary conditions is more complicated
	and will be elaborated in Section \ref{sec:identical_1D} below, but
	the restriction $\alpha \in (-1/2,\infty)$ is always
	required for square-integrability).
	In this context bosons correspond to $\alpha=0$ 
	and fermions to $\alpha=1$.
	Again, using symmetric $\psi$ the general boundary condition 
	or choice of representation can 
	equivalently be encoded by an 
	interaction potential, of the form $\alpha(\alpha-1)/r^2$.
	The resulting model can be extended to the $N$-particle case
	and is equivalent to the homogeneous part 
	of the Calogero-Sutherland model 
	\cite{Calogero:69,Sutherland:71},
	represented with a modified kinetic energy operator
	$$
		\hat{T}_{\uH} 
		:= -\frac{1}{2}\sum_{j=1}^N \frac{\partial^2}{\partial x_j^2}
		+ \sum_{1\le j < k \le N} \frac{\alpha(\alpha-1)}{(x_k - x_j)^2},
	$$
	and again acting on completely symmetric wave functions 
	$\psi \in \bigotimes^N_{\textup{sym}} L^2(\R)$.

	One may ask what properties these different types of 
	identical particles possess.
	The fundamental characteristic property of fermions is 
	that they obey 
	\emph{Pauli's exclusion principle}, 
	i.e. the fact that the antisymmetry of the wave function
	implies that no two particles can occupy the same single-particle
	state: $\varphi \wedge \varphi = 0$ for any $\varphi \in L^2(\R^d)$.
	A powerful consequence of this is
	the \emph{Lieb-Thirring inequality} 
	\cite{Lieb-Thirring:75,Lieb-Thirring:76} (see also \cite{Lieb-Seiringer:10})
	for the energy of $N$ (spinless) 
	fermions in an external potential $V$ in $\R^d$, given by
	(again, with the conventions $\hbar = m = 1$, and $\|\psi\| = 1$)
	\begin{multline} \label{Lieb-Thirring}
		\int_{\R^{dN}} \sum_{j=1}^N \left( 
			\frac{1}{2} |\nabla_j \psi|^2 + V(\bx_j)|\psi|^2 
			\right) d\sx \\
		\ \ge \ -\sum_{k=0}^{N-1} |\lambda_k(h)|
		\ \ge \ -C_d \int_{\R^d} |V_-(\bx)|^{1 + \frac{d}{2}} \,d\bx,
	\end{multline}
	where $V_{\pm} := (V \pm |V|)/2$
	and $C_d$ a positive constant.
	Here $\lambda_k(h)$ denote the negative eigenvalues 
	(ordered by decreasing magnitude and with multiplicity) 
	of the one-particle operator
	$h = -\frac{1}{2}\Delta + V(\bx)$.
	The first inequality expresses the Pauli exclusion principle
	while the second concerns the trace over the negative spectrum of $h$
	and in some sense incorporates the uncertainty principle.
	In fact, the Lieb-Thirring inequality 
	is equivalent to the \emph{kinetic energy inequality}
	\begin{equation} \label{kinetic-energy-inequality}
		T_0 := \frac{1}{2} \int_{\R^{dN}} \sum_{j=1}^N |\nabla_j \psi|^2 \,d\sx
		\ \ge \ C_d' \int_{\R^d} \rho(\bx)^{1 + \frac{2}{d}} \,d\bx,
	\end{equation}
	with $C_d' := d(2/C_d)^{2/d}/(d+2)^{1+2/d}$ another 
	positive constant\footnote{The currently best known value
		for $C_d'$ is a factor $(3/\pi^2)^{1/d}$ smaller than its
		semiclassical value with
		$C_d = \frac{2^{d/2}}{(2\pi)^d} \int_{\R^d} (1-|\xi|^2)_+ \,d\xi$
		\cite{Dolbeault-Laptev-Loss:08}.
		It is conjectured that for $d=3$ the inequality 
		\eqref{kinetic-energy-inequality} for fermions holds with exactly the
		semiclassical constant and the Thomas-Fermi expression on the r.h.s.
		\cite{Lieb-Thirring:75}.}, 
	and $\rho$ the one-particle density function
	associated to $\psi$:
	\begin{equation} \label{one-particle-density}
		\rho(\bx) := \sum_{j=1}^N \int_{\R^{d(N-1)}} 
		|\psi(\bx_1,\ldots,\bx_{j-1},\bx,\bx_{j+1},\ldots,\bx_N)|^2 
		\prod_{k \neq j} d\bx_k.
	\end{equation}
	The inequality \eqref{kinetic-energy-inequality} 
	can be interpreted as a strong
	form of the uncertainty principle valid for fermions.
	
	These inequalities need to be weakened in the case of weaker exclusion.
	E.g., in the situation that up to
	$q \in \N$ particles can occupy the same one-particle state
	(sometimes referred to as \emph{Gentile intermediate statistics} 
	\cite{Gentile:40,Gentile:42}),
	the r.h.s. of \eqref{Lieb-Thirring} resp. \eqref{kinetic-energy-inequality}
	are to be multiplied by $q$ resp. $q^{-2/d}$.
	For bosons there is no restriction on 
	the number of particles in the same state;
	we can e.g. consider 
	$\psi = \varphi_0 \otimes \varphi_0 \otimes \ldots \otimes \varphi_0$,
	where $\varphi_0$ is the ground state of $h$.
	Hence bosons can be accommodated by taking $q=N$, 
	and the inequalities become trivial as $N \to \infty$.
	However,
	for intermediate and fractional statistics (in the above sense)
	the picture is more complicated.
	There have 
	been partially successful attempts to relate also such 
	many-body quantum states 
	to one-particle states restricted by some exclusion principle
	\cite{Haldane:91,Isakov:94,Goldin-Majid:04},
	but so far no general picture has emerged.
	The difficulty has to do with the fact that these generalized
	exchange statistics are naturally modeled using \emph{interacting}
	Hamiltonians, hence leaving the much simpler realm of 
	single-particle operators and spaces.

	In \cite{Lundholm-Solovej:anyon} we found that anyons 
	in $\R^2$, with statistics parameter
	$\alpha = \mu/\nu$ an \emph{odd numerator fraction},
	satisfy a kinetic energy inequality of the form
	\begin{equation} \label{TA-kinetic}
		T_{\uA} := \frac{1}{2} \int_{\R^{2N}} \sum_{j=1}^N |D_j \psi|^2 \,d\sx
		\ \ge \ \frac{1}{\nu^2} C_{\uA} \int_{\R^2} \rho(\bx)^2 \,d\bx,
	\end{equation}
	implying the Lieb-Thirring inequality
	\begin{equation} \label{TA-LT}
		\sum_{j=1}^N \int_{\R^{2N}} \left( \frac{1}{2} |D_j \psi|^2 
			+ V(\bx_j)|\psi|^2 \right) \,d\sx
		\ \ge \ -\nu^2 \,C_{\uA}' \int_{\R^2} |V_-(\bx)|^2 \,d\bx,
	\end{equation}
	for some positive constants $C_{\uA}$, $C_{\uA}'$
	(numerical estimates will be given below).
	For a general statistics parameter $\alpha \in \R$ and finite $N$ these 
	inequalities hold with $1/\nu^2$ replaced by $C_{\alpha,N}^2$, 
	where
	\begin{equation} \label{C_alpha}
		C_{\alpha,N} := \min_{p \in \{0,1,\ldots,N-2\}} \min_{q \in \Z} 
			|(2p+1)\alpha - 2q|.
	\end{equation}
	However, this expression tends to zero as $N \to \infty$,
	except when $\alpha = \mu/\nu$ is an odd numerator (reduced) fraction
	in which case $\lim_{N \to \infty} C_{\alpha,N} = 1/\nu$
	(cp. Figure 2 in \cite{Lundholm-Solovej:anyon}).
	This, at first unexpected, difference 
	between odd numerator fractions 
	and even numerator and irrational $\alpha$
	is further discussed in \cite{Lundholm-Solovej:exclusion},
	where we also provide some arguments 
	for this difference actually being natural.
	
	In contrast to previous proofs of the standard Lieb-Thirring inequalities
	\eqref{Lieb-Thirring} and \eqref{kinetic-energy-inequality},
	the inequality \eqref{TA-kinetic} for anyons was proved using
	a \emph{local} form of the uncertainty principle 
	together with a \emph{local} form of the exclusion principle, 
	valid for anyons with general $\alpha$ and involving
	the constant $C_{\alpha,N}$ in \eqref{C_alpha}
	measuring the strength of exclusion.
	Our inspiration for this local approach to Lieb-Thirring inequalities,
	applicable to interacting systems, 
	was 
	the original proof of stability of ordinary fermionic matter 
	\cite{Dyson-Lenard:67} due to Dyson and Lenard
	(see also \cite{Dyson:68,Lenard:73})
	which involved only local, comparatively weak, 
	consequences of the Pauli principle
	for fermions (see Lemma \ref{lem:local_exclusion_F} below).

\subsection{Main results}
	
	In this work we consider 
	further consequences of our results in 
	\cite{Lundholm-Solovej:anyon} for anyons, and
	extend the above families of Lieb-Thirring-type 
	inequalities to identical particles obeying intermediate
	exchange statistics in one dimension, in the above sense.
	We again start 
	from a local form of the uncertainty principle 
	together with a local exclusion principle of the form
	(cp. 
	Lemma \ref{lem:local_exclusion_F} below
	for fermions)
	\begin{equation} \label{local_exclusion_sketch}
		\int\limits_{[a,b]^n} \overline{\psi(\sx)} \left( \hat{T}_{\uS/\uH} \psi \right)(\sx) \,d\sx
			\ \ge \ (n-1)\frac{\xi_{\uS/\uH}^2}{|a-b|^2} \int\limits_{[a,b]^n} |\psi(\sx)|^2 \,d\sx,
	\end{equation}
	where in the Lieb-Liniger case 
	$\xi_{\uS}$ is a measure of exclusion 
	depending on $\eta$ times the length $|a-b|$ of a local interval,
	and in the Calogero-Sutherland case $\xi_{\uH}$ depends only on $\alpha$
	(see Figure \ref{fig:xiS} resp. \ref{fig:xiH} below
	for their exact dependence).

	\begin{thm} \label{thm:main}
		Let $\rho$ be the one-particle density 
		\eqref{one-particle-density}
		associated to 
		a normalized completely symmetric wave function 
		$\psi \in \bigotimes^N_{\textup{sym}} L^2(\R)$
		of $N \ge 2$ identical particles on the real line.

		For the Lieb-Liniger case, with statistics parameter $\eta \ge 0$
		and total kinetic energy 
		$T_{\uS} = \langle \psi, \hat{T}_{\uS}\psi \rangle$,
		we have
		\begin{equation} \label{TS-kinetic}
			T_{\uS} \ \ge \ 
			C_{\uS} \int_{\R} \xi_{\uS}(2\eta/\rho^*(x))^2 \rho(x)^3 \,dx,
		\end{equation}
		where 
		$\rho^*$ is
		the associated Hardy-Littlewood maximal function 
		(see \eqref{Hardy-Littlewood-rho}),
		and $C_{\uS}$ is a
		universal constant satisfying
		$3 \cdot 10^{-5} \le C_{\uS} \le 2/3$.

		For the Calogero-Sutherland case, with statistics parameter $\alpha \ge 1$
		and total kinetic energy 
		$T_{\uH} = \langle \psi, \hat{T}_{\uH}\psi \rangle$,
		and with its restriction to an arbitrary finite interval $Q \subseteq \R$ 
		denoted $T_{\uH}^Q$ (see \eqref{local_kinetic_energy_H}),
		we have whenever $\int_Q \rho \ge 2$ that
		\begin{equation} \label{TH-kinetic}
			T_{\uH}^Q \ \ge \ 
			C_{\uH} \, \xi_{\uH}(\alpha)^2 \frac{(\int_Q \rho \,dx)^3}{|Q|^2},
		\end{equation}
		for a 
		universal constant $1/32 \le C_{\uH} \le 2/3$.
	\end{thm}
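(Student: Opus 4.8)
The plan is to carry the two-step local strategy of \cite{Lundholm-Solovej:anyon} over to one dimension: combine a \emph{local uncertainty principle} with the \emph{local exclusion principle} \eqref{local_exclusion_sketch}, and then glue the resulting local estimates together by a covering argument. For any bounded interval $Q \subseteq \R$ I would establish, on the one hand, the Sobolev--Poincar\'e bound
$$
	T^Q \ \ge \ c_1 \frac{\int_Q \rho^3\,dx}{\bigl(\int_Q \rho\,dx\bigr)^2} \ - \ \frac{c_2}{|Q|^2}\int_Q \rho\,dx,
$$
which needs only the free kinetic energy (the interactions being nonnegative for $\eta \ge 0$, resp.\ $\alpha \ge 1$), and, on the other hand, the exclusion bound
$$
	T^Q \ \ge \ \Bigl(\textstyle\int_Q \rho\,dx - 1\Bigr)\frac{\xi^2}{|Q|^2},
$$
where the constant $\xi = \xi_{\uS/\uH}$ carries the statistics.

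For the local exclusion in the Calogero--Sutherland case I would work on the ordered sector $x_1 < \dots < x_n$ of $n$ particles in $Q$, discard the nonnegative non-adjacent terms $\alpha(\alpha-1)/(x_k-x_j)^2$, distribute the free kinetic energy among the $n-1$ adjacent gaps, and apply to each gap $(0,\ell)$, $\ell \le |Q|$, a one-dimensional Hardy inequality for $-\partial_r^2 + \alpha(\alpha-1)/r^2$. This produces the factor $n-1$ and identifies $\xi_{\uH}(\alpha)^2$ as the value of the corresponding one-dimensional variational problem. The Lieb--Liniger case is parallel, with the Robin condition $\partial_r\psi = \eta\psi$ on each gap replacing the $1/r^2$ potential; the lowest relative-coordinate eigenvalue, as a function of the dimensionless product of $2\eta$ and the gap length, defines $\xi_{\uS}$. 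Here one must also be careful about the self-adjoint realization encoded by the $\delta$-interaction.

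With the two local inequalities available, the cubic scaling is extracted by optimizing the subdivision. Partitioning $Q$ into $K$ equal pieces and applying exclusion on each gives, for evenly spread mass, a total $\gtrsim K^2 \bigl(\int_Q\rho\bigr)\xi^2/|Q|^2$; since each piece must carry mass $\gtrsim 1$ one has $K \lesssim \int_Q\rho$, and the optimal $K \sim \int_Q\rho$ yields exactly $\bigl(\int_Q\rho\bigr)^3/|Q|^2$. The uncertainty principle controls the opposite, concentrated regime, where a few pieces carry most of the mass and the large local values of $\rho$ feed the $\int\rho^3$ term; a stopping-time (dyadic) decomposition interpolates between the two. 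For Calogero--Sutherland this is run inside the fixed $Q$ with the constant coupling $\xi_{\uH}(\alpha)$, giving \eqref{TH-kinetic} with $C_{\uH} \ge 1/32$. For Lieb--Liniger the same decomposition is run over all of $\R$; now $\xi_{\uS}$ depends on each interval's length through $2\eta|Q'|$, and the Hardy--Littlewood maximal function $\rho^*$ is precisely what converts this length dependence into the pointwise coupling $\xi_{\uS}(2\eta/\rho^*)$ of \eqref{TS-kinetic}, so that the local contributions can be summed. The upper bounds $C_{\uS}, C_{\uH} \le 2/3$ follow by testing against explicit trial states.

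I expect the main obstacle to be the local exclusion step: getting the sharp constant $\xi$ together with the correct many-body factor $n-1$ requires sharing the free kinetic energy among the adjacent gaps without losing powers of $|Q|$ and solving (or sharply estimating) the one-dimensional Hardy, resp.\ Robin, variational problems that define $\xi_{\uH}(\alpha)$ and $\xi_{\uS}(\cdot)$. The secondary difficulty is the Lieb--Liniger covering, where the spatially varying coupling forces the maximal-function substitution to be justified uniformly across the decomposition.
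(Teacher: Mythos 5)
Your architecture --- local uncertainty plus the local exclusion \eqref{local_exclusion_sketch}, glued by a dyadic stopping-time decomposition, with the Hardy--Littlewood maximal function absorbing the length dependence of $\xi_{\uS}$ --- is exactly the paper's strategy, and your Lieb--Liniger half is essentially right. (The paper additionally needs the monotonicity and concavity/convexity properties of $y\mapsto\xi_{\uS}(y)^2$ and $x\mapsto\xi_{\uS}(\eta/x)^2x^3$ from Lemma \ref{lem:Schroedinger_properties}, both to replace the pairwise center-of-mass-dependent gap length by $|Q|$ in the exclusion step and to sum the low-density intervals along each branch of the tree; and it is precisely the uniform bound $\xi_{\uS}\le\pi/2$ that lets the statistics-free uncertainty term on the concentrated intervals be re-expressed with the coupling $\xi_{\uS}(2\eta/\rho^*)^2$.)

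The genuine gap is in the Calogero--Sutherland half: the local uncertainty principle cannot play the role you assign to it there. It uses only the free kinetic energy, hence carries no factor of $\alpha$, while the target \eqref{TH-kinetic} carries $\xi_{\uH}(\alpha)^2\to\infty$; so any regime in which you rely on uncertainty destroys the uniformity of $C_{\uH}$ in $\alpha$. The paper even proves that $T_{\uH}\ge C\,\xi_{\uH}(\alpha)^2\int_\R\rho^3$ is \emph{false} for every $C>0$ (the ``necessity of $\tilde\rho$'' theorem following Theorem \ref{thm:LT-Heisenberg}), which is why the statement only contains $(\int_Q\rho)^3/|Q|^2$. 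The repair is to abandon uncertainty entirely for CS and control \emph{all} low-mass intervals by the exclusion energy of the B-interval terminating each branch, using $\int_{Q_A}\tilde\rho^3\le 8/|Q_A|^2$ and the convergent geometric sum $\sum_{j\le k}4^{j}/|Q_0|^2\lesssim |Q_B|^{-2}$, then Jensen to reach $(\int_Q\rho)^3/|Q|^2$. A secondary caution on your exclusion lemma: if you separate the center of mass via $n\sum_j|\partial_j\psi|^2=\sum_{j<k}|\partial_j\psi-\partial_k\psi|^2+|\sum_j\partial_j\psi|^2$ and then keep only the $n-1$ adjacent gaps, the $1/n$ penalty reduces the prefactor to $O(1)$; the linear growth $n-1$ is genuinely needed because the local particle-number distribution on a B-interval can be heavy-tailed. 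The paper keeps all $\binom{n}{2}$ pairs so that $\tfrac{2}{n}\binom{n}{2}=n-1$; your adjacent-gap version survives only if you instead use $|\partial_{j+1}\psi-\partial_j\psi|^2\le 2\bigl(|\partial_j\psi|^2+|\partial_{j+1}\psi|^2\bigr)$ directly, without the global center-of-mass split.
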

	
	In particular, 
	if the density $\rho$ is confined to an interval of length $L$, 
	then 
	\eqref{TH-kinetic} 
	implies for the energy per unit length
	\begin{equation} \label{TH-thermodynamics}
		T_{\uH}/L \ \ge \ C_{\uH} \,\xi_{\uH}(\alpha)^2 \bar{\rho}^3,
	\end{equation}
	where $\bar{\rho} := N/L$ is the average density of particles.
	Note that $\xi_{\uH}(\alpha) \sim \alpha$ to leading order as
	$\alpha \to \infty$ (see Figure \ref{fig:xiH}).
	We can compare this bound with 
	Calogero \cite{Calogero:69} and Sutherland \cite{Sutherland:71}
	(whose models are exactly solvable
	for certain choices of external potentials),
	where one finds 
	for the ground state energy of a system confined to an interval 
	\begin{equation} \label{CS-thermodynamics}
		T_{\uH}/L \to \frac{\pi^2}{6} \alpha^2 \bar{\rho}^3,
	\end{equation}
	in the thermodynamic limit $N,L \to \infty$
	with fixed density $\bar{\rho}$.
	However, note that we do not need to assume a particular 
	confinement potential 
	for our bounds since they concern
	the kinetic energy $T_{\uH}$ alone.
	The possibility for considering more general external potentials
	in the Calogero-Sutherland models has also been discussed in the
	context of Thomas-Fermi theory; see
	\cite{Sen-Bhaduri:95,Smerzi:96}.
	
	In the Lieb-Liniger case,
	the bound \eqref{TS-kinetic} implies that if 
	$\rho^* \le \gamma \bar{\rho}$ for some constant
	$\gamma>0$, i.e. if the density is sufficiently homogeneous, then
	\begin{equation} \label{TS-thermodynamics}
		T_{\uS}/L \ \ge \ C_{\uS} \,\xi_{\uS}(2\eta/(\gamma\bar{\rho}))^2 \bar{\rho}^3.
	\end{equation}
	We have $\xi_{\uS}(t) \sim \sqrt{t}$ for small $t$ and
	$\xi_{\uS}(t) \to \pi/2$ as $t \to \infty$ (see Figure \ref{fig:xiS}).
	Compare with Lieb and Liniger \cite{Lieb-Liniger:63}
	(whose model is again exactly solvable in the absence 
	of an external potential),
	where it is shown that in the thermodynamic limit 
	the ground state energy satisfies
	\begin{equation} \label{LL-thermodynamics}
		T_{\uS}/L \to \frac{1}{2} e(2\eta/\bar{\rho}) \bar{\rho}^3,
	\end{equation}
	with an implicitly defined function $e$ s.t.
	$e(t) \sim t, t \ll 1$, $e(t) \to \frac{\pi^2}{3}, t \to \infty$.
	We can also compare with \cite{Lieb-Seiringer-Yngvason:03},
	where an energy functional of a form similar to the r.h.s. of 
	\eqref{TS-kinetic} arises in the limit of tubular confinement
	of a three-dimensional bosonic gas.
	
	The methods we use to prove Theorem \ref{thm:main}
	are similar to those used for anyons in \cite{Lundholm-Solovej:anyon}. 
	However,
	certain new technical complications arise 
	in the one-dimensional context,
	such as a local dependence of the strength of exclusion
	in the Lieb-Liniger case, and the possibility of arbitrarily 
	strong exclusion in the Calogero-Sutherland case.
	This is the reason for the more complicated expressions 
	\eqref{TS-kinetic} and \eqref{TH-kinetic}
	as compared to \eqref{kinetic-energy-inequality} and \eqref{TA-kinetic}.
	There is certainly room for improvement in our bounds for the constants
	$C_{\uA/\uS/\uH}$ (we mainly consider the forms of the inequalities 
	and their dependence on the statistics parameters
	to be of conceptual interest), 
	and for our methods we need to restrict to nonnegative
	statistics potentials, i.e. $\eta \ge 0$ resp. $\alpha \ge 1$.
	The intermediate Calogero-Sutherland case $\alpha \in (0,1)$
	is certainly interesting (so is $\alpha \in (-1/2,0)$), but
	presents additional challenges and will not be addressed here.
	
	In Section \ref{sec:identical_particles} we establish 
	the context and notation of the paper,
	taking care to define the relevant operators properly.
	This is something we have not found been discussed in detail
	in the literature.
	In particular, we show that the formal definition of the kinetic energy
	form used for anyons in \cite{Lundholm-Solovej:anyon}
	is indeed the natural one. 
	In Section \ref{sec:local_exclusion} we recall the 
	local exclusion principle for fermions and anyons,
	with an application of the latter to an explicit bound for the 
	energy of the ideal anyon gas.
	We then introduce corresponding local principles of exclusion
	for 1D intermediate statistics.
	In Section \ref{sec:local_uncertainty} we deduce a local form 
	of the uncertainty principle valid in arbitrary dimensions.
	These local bounds are then applied in Section \ref{sec:Lieb-Thirring}
	to prove Theorem \ref{thm:main}.
	In the final section we
	consider applications of the Lieb-Thirring 
	inequalities \eqref{TA-kinetic} and \eqref{TA-LT} 
	for anyons to the old 
	problem of
	many anyons confined in a harmonic oscillator potential,
	as well as the question of thermodynamic stability 
	for a system of charged anyons 
	and static particles interacting via 3D Coulomb potentials.
	We end with a discussion on the addition of external potentials 
	for the Calogero-Sutherland case.
	
	\begin{ack}
		We thank Giovanni Felder, J\"urg Fr\"ohlich, Jens Hoppe, 
		Edwin Langmann and Robert Seiringer 
		for comments and discussions.
		Support from the Danish Council for Independent Research 
		as well as from Institut Mittag-Leffler (Djursholm, Sweden)
		is gratefully acknowledged. D.L. would also like to thank
		IH\'ES, FIM ETH Zurich, and the Isaac Newton Institute 
		(EPSRC Grant EP/F005431/1)
		for support and hospitality via an EPDI fellowship,
		during which much of the present work was initiated.
	\end{ack}

\section{Identical particles in one and two dimensions} \label{sec:identical_particles}

	We refer to the brief introduction in \cite{Lundholm-Solovej:anyon},
	the original references 
	\cite{Streater-Wilde:70,Leinaas-Myrheim:77,Goldin-Menikoff-Sharp:81,Wilczek:82}, 
	and the reviews 
	\cite{Froehlich:90,Wilczek:90,Lerda:92,Myrheim:99,Polychronakos:99,Khare:05,Ouvry:07},
	for a complete introduction to the general concepts of identical particles
	and exchange statistics in one and two dimensions.
	Here we will jump directly to the consequences of the general theory 
	(for scalar particles) 
	outlined in the introduction,
	and make precise our mathematical assumptions and notation.

	As usual, we write $H^k(\Omega)$ for the Sobolev spaces of square-integrable
	functions on $\Omega \subseteq \R^n$
	with square-integrable weak derivatives to order $k$.
	For a function space $\mathcal{F}$
	we generally write $\mathcal{F}(X;Y)$ 
	to emphasize that the functions map $X \to Y$,
	suppressing the latter argument whenever $Y=\C$ 
	or if otherwise understood from the context.
	The space of smooth and compactly supported functions on $\Omega$
	is denoted by $C^\infty_c(\Omega)$ and its closure inside 
	$H^k(\Omega)$ by $H^k_0(\Omega)$.
	Given a space $\mathcal{F}$ of functions on the (traditional) 
	$N$-particle configuration space $X=(\R^d)^N$, 
	the subspace of functions which are completely symmetric 
	resp. antisymmetric w.r.t. particle permutations will be denoted
	$\mathcal{F}_{\textup{sym}}$ resp. $\mathcal{F}_{\textup{asym}}$.
	The domain of an operator $A$ is denoted by $\Dc(A)$.

\subsection{One dimension} \label{sec:identical_1D}

	Depending on which approach one takes to quantization 
	\cite{Leinaas-Myrheim:77,Leinaas-Myrheim:88,Polychronakos:89,
	Aneziris_et_al:91,Isakov:92,Leinaas-Myrheim:93,Isakov:94,Myrheim:99},
	identical particles in 1D can be modeled as bosons,
	i.e. wave functions symmetric under the flip 
	$r \mapsto -r$ of any two relative particle coordinates
	$r := x_j - x_k$,
	together with a local interaction potential, singular at $r=0$
	and either of the form $\delta_0(r)$ or $1/r^2$.
	We write 
	(cp. e.g. \cite{Myrheim:99})
	\begin{equation} \label{statistics_potentials}
		V_{\uS}(r) := 2\eta \delta_0(r),
		\qquad
		V_{\uH}(r) := \frac{\alpha(\alpha-1)}{r^2},
	\end{equation}
	with statistics parameters $\eta, \alpha \in \R$,
	for the corresponding cases of Sch\-r\"od\-ing\-er- 
	resp. Heisenberg-type quantization.
	These statistics potentials, 
	which coincide with the interaction potentials of the 
	Lieb-Liniger resp. Calogero-Sutherland models,
	should correspond to the choices of boundary conditions
	(in a sense to be made precise below)
	\begin{equation} \label{boundary_cond_S}
		\frac{\partial \psi}{\partial r} = \eta \psi, \quad \textrm{at $r=0^+$},
	\end{equation}
	resp.
	\begin{equation} \label{boundary_cond_H}
		\psi(r) \sim r^\alpha, \quad r \to 0^+,
	\end{equation}
	for the wave function $\psi$ at the boundary $r=0$ 
	of the configuration space.
	Note that in this sense $\eta=0$ resp. $\alpha=0$ represent bosons,
	while $\eta=+\infty$ resp. $\alpha=1$ represent fermions
	(in the bosonic representation, 
	i.e. after factoring out the sign of the permutation).
	Suggested by such pairwise boundary conditions, 
	one may \emph{define} (cp. e.g. \cite{Myrheim:99}) 
	the total kinetic energy for a normalized wave function 
	$\psi \in L^2_{\textup{sym}} := \bigotimes^N_{\textup{sym}} L^2(\R)$ 
	of $N$ identical particles on $\R$ to be
	$T_{\uS/\uH} := \int_{\R^N} T_{\uS/\uH}^N(\psi;\sx) \,d\sx$ where
	\begin{equation} \label{kinetic_energy_SH}
		T_{\uS/\uH}^N(\psi;\sx) 
		:= \frac{1}{2} \sum_{j=1}^N |\partial_j \psi|^2 
			+ \sum_{1\le j<k \le N} V_{\uS/\uH}(x_j - x_k) |\psi|^2,
	\end{equation}
	with a corresponding 
	kinetic energy operator
	\begin{equation} \label{kinetic_operator_SH}
		\hat{T}_{\uS/\uH} 
		= \hat{T}_0 + \hat{V}_{\uS/\uH}
		:= -\frac{1}{2} \sum_{j=1}^N \frac{\partial^2}{\partial x_j^2}
		+ \sum_{1\le j<k \le N} V_{\uS/\uH}(x_j - x_k).
	\end{equation}

	Formally, we need to specify domains 
	$\Dc_{\uS/\uH}$ resp. $\hat{\Dc}_{\uS/\uH}$
	for the quadratic forms 
	$\psi \mapsto T_{\uS/\uH}(\psi)$ 
	and operators $\hat{T}_{\uS/\uH}$
	so that they are closed, respectively self-adjoint.
	Let (for arbitrary dimension $d$)
	\begin{equation} \label{diagonals}
		\bDelta 
		:= \{ \sx \in (\R^d)^N : \text{$\exists\ j \neq k$ s.t. $\bx_j = \bx_k$} \}
	\end{equation}
	denote the diagonal set where any two particles meet,
	i.e. where the statistics potentials $V_{\uS/\uH}$ are singular.
	In the Lieb-Liniger case, 
	\begin{equation} \label{Schroedinger_form}
		T_{\uS}(\psi) = \frac{1}{2} \int_{\R^N} |\nabla \psi|^2 \,d\sx
		+ 2\eta \int_{\bDelta} |\psi|^2 \,d\Sigma_{\bDelta}
	\end{equation}
	(with $\Sigma_{\bDelta}$ the Euclidean measure on $\bDelta$)
	defines a closed and semibounded quadratic form on the domain
	$\Dc_{\uS} := H^1(\R^N) \cap L^2_{\textup{sym}}$,
	and hence also defines an associated self-adjoint operator 
	$\hat{T}_{\uS}$ on $L^2_{\textup{sym}}$.
	Its domain $\hat{\Dc}_{\uS}$ 
	is s.t. $\psi \in \hat{\Dc}_{\uS}$
	is twice (weakly) differentiable on $\R^N \setminus \bDelta$,
	continuous at the diagonals $\bDelta$,
	and satisfies (cp. also \cite{Lieb-Liniger:63})
	\begin{equation} \label{boundary_cond_S_ext}
		\left( \frac{\partial}{\partial x_j} - \frac{\partial}{\partial x_k} \right)\psi|_{x_j = x_k^+}
		- \left( \frac{\partial}{\partial x_j} - \frac{\partial}{\partial x_k} \right)\psi|_{x_j = x_k^-}
		= 4\eta\psi|_{x_j = x_k}.
	\end{equation}

	In the Calogero-Sutherland case, there is a complication in the choice of 
	domains associated with the symmetry $\alpha \mapsto 1-\alpha$.
	We therefore begin by considering the quadratic form 
	\begin{equation} \label{Heisenberg_form}
		T_{\uH}(\psi) = \frac{1}{2} \int_{\R^N} |\nabla \psi|^2 \,d\sx
		+ \alpha(\alpha-1) \int_{\R^N} \sum_{j<k} \frac{|\psi|^2}{|x_j-x_k|^2} \,d\sx,
	\end{equation}
	initially on the nice space of functions
	$\psi \in C_c^\infty(\R^N \setminus \bDelta) \cap L^2_{\textup{sym}}$.
	We can then use the substitution $\psi = f^\alpha \phi$,
	$f(\sx) := \prod_{j<k} |x_j-x_k|$,
	and a partial integration to write (see e.g. \cite{Lundholm})
	\begin{equation} \label{Hardy-GSR}
		\int_{\R^N} |\nabla \psi|^2 \,d\sx = \int_{\R^N} \left(
			\alpha(1-\alpha)\frac{|\nabla f|^2}{f^2} + \alpha\frac{-\Delta f}{f}
		\right) |\psi|^2 \,d\sx 
		+ \int_{\R^N} |\nabla \phi|^2 f^{2\alpha} \,d\sx,
	\end{equation}
	where $\Delta f=0$ on $\R^N \setminus \bDelta$
	and $|\nabla f|/f^2 = 2\sum_{j<k} |x_j-x_k|^{-2}$.
	Hence we have the representation
	$T_{\uH}(\psi) = \frac{1}{2} \|Q_\alpha \psi\|^2 \ge 0$ 
	for such $\psi$ and all $\alpha \in \R$,
	where $Q_\alpha$ denotes the vector-valued differential expression
	(cp. also \cite{Polychronakos:92})
	$$
		Q_\alpha := f^\alpha \nabla f^{-\alpha} = \nabla - \alpha W,
		\qquad W_j(\sx) := \sum_{k \neq j} (x_j - x_k)^{-1}.
	$$
	Now, consider 
	$Q_\alpha: L^2(\R^N) \to \mathcal{D}'(\R^N \setminus \bDelta; \C^N)$ 
	as a distribution-valued operator 
	(note that since $W \notin L^1_{\textup{loc}}(\R^N)$ 
	we need to be careful and remove the diagonals),
	and define $T_{\uH}$ with domain 
	$\Dc_{\uH} := \Dc(Q_\alpha^\textup{max})$ 
	to be the quadratic form 
	$T_{\uH}(\psi) := \frac{1}{2} \|Q_\alpha^\textup{max} \psi\|^2$
	associated to the \emph{maximal} extension of
	the operator $Q_\alpha$ on $L^2_{\textup{sym}}$, i.e.
	$$
		\Dc(Q_\alpha^\textup{max}) := \{ \psi \in L^2_{\textup{sym}} : Q_\alpha \psi \in L^2(\R^N; \C^N) \},
	$$
	with the identification 
	$L^2(\R^N; \C^N) = L^2(\R^N \setminus \bDelta; \C^N)$.
	Note that this definition of $T_{\uH}$ does not exclude
	the boundary behavior \eqref{boundary_cond_H} for $\alpha > -1/2$,
	taking e.g. $\psi(\sx) = f^\alpha(\sx) e^{-|\sx|^2} \in \Dc_{\uH}$,
	and that for $\alpha=0$, 
	$\Dc_{\uH} = H^1_{\textup{sym}}(\R^N \setminus \bDelta) = H^1_{\textup{sym}}(\R^N)$ 
	(note that the symmetry requirement is important here).
	The self-adjoint operator 
	$\hat{T}_{\uH} := \frac{1}{2} (Q_\alpha^{\textup{max}})^* Q_\alpha^{\textup{max}}$
	is then defined with domain 
	$$
		\hat{\Dc}_{\uH} := \{ \psi \in \Dc_{\uH} : Q_\alpha \psi \in \Dc((Q_\alpha^\textup{max})^*) \}.
	$$
	Another option would be to define $2\hat{T}_{\uH}$
	to be the Friedrichs extension, i.e. 
	$(Q_\alpha^{\textup{min}})^* Q_\alpha^{\textup{min}}$,
	associated to the minimally extended operator
	$Q_\alpha^{\textup{min}}$, its domain 
	$\Dc(Q_\alpha^\textup{min})$ being the closure
	of $C_c^\infty(\R^N \setminus \bDelta)$ inside $L^2_{\textup{sym}}$
	w.r.t. the form $\psi \mapsto \|Q_\alpha \psi\|^2$.
	In this case we can explicitly characterize its domain as
	$\Dc(Q_\alpha^\textup{min}) = H^1_{0,\textup{sym}}(\R^N \setminus \bDelta)$
	for $\alpha \neq 1/2$.
	This follows because for $\psi \in C_c^\infty(\R^N \setminus \bDelta)$ 
	the identity \eqref{Hardy-GSR} for $\alpha \neq 1/2$ and for $\alpha=1/2$
	(the Hardy inequality) implies
	$$
		c_1 \|\nabla \psi\| \le \|Q_\alpha \psi\| \le c_2 \|\nabla \psi\|
	$$
	for some constants $c_1,c_2>0$ (depending on $\alpha \neq 1/2$), 
	and hence taking the closure
	w.r.t. the $H^1$-form is the same as w.r.t. the $Q_\alpha$-form.

	On the other hand, the following shows that the two extensions 
	$Q_\alpha^{\textup{min}}$ and $Q_\alpha^{\textup{max}}$
	are the same for sufficiently large $\alpha$ 
	(while they are in general not for small $\alpha$,
	such as $\alpha=0$; 
	see also 
	\cite{Bruneau-Derezinski-Georgescu:11,Basu_et_al:03,Feher-Tsutsui-Fueloep:05}):

	\begin{thm} \label{thm:Heisenberg_domains}
		For $\alpha \ge 1$ we have 
		$Q_\alpha^{\textup{min}} = Q_\alpha^{\textup{max}}$,
		with domain $$\Dc_{\uH} = H^1_{0,\textup{sym}}(\R^N \setminus \bDelta),$$
		and hence the operator
		$\hat{T}_{\uH} = \frac{1}{2} (Q_\alpha^{\textup{max}})^* Q_\alpha^{\textup{max}}$
		is equal to the Friedrichs extension
		$\hat{T}_{\uH} = \frac{1}{2} (Q_\alpha^{\textup{min}})^* Q_\alpha^{\textup{min}}$.
		In the case $N=2$ we have 
		for all $\alpha > 1/2$ that
		$Q_\alpha^{\textup{min}} = Q_\alpha^{\textup{max}}$
		with $\Dc_{\uH} = H^1_{0,\textup{sym}}(\R^2 \setminus \bDelta)$.
	\end{thm}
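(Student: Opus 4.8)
The plan is to prove the two inclusions $\Dc(Q_\alpha^{\textup{max}})\subseteq\Dc(Q_\alpha^{\textup{min}})$ (the nontrivial one) and $\Dc(Q_\alpha^{\textup{min}})\subseteq\Dc(Q_\alpha^{\textup{max}})$ (automatic, since $Q_\alpha^{\textup{min}}$ is a restriction of $Q_\alpha^{\textup{max}}$), which together with the already-established characterization $\Dc(Q_\alpha^{\textup{min}}) = H^1_{0,\textup{sym}}(\R^N\setminus\bDelta)$ for $\alpha\neq1/2$ yields the statement and, trivially, the identification of $\hat{T}_{\uH}$ with the Friedrichs extension. The engine of the proof is the integrated identity behind \eqref{Hardy-GSR}, namely
$$\|Q_\alpha\psi\|^2 = \|\nabla\psi\|^2 + \alpha(\alpha-1)\|W\psi\|^2, \quad W_j = \sum_{k\neq j}(x_j-x_k)^{-1}, \quad \nabla\cdot W = -|W|^2,$$
whose crucial feature for $\alpha\ge1$ is that the coefficient $\alpha(\alpha-1)$ is nonnegative.

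First I would show that every $\psi\in\Dc(Q_\alpha^{\textup{max}})$ satisfies $\nabla\psi\in L^2$ and $W\psi\in L^2$. Off the diagonal $Q_\alpha$ is a smooth perturbation of $\nabla$, so elliptic regularity gives $\psi\in H^2_{\textup{loc}}(\R^N\setminus\bDelta)$ and the identity may be integrated by parts on the superlevel region $\Omega_\ep := \{f>\ep\}$, with $f=\prod_{j<k}|x_j-x_k|$ (whose level sets are smooth for a.e.\ $\ep$). Using $\nabla\cdot W = -|W|^2$ and the fact that the outward normal of $\Omega_\ep$ is $-W/|W|$, the boundary term on $\{f=\ep\}$ equals $\alpha\int_{\{f=\ep\}}|\psi|^2|W|\,dS\ge0$ and enters with a favorable sign, giving $\int_{\Omega_\ep}|\nabla\psi|^2 + \alpha(\alpha-1)\int_{\Omega_\ep}|W|^2|\psi|^2\le\|Q_\alpha\psi\|^2$ uniformly in $\ep$. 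For $\alpha\ge1$ both terms are nonnegative, so letting $\ep\to0$ (monotone convergence) yields $\nabla\psi\in L^2$ together with $\alpha(\alpha-1)\|W\psi\|^2\le\|Q_\alpha\psi\|^2$. This is precisely the step where $\alpha\ge1$ is used, and it is the main obstacle: one must justify the integration by parts up to the stratified singular set $\bDelta$ and verify the sign of the boundary contribution, the positivity $\alpha(\alpha-1)\ge0$ being what converts the raw identity into a two-sided bound.

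Next, with $W\psi\in L^2$ in hand, I would approximate $\psi$ by functions supported away from $\bDelta$ using the cutoff $\chi_\ep := g(\log(f/\ep))$, where $g:\R\to[0,1]$ is smooth with $g\equiv0$ on $(-\infty,0]$, $g\equiv1$ on $[1,\infty)$ and $|g'|\le2$. Since $\nabla\log f = W$, one has $|\nabla\chi_\ep| = |g'(\log(f/\ep))|\,|W|\le 2|W|\,\mathbf{1}_{\{\ep\le f\le e\ep\}}$, and because $Q_\alpha(\chi_\ep\psi) = \chi_\ep Q_\alpha\psi + (\nabla\chi_\ep)\psi$ the error is controlled by $\|(\nabla\chi_\ep)\psi\|^2\le4\int_{\{\ep\le f\le e\ep\}}|W|^2|\psi|^2$, which tends to $0$ by dominated convergence (dominated by $|W\psi|^2\in L^1$, the region collapsing onto $\bDelta$). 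Hence $\chi_\ep\psi\to\psi$ in the $Q_\alpha$-graph norm. Mollifying each $\chi_\ep\psi$, which is supported in the open set $\{f>\ep\}$, and truncating at infinity produces approximants in $C_c^\infty(\R^N\setminus\bDelta)\cap L^2_{\textup{sym}}$, so $\psi\in\Dc(Q_\alpha^{\textup{min}})$. This proves $Q_\alpha^{\textup{min}} = Q_\alpha^{\textup{max}}$, and since $\alpha\ge1$ excludes $\alpha=1/2$ the domain equals $H^1_{0,\textup{sym}}(\R^N\setminus\bDelta)$; the Friedrichs identification is then immediate.

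Finally, for $N=2$ the diagonal $\bDelta = \{x_1=x_2\}$ is a single hyperplane of codimension one with no higher-codimension strata, and in the transverse relative variable $Q_\alpha$ reduces to $\partial_r - \alpha/r$. Here the first step can be replaced by the one-dimensional weighted Hardy inequality $\int_0^\infty r^{2\alpha-2}|\phi|^2\,dr\le\frac{4}{(2\alpha-1)^2}\int_0^\infty r^{2\alpha}|\phi'|^2\,dr$ (valid, with vanishing boundary term, precisely for $\alpha>1/2$), which again delivers $W\psi\in L^2$; the same cutoff argument then gives $Q_\alpha^{\textup{min}} = Q_\alpha^{\textup{max}}$ for all $\alpha>1/2$. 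The reason this sharper threshold is not claimed for general $N$ is that at the higher strata of $\bDelta$ the clean one-dimensional Hardy estimate is no longer available, and one must fall back on the positivity argument above, which requires $\alpha(\alpha-1)\ge0$.
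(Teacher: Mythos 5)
Your proposal is correct in substance and follows the same skeleton as the paper's proof (integrate the pointwise identity $|Q_\alpha\psi|^2=|\nabla\psi|^2+\alpha(\alpha-1)W^2|\psi|^2-\alpha\nabla\cdot(W|\psi|^2)$ over regions exhausting $\R^N\setminus\bDelta$, use positivity of the boundary term and of $\alpha(\alpha-1)$ to get $\nabla\psi,W\psi\in L^2$, then cut off near $\bDelta$ with an error controlled by $\|W\psi\|$), but two of your choices genuinely diverge from the paper and are worth comparing. First, you exhaust by superlevel sets $\{f>\ep\}$ of $f=\prod_{j<k}|x_j-x_k|$, where the outward normal is exactly $-W/|W|$ and the boundary term $\alpha\int_{\{f=\ep\}}|W||\psi|^2\,dS\ge 0$ is manifestly of the right sign; the paper instead uses the shrunken Weyl chambers $\Gamma_\ep$ with flat faces and must verify $-\hat n\cdot W>0$ by an explicit telescoping estimate. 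Your version is cleaner, at the price of invoking Sard's theorem for regularity of the level sets, and like the paper you still owe a truncation at infinity (or the $W^{1,1}$ form of the divergence theorem) since $\{f>\ep\}$ is unbounded. Second, for $N=2$, $\alpha>1/2$ you extract $W\psi\in L^2$ from a weighted Hardy inequality and then reuse the same cutoff; the paper only establishes the pointwise transversal bound $|r|^{-1/2}|\psi|\lesssim\|(Q_{\alpha,2}-Q_{\alpha,1})\psi\|_r$ (from which $W\psi\in L^2$ does \emph{not} follow) and compensates with a logarithmic cutoff. Your route in fact proves more, and it is correct, but as stated it is glib: $\phi=r^{-\alpha}\psi$ tends to a nonzero constant at $r=0$ in general, so the inequality you quote does not apply verbatim; one should instead write $\psi(r)=c\,r^\alpha-r^\alpha\int_r^{r_0}s^{-\alpha}(Q_\alpha\psi)(s)\,ds$, note that $c\,r^{\alpha-1}\in L^2$ precisely for $\alpha>1/2$, and bound the integral term by the $L^2$-boundedness of the kernel $r^{\alpha-1}s^{-\alpha}\mathbf{1}_{r<s<r_0}$ (Schur test), again for $\alpha>1/2$.

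Two smaller points. Your appeal to ``elliptic regularity'' for $\psi\in H^2_{\textup{loc}}(\R^N\setminus\bDelta)$ is misplaced --- membership in the form domain only gives $\psi\in H^1_{\textup{loc}}(\R^N\setminus\bDelta)$ --- but that is all the identity and the divergence theorem for the $W^{1,1}_{\textup{loc}}$ field $W|\psi|^2$ require, so nothing is lost. Finally, you silently avoid the issue the paper spends a paragraph on, namely whether the distributional $\nabla\psi$ on all of $\R^N$ carries a singular part on $\bDelta$; since you work throughout with distributions on $\R^N\setminus\bDelta$ and target $H^1_{0,\textup{sym}}(\R^N\setminus\bDelta)$ rather than $H^1(\R^N)$, this omission is legitimate for your argument, though it means you do not recover the paper's side claim that $\Dc_{\uH}\subseteq H^1(\R^N)$.
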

	\begin{proof}
		We claim that, for $\alpha \ge 1$, 
		$\psi \in \Dc_{\uH}$ must satisfy $\psi \in H^1(\R^N)$ as well as
		\begin{equation} \label{W2-finite}
			\int_{\R^N} \frac{|\psi|^2}{(x_j-x_k)^2} \,d\sx < \infty \quad
			\text{for all $j\neq k$,}
		\end{equation}
		from which approximability in 
		$H^1_0(\R^N \setminus \bDelta) \cap L^2_{\textup{sym}}$ 
		follows by taking
		$\psi_\ep(\sx) := \prod_{j<k} \phi_\ep(x_j-x_k) \psi(\sx)$, where
		$\phi_\ep(x) := \phi(x/\ep)$ and $\phi \in C^\infty(\R;[0,1])$
		symmetric and identically zero for $|x| \le 1$ and one for $|x| \ge 2$.
		Then $\psi_\ep \in \Dc_{\uH}$ and
		$$
			Q_\alpha\psi_\ep - Q_\alpha\psi 
			= \left( {\textstyle\prod}\phi_\ep - 1 \right)Q_\alpha\psi 
			+ \left( \nabla {\textstyle\prod}\phi_\ep \right) \psi,
		$$
		where $\|(\prod\phi_\ep - 1)\psi\| \to 0$ and
		$\|(\prod\phi_\ep - 1)Q_\alpha\psi\| \to 0$ as $\ep \to 0$
		by dominated convergence, and for the second term we have by
		\eqref{W2-finite}
		\begin{multline*}
			\| \phi_\ep'(x_j-x_k) \psi \|^2 
			= \int_{\R^N} \frac{|\phi'((x_j-x_k)/\ep)|^2}{\ep^2} |\psi|^2 \,d\sx \\
			\le C \int_{\ep<|x_j-x_k|<2\ep} \frac{|\psi|^2}{(x_j-x_k)^2} \,d\sx \to 0.
		\end{multline*}
		Hence, $\|Q_\alpha (\psi_\ep - \psi)\| \to 0$ as $\ep \to 0$.
		
		To prove the above claim, consider the pointwise 
		a.e. in $\R^N \setminus \bDelta$ identity
		\begin{multline} \label{pointwise-Q_alpha}
			| Q_\alpha\psi |^2 
			= |\nabla\psi|^2 + \alpha^2 W \cdot W|\psi|^2 - \alpha W \cdot ((\nabla\bar{\psi}) \psi + \bar{\psi} \nabla\psi) \\
			= |\nabla\psi|^2 + \alpha^2 W^2 |\psi|^2 - \alpha W \cdot \nabla|\psi|^2  \\
			= |\nabla\psi|^2 + \alpha(\alpha-1) W^2 |\psi|^2 - \alpha \nabla \cdot (W|\psi|^2),
		\end{multline}
		where in the second step we used that
		$\psi \in H^1_{\textup{loc}}(\R^N \setminus \bDelta)$ since
		$W\psi \in L^2_{\textup{loc}}(\R^N \setminus \bDelta)$ and
		$Q_\alpha \psi = \nabla\psi - \alpha W\psi \in L^2(\R^N)$ 
		(note that then 
		$\partial_j|\psi|^2 - (\partial_j \bar\psi)\psi - \bar\psi(\partial_j\psi) = 0$ 
		in $\mathcal{D}'(\R^N \setminus \bDelta)$
		and $\nabla|\psi|^2 \in L^1_{\textup{loc}}(\R^N \setminus \bDelta)$),
		and in the third step we have
		$-\nabla \cdot W = W^2 = |\nabla f|/f^2 = 2\sum_{j<k}|x_j-x_k|^{-2}$.
		Now, for any $\ep \ge 0$, let us define the wedge-shaped set
		\begin{multline*}
			\Gamma_\ep := \{ \sx \in \R^N : x_1 + (N-1)\ep < x_2 + (N-2)\ep < \ldots < x_{N-1} + \ep < x_N \} \\
			= \bigcap_{j=1}^{N-1} \{ \sx \in \R^N : x_j + \ep < x_{j+1} \},
		\end{multline*}
		with piecewise flat boundary 
		$\partial \Gamma_\ep = \bigcup_{j=1}^{N-1} (\partial\Gamma_\ep)_{j,j+1}$,
		$$
			(\partial\Gamma_\ep)_{j,j+1} := \{ \sx \in \R^N : x_j + \ep = x_{j+1} \ \text{and}\ x_k + \ep \le x_{k+1} \ \text{for}\ k \neq j \}.
		$$
		By \eqref{pointwise-Q_alpha} 
		and the divergence theorem (see e.g. Theorem 6.9 in \cite{Lieb-Loss:01})
		we have for $\ep>0$
		\begin{equation} \label{Q_alpha_divergence}
			\int_{\Gamma_\ep} |Q_\alpha\psi|^2 \,d\sx
			= \int_{\Gamma_\ep} \left( 
				|\nabla\psi|^2 + \alpha(\alpha-1)W^2|\psi|^2 \right) d\sx
			- \alpha \int_{\partial \Gamma_\ep} \hat{n} \cdot W |\psi|^2 \,d\sx.
		\end{equation}
		On the boundary component $(\partial\Gamma_\ep)_{j,j+1}$,
		the unit outward normal is $\hat{n} = (e_j - e_{j+1})/\sqrt{2}$, 
		and we obtain pointwise
		\begin{multline*}
			\sqrt{2}\,\hat{n} \cdot W = W_j - W_{j+1} \\
			= 2(x_j - x_{j+1})^{-1} + \sum_{\substack{k=1 \\ k \neq j,j+1}}^N \left(
				(x_j - x_k)^{-1} - (x_{j+1} - x_k)^{-1}
			\right) \\
			= -2\ep^{-1} + \sum_{k \neq j,j+1} \ep(x_j-x_k)^{-1}(x_j-x_k+\ep)^{-1} \\
			= -2\ep^{-1} + \sum_{k < j} \ep(x_j-x_k)^{-1}(x_j-x_k+\ep)^{-1}
			 + \sum_{k > j+1} \ep(x_k-x_j)^{-1}(x_k-x_j-\ep)^{-1}.
		\end{multline*}
		For $N=2$ we then have 
		$-\alpha\,\hat{n} \cdot W = \frac{\alpha \ep^{-1}}{\sqrt{2}} 2$,
		while for $N>2$ the two additional positive sums above are bounded by
		$$
			\sum_{k=1}^{j-1} \ep ((j-k)\ep)^{-1} ((j-k)\ep + \ep)^{-1} 
			= \ep^{-1} \sum_{k=1}^{j-1} \frac{1}{k(k+1)} = \ep^{-1} \frac{j-1}{j},
		$$
		respectively
		$$
			\sum_{k=j+2}^{N} \ep ((k-j)\ep)^{-1} ((k-j)\ep - \ep)^{-1} 
			= \ep^{-1} \sum_{k=1}^{N-j-1} \frac{1}{k(k+1)} = \ep^{-1} \frac{N-j-1}{N-j},
		$$
		and hence
		$$
			-\alpha\,\hat{n} \cdot W 
			\ge \frac{\alpha \ep^{-1}}{\sqrt{2}} \left(
				2 - \frac{j-1}{j} - \frac{N-j-1}{N-j}
			\right)
			= \frac{\alpha \ep^{-1}}{\sqrt{2}} \frac{N}{j(N-j)}
			\ge \frac{\alpha \ep^{-1}}{\sqrt{2}} \frac{4}{N},
		$$
		which is strictly positive for all $N>2$.
		Hence, since all terms in the r.h.s. of \eqref{Q_alpha_divergence}
		are nonnegative for $\alpha \ge 1$, 
		and the l.h.s. remains finite as $\ep \to 0$,
		we must have $\nabla\psi, W\psi \in L^2(\Gamma_0)$
		(also in the case $\alpha=1$ we find $\nabla\psi \in L^2(\Gamma_0)$ 
		and therefore $W\psi \in L^2(\Gamma_0)$ since $Q_\alpha\psi \in L^2(\Gamma_0)$).
		By symmetry of $\psi$, this also holds on the remaining parts of 
		$\R^N \setminus \bDelta$ obtained by permutation of the coordinates,
		and this would then prove the claim if we only knew that 
		$\nabla\psi \in \mathcal{D}'(\R^N)$ 
		is really a function on all of $\R^N$ and not also having 
		some singular component supported on $\bDelta$.
		The support must furthermore be of codimension one 
		(cp. Lemma \ref{lem:folklore} below).
		But this would require a codimension one 
		discontinuity of $\psi$ at one of the boundary
		components $(\partial\Gamma_0)_{j,j+1}$, 
		which is in contradiction with both $W\psi \in L^2(\R^N)$ and
		the symmetry of $\psi$.
		This proves the first part of the theorem.

		It remains to prove 
		the stronger statement of the theorem for the 
		special case $N=2$ and $\alpha>1/2$ (this was also proved differently
		in \cite{Bruneau-Derezinski-Georgescu:11}, in a one-particle form).
		We start by showing that $|x_1-x_2|^{-\frac{1}{2}}\psi$ 
		is (transversally) uniformly bounded if $\alpha > 1/2$.
		Note that for $x_1<x_2$, 
		\begin{multline*}
			(x_2-x_1)^{-\alpha} \psi(\sx) 
			= -\int_0^\infty \frac{d}{dt}\left( (x_2-x_1 + 2t)^{-\alpha} \psi(x_1-t,x_2+t) \right) dt \\
			= -\int_0^\infty (x_2-x_1 + 2t)^{-\alpha} \Big( (\partial_2 - \partial_1)\psi (x_1-t,x_2+t) \\
				- 2\alpha (x_2-x_1+2t)^{-1}\psi(x_1-t,x_2+t) \Big) dt \\
			= -\int_0^\infty (x_2-x_1 + 2t)^{-\alpha} \left( (Q_{\alpha,2} - Q_{\alpha,1})\psi \right) (x_1-t,x_2+t) \,dt \\
			\le \left| \int_0^\infty (x_2-x_1 + 2t)^{-2\alpha} \,dt \right|^{\frac{1}{2}}
				\left| \int_0^\infty \left| (Q_{\alpha,2} - Q_{\alpha,1})\psi \right|^2 (x_1-t,x_2+t) \,dt \right|^{\frac{1}{2}} \\
			\le \left( \frac{(x_2-x_1)^{1-2\alpha}}{4\alpha-2} \right)^{\frac{1}{2}} 
				\|(Q_{\alpha,2} - Q_{\alpha,1})\psi\|_r,
		\end{multline*}
		where we used Cauchy-Schwarz, and $\|\cdot\|_r$ denotes the $L^2$-norm 
		w.r.t. the transversal variable $r=x_2-x_1$ 
		(but still depending on the longitudinal center-of-mass 
		$R=(x_1+x_2)/2$).
		Similarly for $x_1>x_2$, we have 
		$|r|^{-\frac{1}{2}}\psi(\sx) \le \|(Q_{\alpha,2} - Q_{\alpha,1})\psi\|_r/\sqrt{4\alpha-2}$.
		Taking $\psi_\ep(\sx) := \varphi_\ep(x_1-x_2)\psi(\sx)$,
		with $\varphi_\ep$ a logarithmic cut-off function defined 
		as in the proof of Lemma \ref{lem:folklore} below,
		we have
		$$
			Q_\alpha\psi_\ep - Q_\alpha\psi 
			= (\varphi_\ep - 1)Q_\alpha\psi \pm \varphi_\ep' \psi.
		$$
		The norm-squared of the last term is
		\begin{multline*}
			\ep^2 \int_R \int_r \frac{|\varphi'(\ep \ln |r|)|^2}{r^2} |\psi|^2 \,dr\,dR
			\le C \ep \int_R \|(Q_{\alpha,2} - Q_{\alpha,1})\psi\|_r^2 \,dR \\
			= C \ep \|(Q_{\alpha,2} - Q_{\alpha,1})\psi\|^2,
		\end{multline*}
		which shows that $\|Q_\alpha(\psi_\ep - \psi)\| \to 0$
		as $\ep \to 0$,
		with $\psi_\ep \in H^1_{0,\textup{sym}}(\R^2 \setminus \bDelta)$.
	\end{proof}

	We will in the following always assume 
	$\eta \ge 0$ resp. $\alpha \ge 1$
	in order for the statistics potentials 
	$V_{\uS}$ resp. $V_{\uH}$ to be nonnegative,
	and can hence work with the Friedrichs extension for $\hat{T}_{\uH}$.
	To emphasize the dependence on the statistics parameters,
	the corresponding domains will be denoted 
	$\hat{\Dc}_{\uS}^\eta \subseteq \Dc_{\uS}^\eta$
	resp.
	$\hat{\Dc}_{\uH}^\alpha \subseteq \Dc_{\uH}^\alpha$.
	It is well-known that in the exactly solvable Calogero-Sutherland
	models, involving the operator $\hat{T}_{\uH}$, the eigenfunctions
	satisfy the boundary condition \eqref{boundary_cond_H}
	(see \cite{Calogero:69,Sutherland:71}).
	However, for large $\alpha$ this requirement is too restrictive,
	since e.g. $\psi(\sx) = f^2(\sx)e^{-|\sx|^2}$ is both in 
	$\Dc_{\uH}^\alpha$ and $\hat{\Dc}_{\uH}^\alpha$,
	even for $\alpha > 2$.
	The explicit dependence on $\alpha$ of the domains 
	$\Dc_{\uH}^\alpha$ and $\hat{\Dc}_{\uH}^\alpha$
	for the operators appearing in the two-particle case $N=2$ has been
	thoroughly investigated in \cite{Bruneau-Derezinski-Georgescu:11}.
	Also the extension theory for the $N=3$ case has been considered in 
	detail in \cite{Feher-Tsutsui-Fueloep:05}.
	The complete behavior in the many-particle case for $\alpha \in (-1/2,1)$ 
	remains, to the best of our knowledge, an interesting open problem.

\subsection{Two dimensions} \label{sec:identical_2D}

	(Abelian) anyons in $\R^2$ comprise 
	a continuous family of identical particles, characterized by
	a one-dimensional unitary representation of the braid group $B_N$,
	i.e. a complex phase $e^{i\alpha\pi} \in U(1)$,
	or a real statistics parameter 
	$\alpha \in \R$ (modulo 2). 
	For $\alpha=0$ such particles are bosons while for $\alpha=1$ they are fermions.
	For general $\alpha$,
	one can model quantum mechanical wave functions $\psi$
	of $N$ such anyons by means of bosonic wave functions on $\R^2$, 
	i.e. completely symmetric functions in $L^2((\R^2)^N)$,
	together with magnetic interaction potentials
	$\bA_j := \alpha\sum_{k \neq j} (\bx_j - \bx_k)^{-1}I$
	of topological type between all particles.
	Here $\bx I$ denotes the $90^\circ$ counter-clockwise rotation of the vector 
	$\bx \in \R^2$, and $\bx^{-1} := \bx/|\bx|^2$.
	Hence, the total kinetic energy for $\psi$ is given by
	$T_{\uA} := \int_{\R^{2N}} T_{\uA}^N(\psi;\sx) \,d\sx$, 
	where
	\begin{equation} \label{kinetic_energy_A}
		T_{\uA}^N(\psi;\sx) := \frac{1}{2} \sum_{j=1}^N |D_j \psi|^2,
		\quad D_j := -i\nabla_j + \bA_j
			= -i\nabla_{\bx_j} + \alpha\sum_{\substack{k=1 \\ k \neq j}}^N (\bx_j - \bx_k)^{-1}I,
	\end{equation}
	corresponding to a kinetic energy operator
	\begin{equation} \label{kinetic_operator_A}
		\hat{T}_{\uA} 
		:= \frac{1}{2} \sum_{j=1}^N D_j \cdot D_j.
	\end{equation}
	There is also here a question on how this operator and 
	quadratic form should be formally defined 
	since $\bA_j(\sx)$ diverges badly as $\bx_k \to \bx_j$.
	In \cite{Lundholm-Solovej:anyon} we defined 
	the operator $\hat{T}_{\uA}$ in \eqref{kinetic_operator_A} 
	with a closed quadratic form $T_{\uA}$
	as the Friedrichs extension initially defined on 
	$C_c^\infty(\R^{2N} \setminus \bDelta) \cap L^2_{\textup{sym}}$.
	This might seem to imply a mild hard-core requirement for 
	the wave function for $\alpha \neq 0$ 
	(cp. e.g. \cite{Loss-Fu:91,Baker_et_al:93} where a stronger
	form of hard-core requirement for anyons was imposed).
	Let us first observe explicitly that this is not the case whenever
	$\alpha \in 2\Z$ (respecting the periodicity of the statistics parameter),
	and then show for general $\alpha \in \R$
	that this definition is truly non-restrictive 
	in the sense that it coincides with that of the \emph{maximal} 
	extension for the quadratic form $T_{\uA}$.
	In other words, if we insist on the kinetic energy form 
	\eqref{kinetic_energy_A} being finite, then we necessarily
	have this self-adjoint extension.
	
	We begin by defining the magnetic derivative as 
	the distribution-valued operator
	$$
		D: L^2(\R^{2N}) \to \mathcal{D}'(\R^{2N} \setminus \bDelta; \C^{2N}),
		\quad
		\psi \mapsto (D_j \psi)_{j=1}^N,
	$$
	where $D_j$ are the $\C^2$-vector-valued differential expressions in 
	\eqref{kinetic_energy_A}.
	Note that we again need to consider this as a 
	distribution strictly outside the diagonals since 
	$\bA_j \notin L^2_\textup{loc}(\R^{2N})$ 
	and hence $\bA_j\psi \notin L^1_\textup{loc}(\R^{2N})$,
	although $\bA_j \in C^\infty(\R^{2N} \setminus \bDelta)$ 
	and $\nabla \psi \in \mathcal{D}'(\R^{2N})$.
	We have a corresponding maximal domain for the operator 
	$D$ acting on the Hilbert space $L^2_{\textup{sym}}(\R^{2N})$,
	which defines the form $T_{\uA}$:
	$$
		\Dc_{\uA} := \Dc(D^\textup{max}) := \{ \psi \in L^2_{\textup{sym}} : D \psi \in L^2(\R^{2N}; \C^{2N}) \},
		\quad T_{\uA}(\psi) = \frac{1}{2} \|D\psi\|^2.
	$$
	This also defines an associated self-adjoint operator
	$\hat{T}_{\uA} := \frac{1}{2} (D^{\textup{max}})^* D^{\textup{max}}$ 
	with domain
	$$
		\hat{\Dc}_{\uA} := \{ \psi \in \Dc_{\uA} : D\psi \in \Dc((D^\textup{max})^*) \}.
	$$
	To emphasize the dependence on $\alpha$, 
	the corresponding domains will be denoted 
	$\hat{\Dc}_{\uA}^\alpha \subseteq \Dc_{\uA}^\alpha$.

	Again, another option is to consider the Friedrichs extension,
	i.e. define the minimal extension
	$D^{\textup{min}}$ as the closure of $D$ acting on 
	$C_c^\infty(\R^{2N} \setminus \bDelta)$,
	$$
		\Dc(D^\textup{min}) 
		:= \overline{C_c^\infty(\R^{2N} \setminus \bDelta) \cap L^2_{\textup{sym}}}
		\ \subseteq \Dc_{\uA},
	$$
	where the closure is taken w.r.t. the graph norm in 
	$L^2_{\textup{sym}} \times L^2$ 
	and the form $\psi \mapsto \|D\psi\|^2$.
	The associated self-adjoint operator is
	$\hat{T}_{\uA}^\textup{min} := \frac{1}{2} (D^{\textup{min}})^* D^{\textup{min}}$.
	Note first that for $\alpha=0$, 
	i.e. for the free kinetic energy for bosons, 
	these two definitions coincide, 
	$\hat{T}_{\uA} = \hat{T}_{\uA}^\textup{min} = \hat{T}_0$, 
	as a consequence of the following lemma. 

	\begin{lem} \label{lem:folklore}
		We have equality for the Sobolev spaces 
		$H^1(\R^{2N}) = H^1_0(\R^{2N} \setminus \bDelta)$,
		$H^1_{\textup{sym}}(\R^{2N}) = H^1_{0,\textup{sym}}(\R^{2N} \setminus \bDelta)$,
		$H^1_{\textup{asym}}(\R^{2N}) = H^1_{0,\textup{asym}}(\R^{2N} \setminus \bDelta)$,
		as well as $H^1(\R^{2N}) = H^1(\R^{2N} \setminus \bDelta)$.
	\end{lem}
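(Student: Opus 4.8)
The plan is to reduce all four identities to the single geometric fact that the diagonal $\bDelta \subset \R^{2N}$ has codimension two, hence vanishing $H^1$-capacity. The engine is a family of \emph{logarithmic cut-off functions}. For a pair $j<k$ write $\brr := \bx_j - \bx_k \in \R^2$ and set
$$
  \varphi_\ep^{(j,k)}(\sx) := \chi\!\left( \frac{\ln(|\brr|/\ep^2)}{\ln(1/\ep)} \right),
$$
where $\chi \in C^\infty(\R;[0,1])$ with $\chi \equiv 0$ on $(-\infty,0]$ and $\chi \equiv 1$ on $[1,\infty)$; then take the permutation-invariant product $\varphi_\ep := \prod_{j<k} \varphi_\ep^{(j,k)}$. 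By construction each factor vanishes for $|\brr| \le \ep^2$, equals $1$ for $|\brr| \ge \ep$, and tends to $1$ pointwise on $\R^{2N}\setminus\bDelta$. On the transversal annulus $\ep^2 \le |\brr| \le \ep$ one has $|\nabla \varphi_\ep^{(j,k)}| \le \|\chi'\|_\infty / (|\brr|\ln(1/\ep))$, so integrating over the $\R^2$-disc gives the capacity factor $\int_{\ep^2}^{\ep} r^{-2}(\ln(1/\ep))^{-2}\, 2\pi r\,dr = 2\pi/\ln(1/\ep) \to 0$. Integrating the remaining $2N-2$ coordinates over the bounded projection of any fixed compact $K$, and summing the product rule $\nabla\varphi_\ep = \sum_{j<k} \big(\prod_{(l,m)\neq(j,k)} \varphi_\ep^{(l,m)}\big) \nabla\varphi_\ep^{(j,k)}$ over the $\binom{N}{2}$ pairs, then yields $\|\nabla\varphi_\ep\|_{L^2(K)} \to 0$. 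This codimension-two capacity estimate is the whole point.

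Granting the cut-offs, the equalities follow routinely. I would treat $H^1(\R^{2N}) = H^1(\R^{2N}\setminus\bDelta)$ first, since it lets one identify the gradients: the inclusion $\subseteq$ is trivial, and for $\supseteq$, given $u \in H^1(\R^{2N}\setminus\bDelta)$, set $g := \nabla u$; since $|\bDelta|=0$ both $u$ and $g$ lie in $L^2(\R^{2N})$, and I verify that $g$ is the distributional gradient of $u$ on all of $\R^{2N}$ by testing the known identity on $\R^{2N}\setminus\bDelta$ against $\phi\varphi_\ep$ for $\phi \in C_c^\infty(\R^{2N})$ and letting $\ep \to 0$; the only nontrivial term $\int u\,\phi\,\partial_i \varphi_\ep$ is controlled by $\|u\phi\|_{L^2}\,\|\nabla\varphi_\ep\|_{L^2(\supp \phi)} \to 0$, the rest converging by dominated convergence. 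For $H^1(\R^{2N}) = H^1_0(\R^{2N}\setminus\bDelta)$ I would combine this with the density of $C_c^\infty(\R^{2N})$ in $H^1(\R^{2N})$ and approximate a given $u \in C_c^\infty(\R^{2N})$ by $u\varphi_\ep \in H^1_0(\R^{2N}\setminus\bDelta)$: here $\|u(\varphi_\ep - 1)\|_{H^1}$ splits into pieces handled by dominated convergence plus the term $\|u\,\nabla\varphi_\ep\|_{L^2} \le \|u\|_\infty\, \|\nabla\varphi_\ep\|_{L^2(\supp u)} \to 0$. Since $\varphi_\ep$ is permutation-invariant, multiplication by it preserves symmetry and antisymmetry and the standard mollification can be (anti)symmetrized, so the symmetric and antisymmetric statements follow verbatim.

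The one genuinely delicate point is the capacity estimate $\|\nabla\varphi_\ep\|_{L^2} \to 0$: it is precisely here that codimension two is indispensable, since in codimension one no such cut-off exists (removing a hyperplane does change $H^1_0$), and the logarithmic profile is exactly calibrated so that the transversal gradient energy is $O(1/\ln(1/\ep))$. The only bookkeeping beyond this is checking that the product cut-off over all $\binom{N}{2}$ diagonal components still has vanishing gradient energy on compacta, which the product rule and the sum over pairs handle immediately.
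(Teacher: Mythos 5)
Your proposal is correct and follows essentially the same route as the paper: the key ingredient in both is a logarithmic cut-off vanishing near the codimension-two diagonals whose gradient has vanishing $L^2$-norm (the paper uses $\varphi(\ep\ln|\bx_j-\bx_k|)$ with energy $O(\ep)$, you use a profile supported on $\ep^2\le|\brr|\le\ep$ with energy $O(1/\ln(1/\ep))$ --- the same capacity estimate), followed by the standard approximation and testing arguments. The only cosmetic difference is the order: you establish $H^1(\R^{2N})=H^1(\R^{2N}\setminus\bDelta)$ first by testing directly against $\phi\varphi_\ep$, whereas the paper first proves the $H^1_0$ density statements and then deduces the gradient identification by observing that the difference distribution is supported on $\bDelta$ and annihilates $H^1_0(\R^{2N}\setminus\bDelta)$; both are valid.
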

	\begin{proof}
		We first show that 
		$H^1_{(\textup{sym/asym})}(\R^{2N}) = H^1_{0,(\textup{sym/asym})}(\R^{2N} \setminus \bDelta)$.
		Since $C_c^\infty(\R^{2N}) \cap L^2_{(\textup{sym/asym})}$ 
		is dense in $H^1(\R^{2N}) \cap L^2_{(\textup{sym/asym})}$ 
		we can assume that $\psi \in C_c^\infty(\R^{2N})$.
		Define $\psi_\ep(\sx) := \prod_{j<k} \varphi_\ep(\bx_j-\bx_k) \psi(\sx)$
		for $\ep>0$, where $\varphi_\ep(\bx) := \varphi(\ep \ln|\bx|)$,
		and $\varphi \in C^\infty(\R;[0,1])$ 
		is taken to be identically zero on $(-\infty,-2)$
		and one on $(-1,\infty)$.
		Then $\psi_\ep \in C_c^\infty(\R^{2N} \setminus \bDelta) \cap L^2_{(\textup{sym/asym})}$,
		$$
			\|\psi_\ep - \psi\|^2 
			= \int_{\R^{2N}} \left| {\textstyle\prod} \varphi_\ep - 1 \right|^2 |\psi|^2 \,d\sx
			\to 0, \ \ep \to 0,
		$$
		since the measure of $\supp (\prod\varphi_\ep-1) \cap \supp \psi$ 
		tends to zero, and
		\begin{multline*}
			\|\nabla(\psi_\ep - \psi)\| 
			= \| ({\scriptstyle\prod} \varphi_\ep - 1)\nabla\psi 
				+ (\nabla {\scriptstyle\prod} \varphi_\ep)\psi \| \\
			\le \| ({\scriptstyle\prod} \varphi_\ep - 1)\nabla\psi \|_{L^2(\supp\psi)}
			+ \| \nabla {\scriptstyle\prod} \varphi_\ep \|_{L^2(\supp\psi)} \|\psi\|_{L^\infty}.
		\end{multline*}
		The first term on the r.h.s. tends to zero as above, 
		while for the second we have for each $j \in \{1,\ldots,N\}$
		$$
			\left\| \nabla_j {\textstyle\prod_{k<l}} \varphi_\ep(\bx_k - \bx_l) \right\|_{L^2(\supp\psi)}
			\ \le \ \sum_{k\neq j} \| \nabla_j \varphi_\ep(\bx_j - \bx_k) \|_{L^2(\supp\psi)}
		$$
		and
		$$
			\| \nabla_j \varphi_\ep(\bx_j - \bx_k) \|_{L^2(\supp\psi)}^2
			\ \le \ \int_{\supp\psi} \int_{\R^2} |\nabla \varphi_\ep(\bx) |^2 \,d\bx \prod_{l\neq j} d\bx_l.
		$$
		Finally note that, by the substitution $r = e^s$,
		$$
			\frac{1}{2\pi} \int_{\R^2} |\nabla \varphi_\ep|^2 \,d\bx 
			= \ep^2 \int_0^\infty |\varphi'(\ep\ln r)|^2 \,\frac{dr}{r}
			= \ep^2 \int_{-2/\ep}^{-1/\ep} |\varphi'(\ep s)|^2 \,ds \le C \ep,
		$$
		which proves the first part of the lemma.
		
		It remains to show that we can identify
		$H^1(\R^{2N}) = H^1(\R^{2N} \setminus \bDelta)$.
		Any element of the l.h.s. can obviously also be interpreted uniquely
		as an element of the r.h.s.
		Conversely, 
		assume that $\psi \in L^2(\R^{2N})$ is s.t.
		the locally defined distribution 
		$\nabla \psi \in \mathcal{D}'(\R^{2N} \setminus \bDelta)$
		is actually a 
		function on $\R^{2N} \setminus \bDelta$ 
		which extends to be square-integrable on $\R^{2N}$.
		Denote this function by $\mu \in L^2(\R^{2N})$.
		To see that this function is identical to the 
		canonically defined distribution
		$\nabla\psi \in H^{-1}(\R^{2N})$ 
		when both are considered as distributions on the full space $\R^{2N}$,
		let $f := \nabla\psi - \mu \in H^{-1}(\R^{2N})$ and observe 
		using the first part of the lemma
		that for any $u \in H^1(\R^{2N}) = H^1_0(\R^{2N} \setminus \bDelta)$,
		the dual space of $H^{-1}(\R^{2N})$, 
		with $u \leftarrow u_n \in C_c^\infty(\R^{2N} \setminus \bDelta)$,
		$$
			\langle f, u \rangle = \lim_{n \to \infty} \langle f, u_n \rangle = 0,
		$$
		since the distribution $f$ is supported on $\bDelta$.
		Hence $f=0$ in $\mathcal{D}'(\R^{2N})$.
	\end{proof}
	
	Since for $\alpha=0$, by definition
	$\Dc(D^\textup{max}) = H^1_{\textup{sym}}(\R^{2N} \setminus \bDelta)$ 
	and $\Dc(D^\textup{min}) \\= H^1_{0,\textup{sym}}(\R^{2N} \setminus \bDelta)$,
	we have by the above lemma that $D^\textup{min} = D^\textup{max}$ 
	and hence obtain the natural kinetic energy operator 
	$\hat{T}_0 = -\frac{1}{2}\Delta$ for free bosons
	with domain $\hat{\Dc}_{\uA}^{\alpha=0} = H^2_{\textup{sym}}(\R^{2N})$.
	Now, define the unitary multiplication operator
	$U: L^2(\R^{2N}) \to L^2(\R^{2N})$,
	$$
		(U\psi)(\sx) := \prod_{j<k} \frac{z_j-z_k}{|z_j-z_k|} \psi(\sx) 
		= \prod_{j<k} e^{i\phi_{jk}} \psi(\sx),
		\qquad \sx \in \R^{2N} \setminus \bDelta,
	$$
	where $\phi_{jk} := \arg(z_j - z_k)$, $z_j := x_{j,1} + ix_{j,2}$.
	We then find that 
	$U: L^2_\textup{sym} \to L^2_\textup{asym}$,
	$L^2_\textup{asym} \to L^2_\textup{sym}$ 
	is an isomorphism and furthermore, 
	noting that $U$ is smooth on $\R^{2N} \setminus \bDelta$,
	we have
	$D^{(\alpha=n+\beta)} = U^{-n} D^{(\alpha=\beta)} U^{n}$
	as distribution-valued operators 
	for $n \in \Z$ and $\beta \in \R$.
	In particular,
	$$
		D^{(\alpha=2n)} = U^{-2n} (-i\nabla) U^{2n}
		\quad \ \text{and} \quad 
		\hat{T}_{\uA}^{(\alpha=2n)} = U^{-2n} \hat{T}_0 U^{2n},
	$$
	with $\|D^{(\alpha=2n)}\psi\| = \|\nabla(U^{2n}\psi)\|$,
	and therefore
	$$
		\Dc_{\uA}^{\alpha=2n} = U^{-2n} H^1_{\textup{sym}}(\R^{2N})
		\quad \text{and} \quad 
		\hat{\Dc}_{\uA}^{\alpha=2n} = U^{-2n} H^2_{\textup{sym}}(\R^{2N}).
	$$
	Similarly,
	$$
		\Dc_{\uA}^{\alpha=2n+1} = U^{-(2n+1)} H^1_{\textup{asym}}(\R^{2N})
		\quad \text{and} \quad 
		\hat{\Dc}_{\uA}^{\alpha=2n+1} = U^{-(2n+1)} H^2_{\textup{asym}}(\R^{2N}).
	$$
	This also shows that we might have $\nabla\psi \notin L^2(\R^{2N})$
	although $D\psi \in L^2(\R^{2N})$.

	The case of general $\alpha \in \R$ is a little trickier.
	Since $\bA_j \notin L^2_\textup{loc}(\R^{2N})$ we cannot directly
	apply the standard theorems for magnetic forms
	(see e.g. Theorems 7.21 and 7.22 in \cite{Lieb-Loss:01},
	\cite{Erdoes-Vougalter:02} for a related discussion,
	or \cite{DFT:97} for a non-magnetic approach), 
	but they can nevertheless be extended to
	the anyonic case as follows.
	
	\begin{lem}[Diamagnetic inequality] \label{lem:diamagnetic_inequality}
		Assume $\psi \in L^2(\R^{2N})$ and \\
		$D\psi \in L^2(\R^{2N}; \C^{2N})$
		for arbitrary fixed $\alpha \in \R$.
		Then $|\psi| \in H^1(\R^{2N})$, 
		\begin{equation} \label{diamag_pointwise}
			\big|\nabla_j |\psi|(\sx)\big| \le 
			|(D_j \psi)(\sx)|
		\end{equation} 
		for all $j \in \{1,\ldots,N\}$ and a.e. $\sx \in \R^{2N}$,
		and 
		\begin{equation} \label{diamag_norms}
			\| \nabla|\psi| \|_{L^2(\R^{2N}; \C^{2N})} \le \| D\psi \|_{L^2(\R^{2N}; \C^{2N})}.
		\end{equation} 
	\end{lem}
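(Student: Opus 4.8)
The plan is to reduce everything to the open set $\Omega := \R^{2N} \setminus \bDelta$, where the potentials $\bA_j$ are smooth, and then to invoke Lemma \ref{lem:folklore} to carry the conclusion across the diagonal to the full space. First I would check that $\psi \in H^1_{\loc}(\Omega)$: on any compact subset of $\Omega$ the potentials $\bA_j$ are bounded, so $\bA_j\psi \in L^2_{\loc}(\Omega)$, and since $D_j\psi = -i\nabla_j\psi + \bA_j\psi$ lies in $L^2$ by hypothesis, we obtain $\nabla_j\psi = i(D_j\psi - \bA_j\psi) \in L^2_{\loc}(\Omega)$. This makes $\nabla_j\psi$ a genuine local function on $\Omega$ and legitimizes all the pointwise manipulations below.

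Next I would establish the pointwise bound \eqref{diamag_pointwise} on $\Omega$. Since $\bA_j \in L^2_{\loc}(\Omega)$ is real-valued, this is the classical diamagnetic inequality applied locally on $\Omega$ (Theorem 7.21 in \cite{Lieb-Loss:01}); for completeness it can be derived directly by the standard regularization $|\psi|_\ep := \sqrt{|\psi|^2 + \ep^2}$. On $\{\psi \neq 0\}$ one has $\nabla_j|\psi|_\ep = \re(\bar\psi\,\nabla_j\psi)/|\psi|_\ep$, and substituting $\nabla_j\psi = i(D_j\psi - \bA_j\psi)$ together with the fact that $\bA_j$ is real gives $\re(\bar\psi\,\nabla_j\psi) = \re(i\bar\psi\,D_j\psi)$, whence
\[
	\bigl|\nabla_j |\psi|_\ep\bigr|
	= \frac{\bigl|\re\bigl(i\bar\psi\,D_j\psi\bigr)\bigr|}{|\psi|_\ep}
	\le \frac{|\psi|\,|D_j\psi|}{|\psi|_\ep}
	\le |D_j\psi|.
\]
Letting $\ep \to 0$, using $|\psi|_\ep \to |\psi|$ in $L^2_{\loc}$ and the uniform $L^2(\Omega)$ bound $\|\nabla|\psi|_\ep\| \le \|D\psi\|$ to extract a weak limit, identifies that limit as $\nabla|\psi|$ on $\Omega$. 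This simultaneously yields $|\psi| \in H^1_{\loc}(\Omega)$, the a.e. inequality \eqref{diamag_pointwise} (the set $\{\psi=0\}$ being harmless, as $\nabla|\psi|=0$ there a.e.), and $\nabla|\psi| \in L^2(\Omega;\C^{2N})$ with $\|\nabla|\psi|\|_{L^2(\Omega)} \le \|D\psi\|$.

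The final and genuinely delicate step is to upgrade $|\psi| \in H^1_{\loc}(\Omega)$ with square-integrable gradient on $\Omega$ to the statement $|\psi| \in H^1(\R^{2N})$, i.e. to exclude any singular contribution to $\nabla|\psi|$ concentrated on $\bDelta$. This is exactly the role of Lemma \ref{lem:folklore}: since $|\psi| \in L^2(\R^{2N})$ and its distributional gradient on $\Omega$ is a function extending to an element of $L^2(\R^{2N})$, the identification $H^1(\R^{2N}) = H^1(\R^{2N} \setminus \bDelta)$ forces this function to be the full-space gradient of $|\psi|$, with no part supported on $\bDelta$. The underlying mechanism is the codimension-two nature of $\bDelta$, which makes $C_c^\infty(\Omega)$ dense in $H^1(\R^{2N})$. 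With $\nabla|\psi|$ thus identified on all of $\R^{2N}$, the norm inequality \eqref{diamag_norms} follows by integrating \eqref{diamag_pointwise} and using that $\bDelta$ is a null set. I expect this diagonal-removal step to be the main obstacle, and it is precisely the obstruction that Lemma \ref{lem:folklore} is tailored to dispatch.
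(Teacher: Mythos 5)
Your proposal is correct and follows essentially the same route as the paper: reduce to $\R^{2N}\setminus\bDelta$ where $\psi \in H^1_{\textup{loc}}$, apply the local diamagnetic computation using that the $\bA_j$ are real-valued (the paper cites Theorem 6.17 of \cite{Lieb-Loss:01} for the formula for $\nabla|\psi|$ where you re-derive it via the $\sqrt{|\psi|^2+\ep^2}$ regularization, which is the standard proof of that theorem), and then invoke Lemma \ref{lem:folklore} to identify $H^1(\R^{2N}\setminus\bDelta)$ with $H^1(\R^{2N})$ and rule out a singular part of $\nabla|\psi|$ on the diagonals. You also correctly identify the diagonal-removal step as the one genuinely nonstandard ingredient.
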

	\begin{proof}
		Note that $|\psi| \in L^2(\R^{2N})$ trivially.
		By the assumptions we have that $D\psi$, 
		considered as a distribution acting on 
		$C_c^\infty(\R^{2N} \setminus \bDelta)$,
		is actually a function locally on $\R^{2N} \setminus \bDelta$
		and which is square-integrable on all of $\R^{2N}$.
		In particular, 
		$D\psi = -i\nabla\psi + A\psi \in L^2_{\textup{loc}}(\R^{2N} \setminus \bDelta)$,
		where also $A\psi \in L^2_{\textup{loc}}(\R^{2N} \setminus \bDelta)$,
		and hence $-i\nabla\psi \in L^2_{\textup{loc}}(\R^{2N} \setminus \bDelta)$,
		i.e. $\psi \in H^1_{\textup{loc}}(\R^{2N} \setminus \bDelta)$.
		By Theorem 6.17 in \cite{Lieb-Loss:01} we then have
		$|\psi| \in H^1_{\textup{loc}}(\R^{2N} \setminus \bDelta)$
		and, for a.e. $\sx \in \R^{2N} \setminus \bDelta$,
		$$
			(\nabla|\psi|)(\sx) = \left\{ \begin{array}{ll}
				\re \left( \frac{\bar{\psi}}{|\psi|} \nabla\psi \right)(\sx), \ & \text{if $\psi(\sx) \neq 0$,} \\
				0, & \text{if $\psi(\sx) = 0$.}
			\end{array}\right.
		$$
		Now, for $j \in \{1,\ldots,N\}$ and each such point $\sx$ s.t. 
		$\psi(\sx)\neq 0$ we have
		$$
			\re\left( \frac{\bar{\psi}}{|\psi|} \nabla_j \psi \right)
			= \re\left( \frac{\bar{\psi}}{|\psi|} \nabla_j \psi + i\bA_j \frac{|\psi|^2}{|\psi|} \right)
			= \re\left( i\frac{\bar{\psi}}{|\psi|} (-i\nabla_j \psi + \bA_j \psi) \right),
		$$
		so
		$$
			\big| \nabla_j |\psi| \big|
			\le \left| i\frac{\bar{\psi}}{|\psi|} (-i\nabla_j \psi + \bA_j \psi) \right|
			= \big|{-i\nabla_j \psi + \bA_j \psi}\big|,
		$$
		while for $\psi(\sx)=0$,
		$(\nabla|\psi|)(\sx) = 0 \le |(-i\nabla_j \psi + \bA_j \psi)(\sx)|$.
		Hence, \eqref{diamag_pointwise} holds
		for a.e. $\sx \in \R^{2N} \setminus \bDelta$
		(i.e. a.e. $\sx \in \R^{2N}$), and
		$$
			\int_{\R^{2N}} \big| \nabla_j|\psi|(\sx) \big|^2 \,d\sx
			\le \int_{\R^{2N}} \big|(-i\nabla_j \psi + \bA_j \psi)(\sx)|^2 \,d\sx.
		$$
		We have shown that 
		$|\psi| \in H^1_{\textup{loc}}(\R^{2N} \setminus \bDelta)$
		with $\int_{\R^{2N}} |\nabla |\psi||^2 \,d\sx < \infty$,
		and hence $|\psi| \in H^1(\R^{2N} \setminus \bDelta) = H^1(\R^{2N})$ 
		by Lemma \ref{lem:folklore}.
	\end{proof}
	
	\begin{thm} \label{thm:anyon_domains}
		For arbitrary $\alpha \in \R$ we have that
		$D^{\textup{min}} = D^{\textup{max}}$ and hence
		$\hat{T}_{\uA} = \frac{1}{2} (D^{\textup{max}})^* D^{\textup{max}}$ 
		is equal to the Friedrichs extension
		$\hat{T}_{\uA} = \frac{1}{2} (D^{\textup{min}})^* D^{\textup{min}}$.
	\end{thm}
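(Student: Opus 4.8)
The plan is to prove the nontrivial inclusion $\Dc(D^\textup{max}) \subseteq \Dc(D^\textup{min})$ (the reverse being automatic, since $D^\textup{min}$ is by definition a restriction of the closed operator $D^\textup{max}$), that is, to approximate any $\psi \in \Dc(D^\textup{max})$ in the graph norm by elements of $C_c^\infty(\R^{2N}\setminus\bDelta)\cap L^2_\textup{sym}$. Because $\bDelta$ has codimension two, the only available cut-offs are the logarithmic ones $\varphi_\ep$ of Lemma \ref{lem:folklore}, and the entire difficulty is to control the error produced by their gradients. I would first dispose of the even integer fluxes: using the unitary $U$ and the established identity $D^{(\alpha=2n)} = U^{-2n}(-i\nabla)U^{2n}$, conjugation by $U^{2n}$ turns $D^\textup{min}$ and $D^\textup{max}$ at flux $2n$ into the corresponding extensions of $-i\nabla$ on $L^2_\textup{sym}(\R^{2N})$, which coincide by Lemma \ref{lem:folklore}. (Odd integers reduce similarly to the antisymmetric version of that lemma, or fall under the generic case below.) It thus remains to treat $\alpha \notin 2\Z$.

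For such $\alpha$ the crucial input is a many-body magnetic Hardy inequality guaranteeing that finiteness of the magnetic energy forces the local integrability
\begin{equation*}
	\int_{|\bx_j-\bx_k|<\delta} \frac{|\psi|^2}{|\bx_j-\bx_k|^2}\,d\sx < \infty
	\qquad \text{for each } j \neq k.
\end{equation*}
The mechanism is the Aharonov-Bohm Hardy inequality in the relative coordinate $\brr := \bx_j - \bx_k$: since $\tfrac12(D_j - D_k) = -i\nabla_\brr + \alpha\,\brr^{-1}I$ up to terms regular at $\brr = 0$, and since complete symmetry forces $\psi$ to be even in $\brr$, an angular momentum decomposition into channels $e^{im\theta}$ with $m \in 2\Z$ produces the centrifugal lower bound $\dist(\alpha,2\Z)^2\,|\brr|^{-2}$, which is strictly positive exactly because $\alpha \notin 2\Z$. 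This is precisely the strength of exclusion recorded by the $p=0$ term of $C_{\alpha,N}$ in \eqref{C_alpha}.

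Granted this local finiteness, the approximation is routine. After a preliminary smooth cut-off in the bulk variables (permissible since $\psi, D\psi \in L^2$) one may assume $\psi$ compactly supported, and then set $\psi_\ep := \Phi_\ep\psi$ with $\Phi_\ep := \prod_{j<k}\varphi_\ep(\bx_j-\bx_k)$. The product rule gives $D_l\psi_\ep - D_l\psi = (\Phi_\ep-1)D_l\psi - i(\nabla_l\Phi_\ep)\psi$; the first term vanishes in $L^2$ by dominated convergence, using $D\psi \in L^2$ and $|\psi|\in H^1$ from Lemma \ref{lem:diamagnetic_inequality}. For the second term the bound $|\nabla_l\varphi_\ep(\bx_l-\bx_k)| \le C\ep\,|\bx_l-\bx_k|^{-1}$, supported in the shrinking region $e^{-2/\ep} \le |\bx_l-\bx_k| \le e^{-1/\ep}$, reduces everything to $\ep^2\int |\psi|^2|\bx_l-\bx_k|^{-2}$ over that region, which tends to zero by the local finiteness above. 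Mollifying then yields genuine $C_c^\infty(\R^{2N}\setminus\bDelta)$ approximants and hence $\psi \in \Dc(D^\textup{min})$.

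The main obstacle is the many-body magnetic Hardy inequality: one must isolate the Aharonov-Bohm singularity of a single pair while handling the smooth but $N$-dependent contributions of the remaining particles to $\bA_j$ and $\bA_k$, and carry out the angular-momentum localization near a codimension-two diagonal sitting inside the full configuration space rather than on a clean two-body plane. Everything else --- the reduction at even integers, the diamagnetic inequality, and the logarithmic cut-off estimate --- is either already available or a direct repetition of the argument in Lemma \ref{lem:folklore}.
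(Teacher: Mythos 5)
Your overall skeleton (bulk cut-off, logarithmic diagonal cut-off $\varphi_\ep$, final mollification) matches the paper's, but the way you control the error term $(\nabla_l\Phi_\ep)\psi$ is genuinely different and is where the gap lies. You make the entire argument for $\alpha\notin 2\Z$ rest on a many-body magnetic Hardy inequality giving $\int_{|\bx_j-\bx_k|<\delta}|\psi|^2\,|\bx_j-\bx_k|^{-2}\,d\sx<\infty$ with constant $\dist(\alpha,2\Z)^2$, and you yourself flag this as ``the main obstacle'' without supplying a proof. This is not a routine step: the angular-momentum decomposition you invoke controls the relative flux only when no other particle is enclosed, and once a third particle sits inside the loop the effective flux is shifted to $(2p+1)\alpha-2q$ — this is exactly why the constant \eqref{C_alpha} in the many-anyon Hardy inequality of \cite{Lundholm-Solovej:anyon} involves the minimum over all $p\le N-2$ and vanishes identically for even-numerator fractions such as $\alpha=2/3$ once $N\ge 3$. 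A purely local pairwise version near simple two-body collisions can be salvaged, but your cut-off error integrates $|\psi|^2|\bx_j-\bx_k|^{-2}$ over a full neighborhood of $\bDelta$, including configurations arbitrarily close to higher-order collisions where the constant degenerates; closing that case would require substantial additional work that the proposal does not contain. (Your separate treatment of $\alpha\in2\Z$ via the unitary $U$ and Lemma \ref{lem:folklore} is fine, but it also signals that the Hardy route cannot be uniform in $\alpha$.)

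The paper avoids all of this with a step you are missing: before cutting off near the diagonals it truncates $\psi$ in \emph{amplitude}, replacing $\psi$ by $\chi_n(|\psi|)\psi$ and using the diamagnetic inequality (Lemma \ref{lem:diamagnetic_inequality}) to show this converges in graph norm. Once $\psi\in L^\infty$, the error term is bounded by
\begin{equation*}
	\bigl\|(\nabla{\textstyle\prod}\varphi_\ep)\,\psi\bigr\|
	\ \le\ \bigl\|\nabla{\textstyle\prod}\varphi_\ep\bigr\|_{L^2(\supp\psi)}\,\|\psi\|_{L^\infty},
\end{equation*}
and $\int_{\R^2}|\nabla\varphi_\ep|^2\,d\bx\le C\ep\to0$ exactly as in Lemma \ref{lem:folklore} — the zero $H^1$-capacity of points in two dimensions does all the work, uniformly in $\alpha$ and with no information about the behavior of $\psi$ near $\bDelta$ beyond boundedness. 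If you want to keep your Hardy-based strategy you must actually prove the local integrability near all of $\bDelta$ (not just its two-body regular part); as written, the proof is incomplete at its self-declared crucial input, whereas the $L^\infty$-truncation makes that input unnecessary.
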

	\begin{proof}
		Take $\psi \in \Dc_{\uA}^\alpha$.
		We need to show that $\psi$ 
		can be approximated by a sequence 
		$\psi_n \in C_c^\infty(\R^{2N} \setminus \bDelta) \cap L^2_{\textup{sym}}$
		s.t. $\|\psi_n - \psi\| \to 0$ and 
		$\|D(\psi_n - \psi)\| \to 0$ as $n \to \infty$.
		
		Step 1: We start by proving that we without loss of generality
		can assume $\psi \in L^2_{c,\textup{sym}}(\R^{2N})$,
		by finding a sequence $\psi_n \in L^2_{c,\textup{sym}}(\R^{2N})$
		s.t. $D\psi_n \in L^2(\R^{2N})$, $\|\psi_n-\psi\| \to 0$ and $\|D(\psi_n - \psi)\| \to 0$.
		We take $\psi_n := \chi_n\psi$ where $\chi_n(\sx) := \chi(\sx/n)$
		and $\chi \in C^\infty_c(\R^{2N};[0,1]) \cap L^2_\textup{sym}$ 
		with $\chi \equiv 1$ for $|\sx|<1$ and 0 for $|\sx|>2$.
		Then $\psi_n$ is symmetric with $\supp \psi_n \subseteq B_{2n}(0)$,
		$$
			\|\psi_n-\psi\|^2 \le \int_{B_n(0)^c} |\psi|^2 \,d\sx \to 0,
			\quad n \to \infty,
		$$
		as well as
		$D\psi_n = -\frac{i}{n}\nabla \chi(\sx/n) \psi + \chi_n D\psi \in L^2(\R^{2N})$,
		and
		$$
			\|D\psi_n - D\psi\| 
			\le \frac{1}{n} \|\nabla\chi(\sx/n) \psi\| + \|(\chi_n -1)D\psi\|
			\le \frac{C}{n} \|\psi\| 
			+ \|D\psi\|_{L^2(B_n^c)} \to 0.
		$$
		
		Step 2: We next show that it is sufficient to 
		assume $\psi \in L^\infty_{c,\textup{sym}}(\R^{2N})$.
		By step 1, we have $\psi \in L^2_{c,\textup{sym}}(\R^{2N})$ 
		with $D\psi \in L^2_c(\R^{2N})$.
		Let $\psi_n(\sx) := \chi_n(|\psi|(\sx))\psi(\sx)$
		where $\chi_n$ is the cut-off function 
		defined analogously as above but on $\R$.
		Then $|\psi_n(\sx)| \le 2n$ and $\psi_n \in L^\infty_c(\R^{2N})$.
		Furthermore, on $\R^{2N} \setminus \bDelta$ we have
		$\psi \in H^1_\textup{loc}(\R^{2N} \setminus \bDelta)$ 
		and, by Theorem 6.16 in \cite{Lieb-Loss:01}
		(cp. also Theorem 7.22 in \cite{Lieb-Loss:01}),
		$\psi_n \in H^1_\textup{loc}(\R^{2N} \setminus \bDelta)$ and
		$$
			\nabla \psi_n = \chi_n(|\psi|)\nabla\psi + \psi\chi_n'(|\psi|)\nabla|\psi| 
			\ \in L^2_\textup{loc}(\R^{2N} \setminus \bDelta).
		$$
		Hence,
		$$
			D\psi_n = \underbrace{\chi_n(|\psi|)}_{\in L^\infty} D\psi 
			- i\underbrace{\psi\chi_n'(|\psi|)}_{\in L^\infty} \nabla|\psi| 
			\ \in L^2_\textup{loc}(\R^{2N} \setminus \bDelta),
		$$
		with the full $L^2$-norm of $D\psi_n$ on $\R^{2N}$ then bounded by
		$$
			\|\chi_n(|\psi|)\|_\infty \|D\psi\|_2
			+ \|\psi \chi_n'(|\psi|)\|_\infty \|\nabla|\psi|\|_2 
			\le (1+2n)\|D\psi\|_2 < \infty,
		$$
		where we have used the diamagnetic inequality, 
		Lemma \ref{lem:diamagnetic_inequality}.
		Hence, $\psi_n \in \Dc_{\uA}^\alpha$ and furthermore
		$$
			\|\psi_n-\psi\|^2 \le \int_{|\psi|>n} |\psi|^2 \,d\sx \to 0,
		$$
		and
		$$
			\|D\psi_n - D\psi\| 
			\le \|(\chi_n(|\psi|)-1) D\psi\| + \|\psi\chi_n'(|\psi|) \nabla\psi\|,
		$$
		where the first term tends to zero by dominated convergence
		and for the second we have
		$|\psi|\chi_n'(|\psi|) \le C \chi_{\{|\psi|>n\}}$, hence also
		$\int_{\R^{2N}} |\psi \chi_n'(|\psi|)|^2 \big|\nabla|\psi|\big|^2 \,dx \!\to 0$.
		
		Step 3: Our next step is to show that we can assume
		$\psi \in L^\infty_{c,\textup{sym}}(\R^{2N} \!\setminus \bDelta)$, i.e.
		$\psi$ supported away from diagonals.
		By steps 1-2 we have $\psi \in L^\infty_{c,\textup{sym}}(\R^{2N})$
		and hence $D\psi \in L^2_c(\R^{2N})$.
		As in the proof of Lemma \ref{lem:folklore} we define
		$\psi_\ep(\sx) := \prod_{j<k} \varphi_\ep(\bx_j - \bx_k)\psi(\sx)$
		which is in $L^\infty_{c,\textup{sym}}(\R^{2N} \setminus \bDelta)$
		for $\ep>0$. We obtain
		$$
			D\psi_\ep = {\textstyle\prod} \varphi_\ep D\psi 
			-i(\nabla {\textstyle\prod} \varphi_\ep) \psi
			\in L^2_c(\R^{2N} \setminus \bDelta),
		$$
		and
		$$
			\|D\psi_n - D\psi\| 
			\le \|({\scriptstyle\prod} \varphi_\ep - 1)D\psi\| 
			+ \|(\nabla {\scriptstyle\prod} \varphi_\ep)\psi\|,
		$$
		where both terms tend to zero as $\ep \to 0$
		as in Lemma \ref{lem:folklore}.
		
		Step 4: By steps 1-3 we can assume 
		$\psi \in L^\infty_{c,\textup{sym}}(\R^{2N} \setminus \bDelta)$.
		Hence $\psi$ is compactly supported on some open set
		$\Omega := (\R^{2N} \setminus \bDelta) \cap B_R(0)$, 
		and also $D\psi = -i\nabla\psi + A\psi \in L^2_c(\Omega)$. 
		We have then since $A\psi \in L^2_c(\Omega)$ that
		$\psi \in H^1_0(\Omega)$, so there is a sequence 
		$\psi_n \in C^\infty_c(\Omega)$ s.t.
		$$
			\|\psi_n - \psi\|_{L^2(\Omega)} \to 0, \quad \text{and} \quad
			\|\nabla\psi_n - \nabla\psi\|_{L^2(\Omega)} \to 0,
		$$
		as $n \to \infty$.
		Hence $D\psi_n \in L^2(\R^{2N})$ and
		\begin{multline*}
			\|D(\psi_n-\psi)\|_{L^2(\R^{2N})}
			= \|{-i\nabla(\psi_n-\psi)} + A(\psi_n-\psi)\|_{L^2(\Omega)} \\
			\le \|\nabla\psi_n - \nabla\psi\|_{L^2(\Omega)}
			+ \|A\|_{L^\infty(\Omega)} \|\psi_n - \psi\|_{L^2(\Omega)}
			\to 0,
		\end{multline*}
		which proves the theorem.
	\end{proof}

\section{Local exclusion} \label{sec:local_exclusion}

	Since our wave functions are modeled using bosonic symmetry,
	but with the non-trivial exchange statistics represented by 
	an interaction,
	we cannot take advantage of the usual exclusion principle
	encoded in the total (global) antisymmetry of the wave function.
	Instead we recall the following \emph{local} consequence of 
	the Pauli principle for fermions,
	given as Lemma 5 in \cite{Dyson-Lenard:67},
	and which was used by Dyson and Lenard in their 
	proof of stability of matter in the bulk
	(see also \cite{Dyson:68,Lenard:73}).
	
	\begin{lem}[Local exclusion for fermions in 3D] 
		\label{lem:local_exclusion_F}
		Let $\psi \in \bigwedge^n L^2(\R^3)$ be a wave function of $n$ 
		fermions in $\R^3$ and let $\Omega$ be a ball of radius $\ell$.
		Then
		\begin{equation} \label{local_exclusion_F}
			\int_{\Omega^n} \sum_{j=1}^n |\nabla_j \psi|^2 \,d\sx
			\ \ge \ (n-1) \frac{\xi^2}{\ell^2} 
				\int_{\Omega^n} |\psi|^2 \,d\sx,
		\end{equation}
		where $\xi \approx 2.082$ 
		is the smallest positive root of the equation
		\begin{equation}
			\frac{d^2}{dx^2} \frac{\sin x}{x} = 0.
		\end{equation}
	\end{lem}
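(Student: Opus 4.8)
The plan is to reduce the many-body bound to the spectral gap of the Neumann Laplacian on the ball $\Omega$, and then to use antisymmetry to rule out the ground state occurring more than once. We may assume the left-hand side is finite, for otherwise there is nothing to prove; this guarantees that the restriction $\psi|_{\Omega^n}$ lies in $H^1(\Omega^n)$, the form domain of $\sum_{j=1}^n(-\Delta_j)$ with Neumann conditions. First I would identify the constant $\xi^2/\ell^2$ as the first nonzero Neumann eigenvalue $\mu_1$ of $-\Delta$ on $\Omega$. Separating variables in spherical coordinates, the eigenfunctions are $j_\ell(kr)Y_\ell^m$ with spherical Bessel functions $j_\ell$, and the Neumann condition at radius $\ell$ reads $j_\ell'(k\ell)=0$. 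The angular-momentum-zero sector yields the constant mode (eigenvalue $0$) and then roots of $j_0'$, the smallest positive one near $4.49$; the angular-momentum-one sector yields $j_1'(k\ell)=0$. Since $j_1=-j_0'$ with $j_0(x)=\sin x/x$, the condition $j_1'=0$ is precisely $\frac{d^2}{dx^2}\frac{\sin x}{x}=0$, whose smallest positive root is $\xi\approx 2.082$. As $\xi<4.49$, the minimizing mode is the dipole, so $\mu_1=\xi^2/\ell^2$, and by scaling every nonzero eigenvalue satisfies $\mu_k\ge\mu_1$ for $k\ge1$.

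Next I would set up the spectral decomposition. Let $\{u_k\}_{k\ge0}$ be an orthonormal basis of $L^2(\Omega)$ of Neumann eigenfunctions, ordered so that $u_0$ is the normalized constant with $\mu_0=0$. Diagonalizing each $-\Delta_j$ in this basis and applying Parseval to the quadratic form gives
\[
\int_{\Omega^n}\sum_{j=1}^n|\nabla_j\psi|^2\,d\sx
=\sum_{k_1,\ldots,k_n}(\mu_{k_1}+\cdots+\mu_{k_n})\,|c_{k_1\cdots k_n}|^2,
\qquad
c_{k_1\cdots k_n}:=\langle\psi,\,u_{k_1}\otimes\cdots\otimes u_{k_n}\rangle_{L^2(\Omega^n)}.
\]

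The crux is the antisymmetry. Because $\psi$ is totally antisymmetric and $\Omega^n$ is permutation-symmetric, $\psi|_{\Omega^n}$ is antisymmetric, so each coefficient $c_{k_1\cdots k_n}$ is antisymmetric under permutations of its indices and therefore vanishes whenever two indices coincide; only strictly distinct multi-indices contribute, i.e.\ the expansion is in Slater determinants $u_{k_1}\wedge\cdots\wedge u_{k_n}$. For a distinct multi-index at most one entry can equal $0$, so at least $n-1$ of the associated eigenvalues are $\ge\mu_1$, whence $\mu_{k_1}+\cdots+\mu_{k_n}\ge(n-1)\mu_1$. Substituting this into the identity above and using $\sum_{k_1,\ldots,k_n}|c_{k_1\cdots k_n}|^2=\int_{\Omega^n}|\psi|^2$ yields the claimed bound with $\mu_1=\xi^2/\ell^2$.

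The main obstacle is the spectral bookkeeping rather than any single estimate: one must justify that $\psi|_{\Omega^n}$ genuinely lies in the Neumann form domain so that the eigenfunction expansion faithfully represents the gradient energy, and one must check that the Pauli ``at most one particle in the ground state'' argument transfers verbatim from finite-dimensional Slater determinants to the infinite eigenbasis. The Bessel identification of $\xi$ is elementary but must be carried out carefully to confirm that the first nonzero Neumann eigenvalue comes from the dipole ($\ell=1$) sector rather than the radial ($\ell=0$) sector.
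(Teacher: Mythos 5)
Your proposal is correct and follows essentially the same route the paper indicates (and that Dyson--Lenard's Lemma 5 uses): expand $\psi|_{\Omega^n}$ in the tensor basis of Neumann eigenfunctions of $-\Delta$ on the ball, note that antisymmetry kills every coefficient with a repeated index so at most one particle occupies the zero mode, and identify the spectral gap $\xi^2/\ell^2$ with the first zero of $j_1'$, i.e.\ of $\frac{d^2}{dx^2}\frac{\sin x}{x}$. The only detail worth keeping explicit is the one you already flag, namely that the first nonzero Neumann eigenvalue of the ball comes from the $l=1$ sector ($\xi\approx 2.08$) rather than the radial sector ($\approx 4.49$).
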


	We refer to such local energy bounds as a
	\emph{local exclusion principle} since it implies that the energy is
	strictly positive whenever we have more than one particle,
	and hence that the particles cannot occupy the same single-particle state 
	(which on a local region with free boundary conditions 
	would be the zero-energy ground state).
	The inequality follows by expanding $\psi$ in the eigenfunctions of the 
	Neumann Laplacian on $\Omega$, 
	and it turns out to be sufficient with such a weak \emph{linear} 
	dependence on $n$.
	A corresponding family of bounds of this form 
	for generalized exchange statistics 
	forms the starting point for our proofs of 
	kinetic energy and Lieb-Thirring inequalities
	for such statistics.

\subsection{Local exclusion for anyons and bounds for the energy of the ideal anyon gas}

	The following 
	local exclusion principle for anyons
	was proved in \cite{Lundholm-Solovej:anyon}, starting from 
	a local pairwise relative magnetic Hardy inequality.
	The method then used for lifting such a pairwise energy bound to
	a bound for the full kinetic energy on a local region
	will be illustrated below for the case of 
	intermediate statistics in 1D
	(and is also in the case of fermions in three dimensions 
	applicable for an alternative proof of 
	Lemma \ref{lem:local_exclusion_F} above, 
	up to the value of the constant).

	\begin{lem}[Local exclusion for anyons] \label{lem:local_exclusion_A}
		Let $\psi \in \Dc_{\uA}^\alpha$ be a wave function of $n$ anyons 
		and let $\Omega \subseteq \R^2$ be either a disk or a square,
		with area $|\Omega|$.
		Then
		\begin{equation} \label{local_exclusion_A}
			\int_{\Omega^n} \sum_{j=1}^n |D_j \psi|^2 \,d\sx
			\ \ge \ (n-1) \frac{c_\Omega C_{\alpha,n}^2}{|\Omega|} 
				\int_{\Omega^n} |\psi|^2 \,d\sx,
		\end{equation}
		where $c_\Omega$ is a constant which satisfies $c_\Omega \ge 0.169$
		for the disk and $c_\Omega \ge 0.112$ for the square,
		and $C_{\alpha,n}$ is defined in \eqref{C_alpha}.
		Hence, defining the 
		local kinetic energy on $\Omega$ for a normalized 
		$N$-particle wave function
		$\psi \in \Dc_{\uA}^\alpha$
		\begin{equation} \label{local_kinetic_energy_A}
			T_{\uA}^\Omega := \sum_{j=1}^N \int_{\R^{2N}} \frac{1}{2} |D_j \psi|^2 \,\chi_\Omega(\bx_j) \,d\sx,
		\end{equation}
		we obtain the following local energy bound
		in terms of $\rho$
		\begin{equation} \label{local_exclusion_density_A}
			T_{\uA}^\Omega \ \ge \ \frac{c_\Omega C_{\alpha,N}^2}{2|\Omega|} 
					\left( \int_\Omega \rho(\bx) d\bx \ - 1 \right)_+.
		\end{equation}
	\end{lem}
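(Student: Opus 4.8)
The plan is to obtain the density bound \eqref{local_exclusion_density_A} from the local exclusion inequality \eqref{local_exclusion_A} by decomposing the configuration space according to which particles lie in $\Omega$ and then applying a single convexity (Jensen) argument. First I would partition $\R^{2N}$ according to the set of particles inside $\Omega$: for each $A \subseteq \{1,\ldots,N\}$ let $\Omega^A := \{ \sx \in \R^{2N} : \bx_i \in \Omega \text{ for } i \in A, \ \bx_i \notin \Omega \text{ for } i \notin A\}$. These sets cover $\R^{2N}$ up to a null set, and on $\Omega^A$ one has $\chi_\Omega(\bx_j)=1$ exactly when $j \in A$, so by \eqref{local_kinetic_energy_A}
\begin{equation*}
  T_{\uA}^\Omega
  = \frac{1}{2} \sum_{A \subseteq \{1,\ldots,N\}} \sum_{j \in A} \int_{\Omega^A} |D_j\psi|^2 \,d\sx .
\end{equation*}

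Next, for a fixed $A$ with $|A| = n$ and, by Fubini, for a.e. fixed position of the outside particles, I would view $\psi$ as a function of the inside variables $(\bx_i)_{i\in A} \in \Omega^n$ and apply Lemma \ref{lem:local_exclusion_A}. The subtle point — and the main obstacle — is that for an inside index $j$ the operator $D_j$ still carries the Aharonov-Bohm terms $\alpha\sum_{k\notin A}(\bx_j-\bx_k)^{-1}I$ from the fixed outside particles, so a priori this is not the pure $n$-anyon operator of \eqref{local_exclusion_A}. However, since $\Omega$ is simply connected and each outside flux point lies in $\R^2 \setminus \Omega$, this external field is curl-free on $\Omega$ and equals $\nabla_{\bx_j}\lambda$ for the single-valued smooth phase $\lambda := \alpha\sum_{j\in A}\sum_{k\notin A}\arg(\bx_j-\bx_k)$ on $\Omega^n$. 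The gauge transformation $\psi \mapsto e^{-i\lambda}\psi$ removes these external fluxes while preserving $|\psi|$, the sum $\sum_{j\in A}|D_j\psi|^2$, and the bosonic symmetry among the inside particles (as $\lambda$ is symmetric in the inside coordinates), so the restricted function genuinely lies in $\Dc_{\uA}^\alpha$ for $n$ anyons and \eqref{local_exclusion_A} applies verbatim. I would record this flux-removal step carefully, as it is the only nonroutine part of the argument.

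Applying the bound, using the monotonicity $C_{\alpha,n} \ge C_{\alpha,N}$ for $n \le N$ (immediate from \eqref{C_alpha}, since the minimum in $C_{\alpha,n}$ runs over a subset of the indices), and integrating over the outside variables, each term of the sum is bounded below by $\tfrac{c_\Omega C_{\alpha,N}^2}{2|\Omega|}(n-1)\int_{\Omega^A}|\psi|^2$. For $|A| \le 1$ the right-hand side is nonpositive while the true contribution is $\ge 0$, so in fact each term is bounded below using the positive part $(|A|-1)_+$. Writing $n_\Omega(\sx) := \sum_{j=1}^N \chi_\Omega(\bx_j)$ for the number of particles in $\Omega$, summing over $A$ then gives
\begin{equation*}
  T_{\uA}^\Omega
  \ \ge\ \frac{c_\Omega C_{\alpha,N}^2}{2|\Omega|} \sum_{A} (|A|-1)_+ \int_{\Omega^A} |\psi|^2 \,d\sx
  \ =\ \frac{c_\Omega C_{\alpha,N}^2}{2|\Omega|} \int_{\R^{2N}} \big(n_\Omega - 1\big)_+ |\psi|^2 \,d\sx ,
\end{equation*}
while by the definition \eqref{one-particle-density} of the one-particle density $\int_\Omega \rho \,d\bx = \int_{\R^{2N}} n_\Omega\,|\psi|^2 \,d\sx$.

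Finally, since $t \mapsto (t-1)_+$ is convex and $|\psi|^2\,d\sx$ is a probability measure (as $\|\psi\|=1$), Jensen's inequality yields
\begin{equation*}
  \int_{\R^{2N}} \big(n_\Omega - 1\big)_+ |\psi|^2 \,d\sx
  \ \ge\ \left( \int_{\R^{2N}} n_\Omega\,|\psi|^2 \,d\sx - 1 \right)_+
  = \left( \int_\Omega \rho \,d\bx - 1 \right)_+ ,
\end{equation*}
which combined with the previous display gives \eqref{local_exclusion_density_A}. To summarize, everything reduces to a symmetry-based decomposition, the gauge removal of external fluxes so that Lemma \ref{lem:local_exclusion_A} can be invoked for the genuinely $n$-anyon operator, and one application of convexity; I expect the flux-removal step to be where the real care is needed, the remainder being bookkeeping.
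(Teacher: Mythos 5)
Your derivation of \eqref{local_exclusion_density_A} from \eqref{local_exclusion_A} --- decomposing over the subset $A$ of particles inside $\Omega$, gauging away the Aharonov--Bohm fluxes of the outside particles on the simply connected region, applying the $|A|$-particle bound with $C_{\alpha,|A|}\ge C_{\alpha,N}$, and resumming --- is exactly the paper's argument (spelled out in the proofs of Theorem \ref{thm:anyon_gas} and Lemma \ref{lem:local_exclusion_H}, following Lemma 8 of \cite{Lundholm-Solovej:anyon}); your only deviation is obtaining $(\int_\Omega\rho\,d\bx-1)_+$ via Jensen's inequality for the convex function $t\mapsto(t-1)_+$, where the paper instead uses $(|A|-1)_+\ge|A|-1$ together with $T_{\uA}^\Omega\ge 0$, and both routes are valid. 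Note that, like the paper, you take the core inequality \eqref{local_exclusion_A} with its explicit constants $c_\Omega$ as given rather than proving it --- the paper imports it from the local magnetic Hardy inequality of \cite{Lundholm-Solovej:anyon} --- so your proposal covers the same ground as the paper's own treatment.
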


	We shall consider an immediate application,
	partially along the lines of \cite{Dyson-Lenard:67}, 
	of this local exclusion principle for anyons 
	in the form \eqref{local_exclusion_A}
	to an explicit lower bound for the ground state energy
	of a gas of non-interacting anyons.
	The numerical constant we obtain 
	by this comparatively simple method is much better
	than the one following from the Lieb-Thirring inequality 
	\eqref{kinetic-LT-anyons} for anyons
	given below and proven in \cite{Lundholm-Solovej:anyon}.

	\begin{thm}[Ground state energy for $N$ anyons in a box] \label{thm:anyon_gas}
		Let $\psi \in \Dc_{\uA}^\alpha$  be a normalized $N$-anyon wave function
		supported on a square $Q_L \subseteq \R^2$ of side length $L$. Then
		\begin{equation} \label{anyon_kinetic_bound}
			T_{\uA} =
			\frac{1}{2} \int_{\R^{2N}} \sum_{j=1}^N |D_j \psi|^2 \,d\sx 
			\ \ge \ \frac{c_\Omega}{2} c_N C_{\alpha,N}^2 \frac{N^2}{L^2},
		\end{equation}
		where $c_\Omega$ is is the constant in Lemma \ref{lem:local_exclusion_A} 
		for the disk, and
		$$
			c_N := \sup_{\gamma>0} \left( \frac{1}{\pi\gamma^2} - \frac{(1+2\gamma/\sqrt{N})^2}{\pi^2\gamma^4} \right).
		$$
		Hence, 
		the energy per unit area of the ideal anyon gas is bounded below by
		$$
			\frac{T_{\uA}}{L^2} 
			\ \ge \ \frac{c_\Omega}{2} c_N C_{\alpha,N}^2 \bar{\rho}^2
			\stackrel{N > 10^6}{\ge} 
			0.021 \,C_{\alpha,N}^2 \,\bar{\rho}^2, 
		$$
		which,
		whenever the statistics parameter 
		$\alpha = \frac{\mu}{\nu}$
		is an odd numerator reduced fraction,
		remains finite in the thermodynamic limit and results in the bound
		\begin{equation} \label{anyon_gas_bound}
			\frac{T_{\uA}}{L^2} \ge 0.021 \frac{\bar{\rho}^2}{\nu^2},
		\end{equation}
		as $N \to \infty$ and $L \to \infty$ 
		while the total density $\bar{\rho} := N/L^2$ is kept fixed.
	\end{thm}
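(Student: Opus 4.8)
The plan is to derive the global bound \eqref{anyon_kinetic_bound} from the local density form of the exclusion principle \eqref{local_exclusion_density_A}, in the spirit of the Dyson--Lenard argument \cite{Dyson-Lenard:67}, but with the usual tiling of the box replaced by a \emph{continuous average} over all positions of a disk of fixed radius. The key idea is that averaging over all translates, rather than summing over a fixed packing of disjoint disks, produces a cover of exactly uniform multiplicity and therefore loses no packing-fraction factor; it is precisely this that lets the full disk constant $c_\Omega$ survive into the final bound.

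Concretely, fix a radius $r>0$ and, for $\bz \in \R^2$, let $B_r(\bz)$ be the disk of radius $r$ centred at $\bz$, with local kinetic energy $T_{\uA}^{B_r(\bz)}$ as in \eqref{local_kinetic_energy_A}. For each $\bz$, \eqref{local_exclusion_density_A} gives
\begin{equation*}
	T_{\uA}^{B_r(\bz)} \ \ge \ \frac{c_\Omega C_{\alpha,N}^2}{2\pi r^2}\left( \int_{B_r(\bz)} \rho \,d\bx - 1 \right)_+ .
\end{equation*}
I would then integrate this inequality over $\bz \in \R^2$. On the left, Fubini's theorem together with the identity $\int_{\R^2}\chi_{B_r(\bz)}(\bx)\,d\bz = \pi r^2$, valid for every $\bx$, yields $\int_{\R^2} T_{\uA}^{B_r(\bz)}\,d\bz = \pi r^2\, T_{\uA}$, so that the effective multiplicity of the cover is exactly $\pi r^2$ and no geometric loss occurs.

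On the right, since $\rho$ is supported in $Q_L$, only centres $\bz$ with $B_r(\bz)\cap Q_L \neq \emptyset$ contribute, i.e. $\bz$ in the Minkowski sum $Q_L \oplus B_r(0)$, whose area is at most $(L+2r)^2$. Using $(t-1)_+ \ge t-1$ on this set and Fubini once more (now $\int_{B_r(\bz)}\rho\,d\bx$ integrated over these $\bz$ equals $\pi r^2 N$), one gets $\int_{\R^2}\big(\int_{B_r(\bz)}\rho\,d\bx - 1\big)_+\,d\bz \ge \pi r^2 N - (L+2r)^2$. Combining the two sides and dividing by $\pi r^2$ gives
\begin{equation*}
	T_{\uA} \ \ge \ \frac{c_\Omega C_{\alpha,N}^2}{2}\left( \frac{N}{\pi r^2} - \frac{(L+2r)^2}{\pi^2 r^4} \right).
\end{equation*}
Optimising the radius by setting $r = \gamma L/\sqrt{N}$ turns the bracket into $\frac{N^2}{L^2}\big(\frac{1}{\pi\gamma^2} - \frac{(1+2\gamma/\sqrt{N})^2}{\pi^2\gamma^4}\big)$, and taking the supremum over $\gamma>0$ produces exactly the constant $c_N$, hence \eqref{anyon_kinetic_bound}.

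The remaining assertions are elementary. Dividing by $L^2$ gives energy per unit area $\ge \frac{c_\Omega}{2}c_N C_{\alpha,N}^2 \bar{\rho}^2$; inserting the disk value $c_\Omega \ge 0.169$ and noting that as $N\to\infty$ the optimal $\gamma\to\sqrt{2/\pi}$ with $c_N\to 1/4$ (the $2\gamma/\sqrt{N}$ correction being negligible for $N>10^6$) produces the numerical constant $0.021$. The thermodynamic bound \eqref{anyon_gas_bound} then follows since, for $\alpha=\mu/\nu$ an odd-numerator reduced fraction, $C_{\alpha,N}\to 1/\nu$ by \eqref{C_alpha} (cf. \cite{Lundholm-Solovej:anyon}) while $c_N\to 1/4$ and $\bar{\rho}$ is held fixed. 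I expect the main obstacle to be conceptual rather than computational: one must arrange the average so that the cover has uniform multiplicity, thereby retaining the sharp disk constant, while still controlling the boundary contribution, which is the origin of the $1+2\gamma/\sqrt{N}$ correction in $c_N$; past that point everything reduces to two applications of Fubini and a one-variable optimisation.
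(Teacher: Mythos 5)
Your argument is correct and is essentially the paper's own proof: the paper likewise averages the local exclusion bound over all centres $\by$ of a disk of radius $\ell$ (with weight $1/(\pi\ell^2)$ over $Q_{L,\ell}=Q_L+B_\ell(0)$), obtains $N\pi\ell^2-(L+2\ell)^2$ on the right via Fubini, and optimises $\ell=\gamma L/\sqrt{N}$. The only cosmetic difference is that you invoke the packaged density form \eqref{local_exclusion_density_A} directly, whereas the paper starts from \eqref{local_exclusion_A} and carries out the partition-of-unity/gauging argument inline before reverting it; the content is identical.
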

	\begin{proof}
		Our approach is very similar to
		the proof of Theorem 8 in \cite{Dyson-Lenard:67},
		which used the local exclusion principle 
		\eqref{local_exclusion_F} for fermions.
		Introduce an arbitrary length $\ell > 0$ and write
		$Q_{L,\ell} := Q_L + B_\ell(0)$.
		Then, 
		$$
			2T_{\uA} 
			= \int_{\R^{2N}} \sum_{j=1}^N |D_j \psi|^2 \frac{1}{\pi\ell^2} 
				\int_{Q_{L,\ell}} \chi_{B_\ell(\by)}(\bx_j) \,d\by \,d\sx,
		$$ 
		and after changing the order of integration and
		inserting the partition of unity
		\begin{multline} \label{partition_of_unity_B}
			1 = \prod_{k=1}^N \left( \chi_{B_\ell(\by)}(\bx_k) + \chi_{B_\ell(\by)^c}(\bx_k) \right) \\
			= \sum_{A \subseteq \{1,\ldots,N\}} \prod_{k \in A} 
				\chi_{B_\ell(\by)}(\bx_k) \prod_{k \notin A} \chi_{B_\ell(\by)^c}(\bx_k),
		\end{multline}
		we find
		\begin{multline*}
			2T_{\uA} 
			= \frac{1}{\pi\ell^2} \int_{Q_{L,\ell}} \sum_{A \subseteq \{1,\ldots,N\}}
				\int_{\R^{2N}} \sum_{j=1}^N |D_j \psi|^2 \,\chi_{B_\ell(\by)}(\bx_j) \\
				\times \prod_{k \in A} \chi_{B_\ell(\by)}(\bx_k) 
				\prod_{k \notin A} \chi_{B_\ell(\by)^c}(\bx_k) \,d\sx \,d\by \\
			= \frac{1}{\pi\ell^2} \int_{Q_{L,\ell}} \sum_{A \subseteq \{1,\ldots,N\}}
				\int\limits_{(B_\ell(\by)^c)^{N-|A|}}
				\int\limits_{(B_\ell(\by))^{|A|}}
				\sum_{j \in A} |D_j \psi|^2 
				\prod_{k \in A} d\bx_k \prod_{k \notin A} d\bx_k \,d\by.
		\end{multline*}
		We now apply \eqref{local_exclusion_A} 
		to each term 
		in the first summation above, which involves
		a partition $A$ of the $N$ particles into 
		$n := |A|$ of them being inside the disk $B_\ell(\by)$, 
		while the remaining $N-n$ residing outside, 
		and therefore whose contributions to the magnetic potentials 
		$\bA_{j \in A}$
		can be gauged away. 
		Thus, we find
		\begin{multline*}
			2T_{\uA} \ge \frac{c_\Omega}{\pi^2\ell^4} 
				\int\limits_{Q_{L,\ell}} \sum_{A \subseteq \{1,\ldots,N\}}
				C_{\alpha,|A|}^2 \left( |A| - 1 \right)_+ \\ \times 
				\int\limits_{(B_\ell(\by)^c)^{N-|A|}}
				\int\limits_{(B_\ell(\by))^{|A|}}
				|\psi|^2 \prod_{k \in A} d\bx_k \prod_{k \notin A} d\bx_k \,d\by \\
			\ge \frac{c_\Omega C_{\alpha,N}^2}{\pi^2\ell^4} \int_{Q_{L,\ell}} 
				\sum_{A \subseteq \{1,\ldots,N\}} \!\!\!\!|A|
				\int_{\R^{2N}} |\psi|^2
				\prod_{k \in A} \chi_{B_\ell(\by)}(\bx_k) 
				\prod_{k \notin A} \chi_{B_\ell(\by)^c}(\bx_k) \,d\sx \,d\by \\
			\quad -\ \frac{c_\Omega C_{\alpha,N}^2}{\pi^2\ell^4} \int_{Q_{L,\ell}} 
				\int_{\R^{2N}} |\psi|^2 \sum_{A \subseteq \{1,\ldots,N\}}
				\prod_{k \in A} \chi_{B_\ell(\by)}(\bx_k) 
				\prod_{k \notin A} \chi_{B_\ell(\by)^c}(\bx_k) \,d\sx \,d\by
		\end{multline*}
		We then revert the above procedure using 
		\eqref{partition_of_unity_B} and
		\begin{multline*}
			\sum_{A \subseteq \{1,\ldots,N\}} \underbrace{|A|}_{=\sum_{j \in A}}
				\prod_{k \in A} \chi_{B_\ell(\by)}(\bx_k) 
				\prod_{k \notin A} \chi_{B_\ell(\by)^c}(\bx_k) \\
			= \sum_{j=1}^N \sum_{A \subseteq \{1,\ldots,N\}} \chi_{B_\ell(\by)}(\bx_j)
				\prod_{k \in A} \chi_{B_\ell(\by)}(\bx_k) 
				\prod_{k \notin A} \chi_{B_\ell(\by)^c}(\bx_k) \\
			= \sum_{j=1}^N \chi_{B_\ell(\by)}(\bx_j),
		\end{multline*}
		and hence 
		\begin{multline*}
			2T_{\uA} \ge \frac{c_\Omega C_{\alpha,N}^2}{\pi^2\ell^4} 
				\int_{\R^{2N}} |\psi|^2 
				\int_{Q_{L,\ell}} \left( 
					\sum_{j=1}^N \chi_{B_\ell(\by)}(\bx_j) - 1
					\right) d\by \,d\sx \\
			\ge \frac{c_\Omega C_{\alpha,N}^2}{\pi^2\ell^4} \left( 
				N\pi\ell^2 - (L+2\ell)^2 \right) 
			= c_\Omega C_{\alpha,N}^2 g(\gamma) \frac{N^2}{L^2},
		\end{multline*}
		setting $\ell := \frac{\gamma L}{\sqrt{N}}$ 
		with $\gamma > 0$, and
		$$
			g(\gamma) := \frac{1}{\pi\gamma^2} - \frac{(1+2\gamma/\sqrt{N})^2}{\pi^2\gamma^4}.
		$$
		This function has a positive maximum $c_N \ge 0.0023$ for $N \ge 2$,
		while $c_N \to 1/4$ as $N \to \infty$.
		In the case $N=1$ we are 
		left with the trivial bound on the energy.
	\end{proof}

	Note that for small $N$ it might be preferable to 
	use the Dirichlet bound for bosons instead
	(from the diamagnetic inequality \eqref{diamag_pointwise}):
	$$
		\sum_j \int_{Q_L^N} |D_j \psi|^2 \,d\sx 
		\ge \sum_j \int_{Q_L^N} |\nabla_j |\psi||^2 \,d\sx
		\ge 2N\frac{\pi^2}{L^2} \int_{Q_L^N} |\psi|^2 \,d\sx.
	$$
	
	This resulting rough --- but nevertheless non-trivial --- 
	bound \eqref{anyon_gas_bound} for the ideal anyon gas
	should perhaps be compared with the 
	`correct' ground state energy per unit volume
	for a non-interacting gas of density $\bar{\rho}$ of 
	spinless fermions in two dimensions,
	which in our conventions is
	\begin{equation} \label{fermion_gas_energy}
		\lim_{N,L \to \infty} \frac{T_0}{L^2} =
		\pi \bar{\rho}^2. 
	\end{equation}
	Suggestions for improving the constant $c_\Omega$
	were given in \cite{Lundholm-Solovej:anyon}.
	Also note for comparison that the bound 
	\eqref{local_exclusion_A} for the disk holds with
	$c_\Omega = \pi\xi'^2 \approx 10.65$ in the case that $\alpha = 1$
	(cp. with Lemma \ref{lem:local_exclusion_F}), 
	where $\xi'$ denotes the first zero of the derivative
	of the Bessel function $J_1$.

\subsection{Local exclusion in one dimension}

	As for the anyons \cite{Lundholm-Solovej:anyon} we 
	consider the local increase in energy due to the statistical interaction,
	starting from a pairwise relative consideration.

	\begin{lem}[Pairwise relative LL] \label{lem:relative_S}
		For $\eta \ge 0$, consider the operator:
		$$
			H_\eta = -\frac{d^2}{dr^2} + 2\eta \delta_0(r)
		$$
		on a symmetric interval $[-l,l]$ 
		with Neumann boundary conditions at $r=\pm l$.
		Then $H_\eta \ge \xi_{\uS}(\eta l)^2/l^2$,
		where $\xi_{\uS}(y)$ is defined as the smallest 
		nonnegative solution to
		$\xi \tan \xi = y$ given $y \ge 0$.
	\end{lem}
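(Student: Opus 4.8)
\emph{Proof proposal.} The plan is to show that the bottom of the spectrum of $H_\eta$ is exactly $\xi_{\uS}(\eta l)^2/l^2$, attained by the ground state, and hence to read off the claimed operator bound. First I would fix the functional-analytic setting: $H_\eta$ is the self-adjoint operator associated with the closed nonnegative quadratic form
$$
    q_\eta(\psi) = \int_{-l}^{l} |\psi'(r)|^2 \,dr + 2\eta |\psi(0)|^2,
    \qquad \psi \in H^1([-l,l]),
$$
where the point evaluation $\psi(0)$ is a well-defined continuous functional on $H^1([-l,l])$ by the one-dimensional Sobolev embedding $H^1 \hookrightarrow C^0$. Since this embedding is also compact into $L^2([-l,l])$, the form has compact resolvent, so $H_\eta$ has purely discrete spectrum and
$$
    \inf \sigma(H_\eta) = \inf_{0 \neq \psi \in H^1([-l,l])} \frac{q_\eta(\psi)}{\|\psi\|_{L^2}^2},
$$
the infimum being attained by a genuine ground state. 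It therefore suffices to locate the smallest eigenvalue.

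Next I would analyze the eigenvalue problem. A nonzero eigenfunction with eigenvalue $\lambda = k^2 \ge 0$ (note $\lambda \ge 0$ since $q_\eta \ge 0$ for $\eta \ge 0$) solves $-\psi'' = k^2 \psi$ on $(-l,0) \cup (0,l)$, is continuous at $0$, obeys the Neumann conditions $\psi'(\pm l) = 0$, and satisfies the jump condition $\psi'(0^+) - \psi'(0^-) = 2\eta\,\psi(0)$ read off from the delta interaction. Because both $q_\eta$ and the $L^2$-norm are invariant under the reflection $r \mapsto -r$, the spectrum decomposes into an even and an odd sector. An odd eigenfunction vanishes at $0$, so it does not feel the interaction and reduces to the free Neumann problem on $[0,l]$ with a Dirichlet node at $0$; its lowest eigenvalue is $\pi^2/(4l^2)$. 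An even eigenfunction must be of the form $\psi(r) = \cos\!\big(k(l-|r|)\big)$, which automatically satisfies $\psi'(\pm l)=0$; imposing the jump condition, and using $\psi'(0^+) = -\psi'(0^-) = k\sin(kl)$ together with $\psi(0)=\cos(kl)$, gives $k\sin(kl) = \eta\cos(kl)$, that is, with $\xi := kl$,
$$
    \xi \tan \xi = \eta l.
$$

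Finally I would extract the smallest root. On $[0,\pi/2)$ the map $\xi \mapsto \xi\tan\xi$ is continuous and strictly increasing from $0$ to $+\infty$, so for every $\eta l \ge 0$ there is a unique root $\xi_{\uS}(\eta l) \in [0,\pi/2)$, while any further even root satisfies $\xi \ge \pi$; hence $\xi_{\uS}(\eta l)^2/l^2$ is the lowest even eigenvalue. Since $\xi_{\uS}(\eta l) < \pi/2$, this value lies strictly below the lowest odd eigenvalue $\pi^2/(4l^2)$, so the ground state is even and
$$
    \inf \sigma(H_\eta) = \frac{\xi_{\uS}(\eta l)^2}{l^2},
$$
which is the asserted bound $H_\eta \ge \xi_{\uS}(\eta l)^2/l^2$. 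The degenerate case $\eta=0$ fits this scheme with $\xi_{\uS}(0)=0$ and the constant Neumann ground state.

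The routine parts are the explicit ODE solution and the monotonicity of $\xi\tan\xi$. The one point that genuinely needs care---and which I view as the main obstacle---is establishing that the \emph{even} sector furnishes the true bottom of the spectrum rather than merely an upper bound: this rests precisely on the confinement $\xi_{\uS}(\eta l) < \pi/2$, guaranteeing that the even ground state undercuts every odd mode, together with the compact-resolvent argument that legitimizes passing from a trial-function value to the actual infimum of $\sigma(H_\eta)$.
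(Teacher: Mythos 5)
Your proof is correct and follows essentially the same route as the paper: both reduce to the explicit Sturm--Liouville eigenvalue problem with the $\delta$-jump condition at $0$ and Neumann conditions at $\pm l$, arriving at the transcendental equation $\xi\tan\xi=\eta l$ for the lowest eigenvalue. The only difference is that you additionally verify that the odd sector (lowest eigenvalue $\pi^2/(4l^2)$) lies above the even ground state, whereas the paper simply restricts to symmetric $u(r)=u(-r)$ from the outset --- which is what the application requires, since the relative wave function is symmetric --- so your version proves the marginally stronger statement for the full operator on $L^2([-l,l])$.
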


	\begin{proof}
		Note that for $\eta\ge 0$ there are only non-negative eigenvalues
		$\lambda$ of $H_\eta$.
		The general solution to the corresponding boundary value problem
		\begin{eqnarray*}
			-u'' &=& \lambda u, \quad \text{on $(-l,0) \cup (0,l)$}, \\
			u'(0^+) = -u'(0^-) &=& \eta u(0), \\
			u'(l) = u'(-l) &=& 0,
		\end{eqnarray*}
		with $u(r) = u(-r)$ and $\lambda \ge 0$,
		is $u(r) = C\left( \cos \sqrt{\lambda}r 
			+ \frac{\eta}{\sqrt{\lambda}} \sgn r \sin \sqrt{\lambda}r \right)$.
		The condition at $r=\pm l$ demands 
		$\sqrt{\lambda} \tan \sqrt{\lambda}l = \eta$,
		i.e. $\lambda = \xi(\eta l)^2/l^2$
		is the smallest eigenvalue.
	\end{proof}

	\begin{figure}[t]
		\centering
		\includegraphics[scale=0.29]{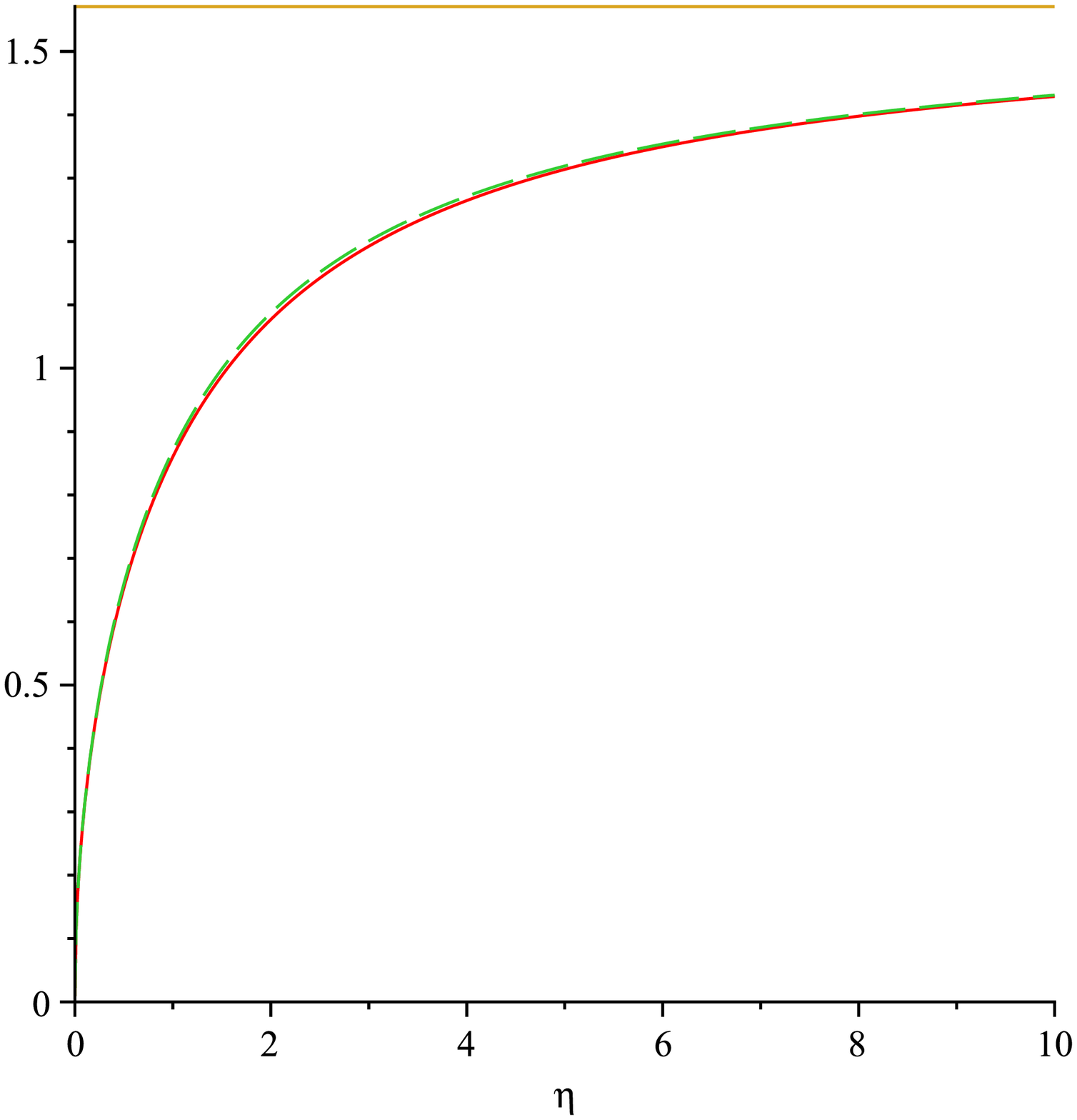}
		\includegraphics[scale=0.29]{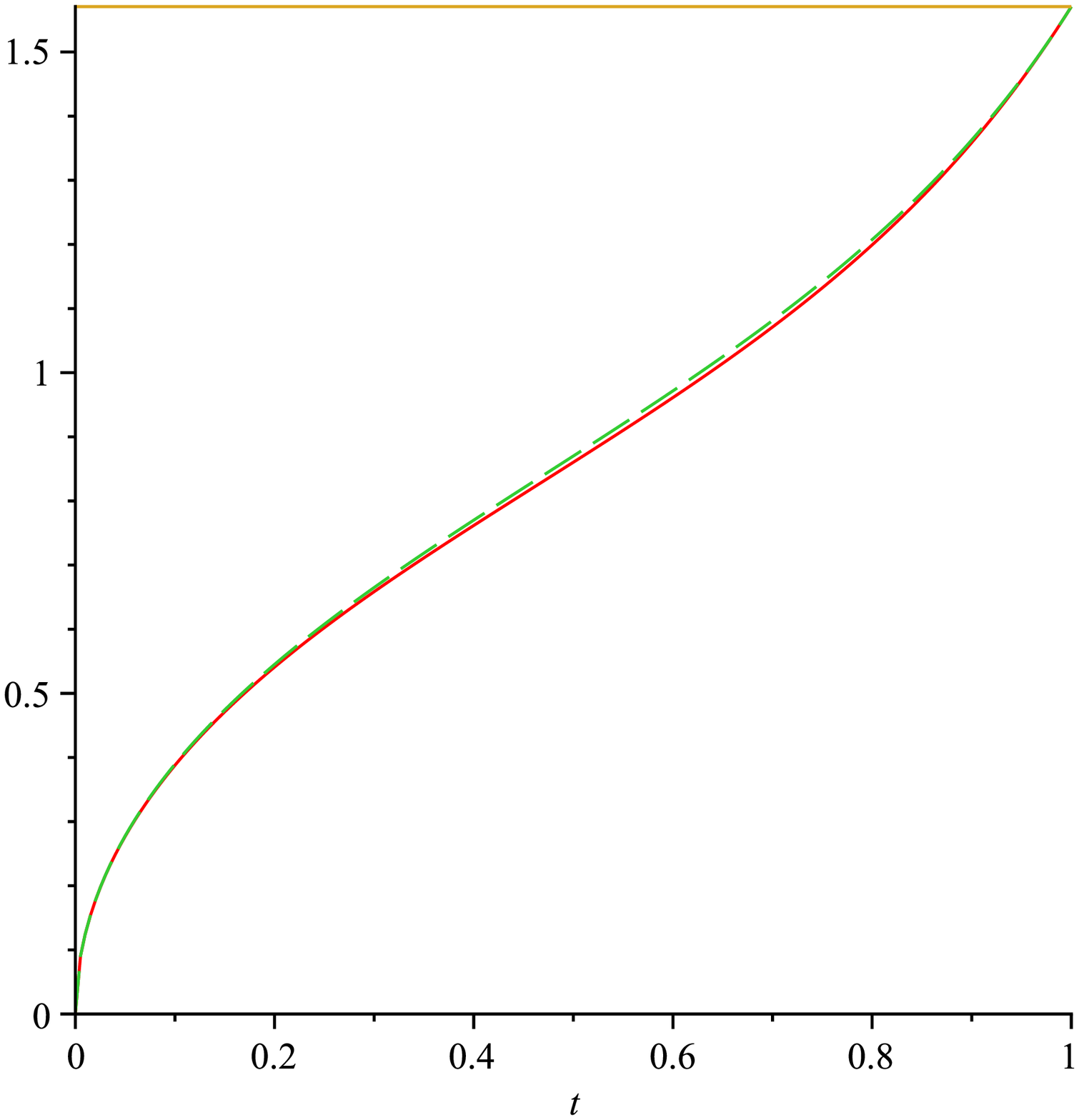}
		\caption{Plot of $\xi_{\uS}(\eta)$ (red solid) and 
		$\arctan \sqrt{\eta + 4\eta^2/\pi^2}$ (green dashed) 
		as a function of $\eta$ resp. $t=\frac{2}{\pi}\arctan(\eta)$.}
		\label{fig:xiS}
	\end{figure}
	
	Note that $\xi_{\uS}$ is monotone 
	and $\xi_{\uS}(y) \sim \sqrt{y}$ for small $y$,
	while $\xi_{\uS}(y) \to \frac{\pi}{2}$ as $y \to \infty$.
	A numerically very good approximation to $\xi_{\uS}$ for all $y$ is
	(see Figure \ref{fig:xiS})
	\begin{equation} \label{approx_S}
		\xi_{\uS}(y) \approx \arctan \sqrt{y + \frac{4}{\pi^2}y^2}.
	\end{equation}
	We shall also find use for the following additional observations:
	\begin{lem} \label{lem:Schroedinger_properties}
		The map $y \mapsto \xi_{\uS}(y)^2$, $y \ge 0$, is monotone and concave,
		so that for $\eta \ge 0$
		\begin{equation} \label{Schroedinger_concavity}
		\begin{aligned}
			\xi_{\uS}(\eta/x)^2 x &\ge \xi_{\uS}(\eta)^2, \quad x \ge 1, \\
			\xi_{\uS}(\eta/x)^2 x &\le \xi_{\uS}(\eta)^2, \quad 0 \le x \le 1.
		\end{aligned}
		\end{equation}
		Furthermore, the map 
		$x \mapsto \xi_{\uS}(\eta/x)^2 x^3$, $x \ge 0$,
		is monotone and convex for any $\eta \ge 0$.
	\end{lem}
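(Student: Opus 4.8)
The plan is to work throughout in the variable $\xi := \xi_{\uS}(y) \in [0,\pi/2)$, related to $y \ge 0$ by the smooth strictly increasing bijection $g(\xi) := \xi\tan\xi$ of $[0,\pi/2)$ onto $[0,\infty)$ (with $g(0)=0$), so that $\xi_{\uS} = g^{-1}$ is smooth and strictly increasing on $[0,\infty)$; monotonicity of $F(y) := \xi_{\uS}(y)^2$ is then immediate. All remaining claims reduce to sign conditions on derivatives of $F$, which I would compute by the chain rule using $d\xi/dy = 1/g'(\xi)$: namely $F'(y) = 2\xi/g'(\xi)$ and $F''(y) = 2\bigl(g'(\xi)-\xi g''(\xi)\bigr)/g'(\xi)^3$, where $g'(\xi) = \tan\xi + \xi\sec^2\xi > 0$ and $g''(\xi) = 2\sec^2\xi(1+\xi\tan\xi)$.

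For the concavity of $F$ I would observe that, since $g' > 0$, the sign of $F''$ equals the sign of $g' - \xi g'' = (\tan\xi - \xi\sec^2\xi) - 2\xi^2\sec^2\xi\tan\xi$. The second term is manifestly negative, and the bracket $h(\xi) := \tan\xi - \xi\sec^2\xi$ satisfies $h(0)=0$ and $h'(\xi) = -2\xi\sec^2\xi\tan\xi < 0$ on $(0,\pi/2)$, hence $h < 0$. Thus $g' - \xi g'' < 0$, so $F'' < 0$ and $F$ is concave (and increasing), with $F(0)=0$. The two inequalities in \eqref{Schroedinger_concavity} are then both instances of the elementary fact that a concave $F$ with $F(0) \ge 0$ is star-shaped, i.e. $F(\lambda y) \ge \lambda F(y)$ for $\lambda \in [0,1]$: the case $x \ge 1$ follows with $\lambda = 1/x$ and $y = \eta$, and the case $0 \le x \le 1$ with $\lambda = x$ and $y = \eta/x$.

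The final and most laborious claim is convexity (and monotonicity) of $G(x) := \xi_{\uS}(\eta/x)^2 x^3 = F(u) x^3$, $u := \eta/x$ (the case $\eta = 0$ being trivial, as then $G \equiv 0$). Differentiating and writing everything as a function of $u$ gives $G'(x) = x^2\bigl(3F - uF'\bigr)$ and $G''(x) = x\bigl(6F - 4uF' + u^2F''\bigr)$. Monotonicity is easy: the elasticity is $uF'/F = 2\tan\xi/g'(\xi) < 2$ (since $g' > \tan\xi$), so $3F - uF' > 0$ and $G' > 0$. For convexity I would substitute the expressions for $F,F',F''$ into $6F - 4uF' + u^2F''$, factor out $\xi^2/g'(\xi)^3 > 0$, and reduce to showing positivity of the trigonometric polynomial $P(\xi) := 6g'^3 - 8\tan\xi\, g'^2 + 2\tan^2\xi\,(g' - \xi g'')$.

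The hard part is the algebra establishing $P \ge 0$, because $P$ contains the negative factor $g' - \xi g''$ and exhibits several near-cancellations. The key simplification I expect is that, after expanding with $g' = \tan\xi + \xi\sec^2\xi$ and repeatedly using $\sec^2\xi = 1 + \tan^2\xi$, the lower-order terms cancel and $P$ collapses to a manifestly nonnegative form; carrying out the expansion I anticipate
\begin{equation*}
	P(\xi) = 2\xi^2\sec^2\xi\,\bigl(5\tan\xi + 3\tan^3\xi + 3\xi\sec^4\xi\bigr),
\end{equation*}
which is strictly positive for $\xi \in (0,\pi/2)$. Hence $G'' > 0$ and $G$ is convex on $(0,\infty)$, and by continuity on $[0,\infty)$. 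This collapse of $P$ to a sum of positive terms is the crux; once it is verified the three assertions of the lemma follow as above.
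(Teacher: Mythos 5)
Your proof is correct and follows essentially the same route as the paper: implicit differentiation of $\xi\tan\xi=y$ (you work in the variable $\xi$ via $g(\xi)=\xi\tan\xi$, the paper in $y$, but the computations are equivalent), followed by the star-shapedness consequence of concavity with $F(0)=0$. For the final claim the paper only remarks that convexity ``follows by similar analysis,'' whereas you carry out the computation explicitly; I have checked your reduction to $P(\xi)=6g'^3-8\tan\xi\,g'^2+2\tan^2\xi\,(g'-\xi g'')$ and the collapse $P(\xi)=2\xi^2\sec^2\xi\,\bigl(5\tan\xi+3\tan^3\xi+3\xi\sec^4\xi\bigr)$, and both are correct, so your argument is complete.
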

	\begin{proof}
		Differentiating the identity $\xi\tan\xi = y$ 
		w.r.t. $y>0$
		we obtain
		$\xi' (\xi^2 + y(y+1)) = \xi$, 
		hence $\xi'(y) > 0$. 
		Differentiating once more gives
		$$
			\xi''(\xi^2 + y(y+1)) = -2\xi'(\xi\xi' + y),
		$$
		so
		$$
			(\xi^2)'' = 2\xi'^2 + 2\xi\xi'' 
			= - \frac{4\xi^2}{\xi^2 + y(y+1)} \left(
				\frac{\xi^2}{\xi^2 + y(y+1)} + y - \frac{1}{2}
			\right) < 0,
		$$
		since the expression in brackets is positive for $y>0$. 
		The estimates 
		\eqref{Schroedinger_concavity} then follow since 
		$\xi(0)=0$.
		The proof of convexity for $\xi(\eta/x)^2 x^3$
		follows by similar analysis.
	\end{proof}

	\begin{lem}[Pairwise relative CS] \label{lem:relative_H}
		For $\alpha \ge 0$, consider nonnegative realizations of the operator:
		$$
			H_\alpha = -\frac{d^2}{dr^2} + \frac{\alpha(\alpha-1)}{r^2}
		$$
		on a symmetric interval $[-l,l]$ 
		with Neumann boundary conditions at $r=\pm l$.
		(In the case $-1/2 < \alpha < 3/2$ 
		one has to pick the correct self-adjoint extension,
		matching the b.c. \eqref{boundary_cond_H}
		--- see below, and 
		Section \ref{sec:identical_1D}.)
		Then $H_\alpha \ge \xi_{\uH}(\alpha)^2/l^2$,
		where $\xi_{\uH}(\alpha)$ is defined as the smallest 
		nonnegative solution to
		$$
			J(\xi) + 2\xi J'(\xi) = 0,
		$$
		where $J = J_{\alpha - \frac{1}{2}}$ is the Bessel function
		of order $\alpha - \frac{1}{2}$.
	\end{lem}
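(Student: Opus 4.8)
The plan is to follow the strategy of Lemma~\ref{lem:relative_S} for the Lieb--Liniger case: solve the eigenvalue problem for $H_\alpha$ explicitly and read off the lowest eigenvalue, which by nonnegativity of the chosen realization then gives the stated operator bound. Since the statistics potential $\alpha(\alpha-1)/r^2$ is even and the relevant relative wave function is symmetric (bosonic), I would restrict to even functions $u(r)=u(-r)$ and reduce the problem on $[-l,l]$ to the half-line eigenvalue problem
\begin{equation*}
	-u'' + \frac{\alpha(\alpha-1)}{r^2} u = \lambda u \quad\text{on } (0,l), \qquad u'(l)=0,
\end{equation*}
where at the endpoint $r=0$ the self-adjoint realization is fixed by the boundary behavior \eqref{boundary_cond_H}, i.e.\ $u(r) \sim r^\alpha$ as $r\to 0^+$ (for $\alpha \ge 3/2$ this is automatic, being the Friedrichs extension, while for $-1/2 < \alpha < 3/2$ it is the prescribed choice among competing self-adjoint extensions).

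First I would reduce the ODE to Bessel's equation. Writing $u(r) = \sqrt{r}\,w(r)$ and using the algebraic identity $\alpha(\alpha-1) + \tfrac14 = (\alpha - \tfrac12)^2$, a direct computation turns the equation into
\begin{equation*}
	r^2 w'' + r w' + \big(\lambda r^2 - (\alpha - \tfrac12)^2\big)w = 0,
\end{equation*}
which in the variable $s = \sqrt{\lambda}\,r$ is precisely Bessel's equation of order $\nu := \alpha - \tfrac12$. The solution consistent with the boundary behavior at $r=0$ is $u(r) = \sqrt{r}\,J_\nu(\sqrt{\lambda}\,r)$, since $J_\nu(s) \sim s^\nu$ forces $u(r) \sim r^{\alpha}$; the second, linearly independent solution (behaving like $r^{1-\alpha}$, associated with the $\alpha \mapsto 1-\alpha$ symmetry) is excluded by the choice of extension.

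Next I would impose the Neumann condition at $r=l$. Differentiating gives
\begin{equation*}
	u'(r) = \tfrac12 r^{-1/2} J_\nu(\sqrt{\lambda}\,r) + \sqrt{\lambda}\,\sqrt{r}\,J_\nu'(\sqrt{\lambda}\,r),
\end{equation*}
so $u'(l)=0$ becomes, after multiplying by $2\sqrt{l}$ and setting $\xi := \sqrt{\lambda}\,l$, the transcendental condition $J_\nu(\xi) + 2\xi J_\nu'(\xi) = 0$ appearing in the statement. The smallest positive root $\xi = \xi_{\uH}(\alpha)$ then corresponds to the lowest eigenvalue $\lambda = \xi_{\uH}(\alpha)^2/l^2$, whence $H_\alpha \ge \xi_{\uH}(\alpha)^2/l^2$.

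The main obstacle I anticipate is not the Bessel reduction but the spectral bookkeeping at $r=0$: one must justify that the prescribed realization is nonnegative (for the attractive range $0<\alpha<1$ this rests on the borderline Hardy bound $(\alpha-\tfrac12)^2\ge 0$, i.e.\ $\alpha(\alpha-1) \ge -1/4$) and that the smallest positive root $\xi_{\uH}(\alpha)$ really furnishes the \emph{ground} state rather than an excited eigenvalue. The latter I would settle by a standard Sturm--Liouville/nodal argument: the eigenfunction $\sqrt{r}\,J_\nu(\sqrt{\lambda}\,r)$ associated with the first root is nodeless on $(0,l)$, hence lowest in the even sector, and for the repulsive range $\alpha\ge 1$ the two half-intervals decouple so that this value also bounds the operator on the full space.
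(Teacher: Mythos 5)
Your proposal is correct and follows essentially the same route as the paper's proof: the ansatz $u=r^{1/2}w$ reducing to Bessel's equation of order $\alpha-\tfrac12$, selection of the $J_{\alpha-1/2}$ branch by the boundary condition \eqref{boundary_cond_H} (automatic for $\alpha\ge 3/2$, a domain choice otherwise), and the Neumann condition at $r=l$ yielding $J(\xi)+2\xi J'(\xi)=0$. Your added remarks on nonnegativity via the borderline Hardy bound and the nodal argument identifying the first root as the ground state only make explicit what the paper leaves implicit.
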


	\begin{proof}
		The general solution to the corresponding boundary value problem
		on $[0,l]$:
		\begin{eqnarray*}
			-u'' + \frac{\alpha(\alpha-1)}{r^2}u &=& \lambda u, \quad \text{on $(0,l)$}, \\
			u'(l) &=& 0,
		\end{eqnarray*}
		is via the ansatz $u := r^{\frac{1}{2}}v$ found to be
		(we have assumed $\lambda \ge 0$ 
		and focus on the non-degenerate case $\alpha \neq 1/2$)
		$$
			u(r) = C_1 r^{\frac{1}{2}} J_{\alpha - \frac{1}{2}}(\sqrt{\lambda} r)
			+ C_2 r^{\frac{1}{2}} J_{\frac{1}{2} - \alpha}(\sqrt{\lambda} r).
		$$
		For $\alpha \ge 3/2$, $C_2=0$ is enforced by 
		the requirement of square-integrability,
		while for $\alpha < 3/2$ this choice is taken
		to define the domain of $H_\alpha$.
		Note that for $\alpha > 1/2$ the choice $C_2=0$ is required in order for 
		$r^\alpha (r^{-\alpha}u)' \in L^2([0,l])$
		and hence for (a corresponding two-particle version of) $u$
		to be in the domain of $Q_\alpha^{\textup{max}}$.
		The Neumann condition at $r=l$ then demands 
		$J_{\alpha-\frac{1}{2}}(\sqrt{\lambda}l) + 2\sqrt{\lambda}l J_{\alpha-\frac{1}{2}}'(\sqrt{\lambda}l) = 0$,
		i.e. $\lambda = \xi(\alpha)^2/l^2$
		is the smallest eigenvalue.
	\end{proof}

	\begin{figure}[t]
		\centering
		\includegraphics[scale=0.4]{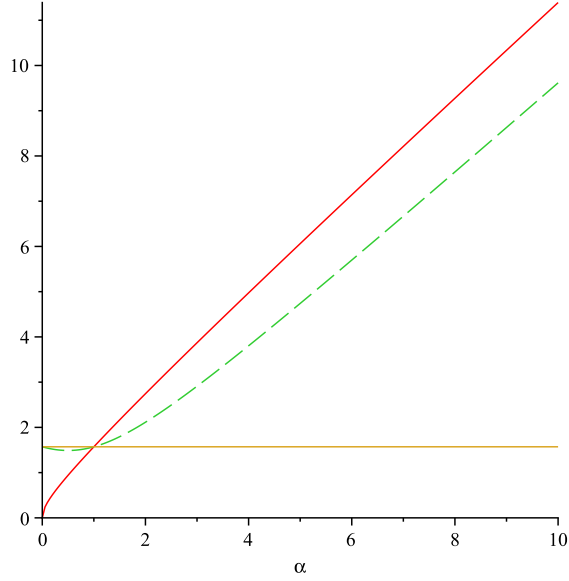}
		\caption{Plot of $\xi_{\uH}(\alpha)$ (red solid) and 
		$\sqrt{\pi^2/4+\alpha(\alpha-1)}$ (green dashed) 
		as a function of $\alpha \ge 0$.}
		\label{fig:xiH}
	\end{figure}
	
	Note that according to the above $\xi_{\uH}(0) = 0$,
	while $\xi_{\uH}(1) = \frac{\pi}{2}$.	
	A numerically very good approximation to $\xi_{\uH}$ on the 
	interval $[0,1]$ is
	\begin{equation} \label{approx_H_small}
		\xi_{\uH}(\alpha) \approx \sqrt{\alpha + \left(\frac{\pi^2}{4} - 1\right)\alpha^2},
		\qquad 0 \le \alpha \le 1.
	\end{equation}
	However, note that $\xi_{\uH}$ continues to grow
	approximately like $\xi_{\uH}(\alpha) \sim \alpha$ 
	for $\alpha > 1$
	(see Figure \ref{fig:xiH}).
	In any case we have 
	from the lowest eigenvalue of
	the Laplacian with Dirichlet boundary condition at $r=0$,
	Neumann at $r=l$,
	and with a potential $\ge \alpha(\alpha-1)/l^2$:
	\begin{equation} \label{approx_H_large}
		\xi_{\uH}(\alpha) \ge \sqrt{ \frac{\pi^2}{4} + \alpha(\alpha-1)},
		\qquad \text{for $\alpha \ge 1$}.
	\end{equation}

	Now, consider the Neumann kinetic energy for $n \ge 2$ particles 
	on an interval $Q = [a,b]$. Using the identity
	\begin{equation} \label{summation_identity}
		n \sum_{j=1}^n |\partial_j \psi|^2 = 
		\sum_{1 \le j < k \le n} |\partial_j \psi - \partial_k \psi|^2
		+ \left| \sum_{j=1}^n \partial_j \psi \right|^2
	\end{equation}
	to separate out the center-of-mass motion, we find
	\begin{multline} \label{relative_kinetic_bound}
		\!\!\!\!\!\!\!
		\int_{Q^n} T_{\uS/\uH}^n(\psi;\sx) \,d\sx 
		\ge \int_{Q^n} \left( 
			\frac{1}{2n} \sum_{j < k} |\partial_j \psi - \partial_k \psi|^2
			+ \sum_{j<k} V_{\uS/\uH}(x_j - x_k) |\psi|^2 \right) d\sx \\
		\ge \frac{2}{n} \sum_{j<k}
		\int_{Q^{n-2}} \int_Q \int_{[-\delta(R),\delta(R)]}
			\left( |\partial_r \psi|^2 + V_{\uS/\uH}(r) |\psi|^2 \right) dr \,dR \,d\sx' \\
		\ge \frac{2}{n} \sum_{j<k}
		\int_{Q^{n-2}} \int_Q \frac{\xi_{\uS/\uH}^2}{\delta(R)^2}
			\int_{[-\delta(R),\delta(R)]} |\psi|^2 \,dr \,dR \,d\sx',
	\end{multline}
	where we for each pair $(j,k)$ 
	split the parameterization of $Q^n$ into $n-2$ variables 
	$\sx' = (x_1,\ldots,x\!\!\!\!\diagup\!_j,\ldots,x\!\!\!\!\diagup\!_k,\ldots,x_N)$,
	a pairwise center-of-mass coordinate $R := (x_j+x_k)/2 \in Q$,
	and a relative coordinate 
	$r := (x_j - x_k) \in [-\delta(R),\delta(R)]$. 
	We have set $\delta(R) := 2\min \{|R-a|,|R-b|\}$,
	and in the above we also used that $V_{\uS/\uH} \ge 0$
	followed by Lemma \ref{lem:relative_S} or \ref{lem:relative_H}.

	In the Calogero-Sutherland case $\xi_{\uH} = \xi_{\uH}(\alpha)$ is constant
	on the interval $Q$ and 
	a local exclusion principle follows
	in a similar way as for the anyons:
	
	\begin{lem}[Local exclusion for CS] \label{lem:local_exclusion_H}
		We have for $\alpha \ge 1$
		\begin{equation} \label{local_exclusion_H}
			\int_{Q^n} T_{\uH}^n(\psi;\sx) \,d\sx
			\ \ge \ (n-1) \frac{\xi_{\uH}(\alpha)^2}{|Q|^2} 
				\int_{Q^n} |\psi|^2 \,d\sx,
		\end{equation}
		and hence, defining the local kinetic energy on $Q$
		for a normalized $N$-particle wave function 
		$\psi \in \Dc_{\uH}^\alpha$
		\begin{equation} \label{local_kinetic_energy_H}
			T_{\uH}^Q := \sum_{j=1}^N \int_{\R^N} \frac{1}{2} \left( |\partial_j \psi|^2 
				+ \sum_{(j\neq) k=1}^N V_{\uH}(x_j - x_k) |\psi|^2 \right) \chi_Q(x_j) \,d\sx,
		\end{equation}
		we obtain the local energy bound
		\begin{equation} \label{local_exclusion_density_H}
			T_{\uH}^Q \ \ge \ \frac{\xi_{\uH}(\alpha)^2}{|Q|^2} 
				\left( \int_Q \rho(x) dx \ - 1 \right)_+.
		\end{equation}
	\end{lem}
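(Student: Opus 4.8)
The plan is to first establish the $n$-particle bound \eqref{local_exclusion_H} directly from the relative kinetic estimate \eqref{relative_kinetic_bound} already derived above, and then lift it to the density bound \eqref{local_exclusion_density_H} by a partition-of-unity argument modeled on the proof of Theorem \ref{thm:anyon_gas}.

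For \eqref{local_exclusion_H} I start from the last line of \eqref{relative_kinetic_bound}. The essential simplification in the Calogero-Sutherland case is that $\xi_{\uH}=\xi_{\uH}(\alpha)$ is independent of the center-of-mass coordinate $R$. Since $R\in Q$ forces $\delta(R)=2\min\{|R-a|,|R-b|\}\le|Q|$, I may bound $\xi_{\uH}^2/\delta(R)^2\ge\xi_{\uH}(\alpha)^2/|Q|^2$. The map $(x_j,x_k)\mapsto(R,r)$ with $R=(x_j+x_k)/2$, $r=x_j-x_k$ has unit Jacobian, and the requirement that both $x_j,x_k$ lie in $Q=[a,b]$ is precisely $|r|\le\delta(R)$ together with $R\in Q$; hence for each pair the iterated integral $\int_{Q^{n-2}}\int_Q\int_{[-\delta(R),\delta(R)]}|\psi|^2\,dr\,dR\,d\sx'$ equals $\int_{Q^n}|\psi|^2\,d\sx$. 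Summing over the $\binom{n}{2}$ pairs and using $\tfrac{2}{n}\binom{n}{2}=n-1$ then yields \eqref{local_exclusion_H}.

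For \eqref{local_exclusion_density_H} I insert into $2T_{\uH}^Q$ the partition of unity $1=\prod_{k=1}^N(\chi_Q(x_k)+\chi_{Q^c}(x_k))=\sum_{A\subseteq\{1,\ldots,N\}}\prod_{k\in A}\chi_Q(x_k)\prod_{k\notin A}\chi_{Q^c}(x_k)$. On the term indexed by $A$ the configuration has exactly the particles of $A$ inside $Q$, so $\chi_Q(x_j)=1$ precisely when $j\in A$ and the sum over $j$ collapses to $j\in A$. Splitting $\sum_{j\in A}\sum_{k\neq j}V_{\uH}(x_j-x_k)=2\sum_{\{j,k\}\subseteq A}V_{\uH}+\sum_{j\in A,\,k\notin A}V_{\uH}$ and discarding the last, nonnegative, cross-sum (this is where $\alpha\ge1$, hence $V_{\uH}\ge0$, is used), each term is bounded below by $2T_{\uH}^{|A|}$ evaluated in the $A$-coordinates with the remaining coordinates fixed in $Q^c$. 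Applying \eqref{local_exclusion_H} to the $A$-integration over $Q^{|A|}$ gives $T_{\uH}^Q\ge\frac{\xi_{\uH}(\alpha)^2}{|Q|^2}\sum_A(|A|-1)_+\,p_A$, where $p_A:=\int_{\R^N}|\psi|^2\prod_{k\in A}\chi_Q\prod_{k\notin A}\chi_{Q^c}\,d\sx$ is the probability that $A$ is exactly the set of particles in $Q$.

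Finally I reassemble using $\sum_A p_A=\|\psi\|^2=1$ and $\sum_A|A|\,p_A=\int_Q\rho\,dx$, the latter from the combinatorial identity $\sum_A|A|\prod_{k\in A}\chi_Q\prod_{k\notin A}\chi_{Q^c}=\sum_j\chi_Q(x_j)$ used exactly as in \eqref{partition_of_unity_B}. Since $(|A|-1)_+\ge|A|-1$, this gives $T_{\uH}^Q\ge\frac{\xi_{\uH}(\alpha)^2}{|Q|^2}\left(\int_Q\rho-1\right)$, and since $T_{\uH}^Q\ge0$ (all integrands being nonnegative for $\alpha\ge1$) the positive part may be inserted, yielding \eqref{local_exclusion_density_H}. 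The main obstacle I anticipate is the measure-theoretic justification of the lifting step: one must verify that for a.e. fixed exterior coordinates the restriction of $\psi$ to the $A$-coordinates lies in the domain to which \eqref{local_exclusion_H} (through Lemma \ref{lem:relative_H}) applies, so that Fubini and the pairwise bound may be combined cleanly; the combinatorial bookkeeping of the potential is then routine once the factor of two in $2\sum_{\{j,k\}\subseteq A}V_{\uH}$ is tracked correctly.
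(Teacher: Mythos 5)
Your proposal is correct and follows essentially the same route as the paper: the bound \eqref{local_exclusion_H} is obtained from \eqref{relative_kinetic_bound} via $\delta(R)^{-2}\ge|Q|^{-2}$, and the density bound \eqref{local_exclusion_density_H} is lifted by exactly the partition-of-unity argument \eqref{partition_of_unity_Q}, dropping the nonnegative cross-terms of the potential and reassembling with $\sum_A|A|\prod_{k\in A}\chi_Q\prod_{k\notin A}\chi_{Q^c}=\sum_j\chi_Q(x_j)$. The factor-of-two bookkeeping and the use of $V_{\uH}\ge0$ (i.e. $\alpha\ge1$) are handled as in the paper, so no further changes are needed.
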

	\begin{proof}
		We use
		$\delta(R)^{-2} \ge |Q|^{-2}$
		in \eqref{relative_kinetic_bound} 
		to obtain \eqref{local_exclusion_H}.
		Then we proceed as for Lemma 8 in \cite{Lundholm-Solovej:anyon}
		(cp. also \cite{Dyson-Lenard:67} 
		and the proof of Theorem \ref{thm:anyon_gas} above), 
		inserting the partition of unity
		\begin{equation} \label{partition_of_unity_Q}
			1 = \sum_{A \subseteq \{1,\ldots,N\}} \prod_{l \in A} \chi_Q(x_l) \prod_{l \notin A} \chi_{Q^c}(x_l)
		\end{equation}
		in \eqref{local_kinetic_energy_H}:
		\begin{multline*}
			\!\!\!\!\!\!
			\sum_A \int_{\R^N} \frac{1}{2} \sum_{j \in A} \left( 
				|\partial_j \psi|^2 + \sum_{(j\neq)k=1}^N V_{\uH}(x_j-x_k) |\psi|^2
				\right) \prod_{l \in A} \chi_Q(x_l) \prod_{l \notin A} \chi_{Q^c}(x_l) \,d\sx \\
			\!\ge \sum_A \int_{(Q^c)^{N-|A|}} \int_{Q^{|A|}} \!\frac{1}{2} \!\left(
				\sum_{j \in A} |\partial_j \psi|^2 
				+ \!\!\sum_{j\neq k \in A} V_{\uH}(x_j - x_k) |\psi|^2
				\right) \prod_{l \in A} dx_l \prod_{l \notin A} dx_l \\
			\ge \sum_A (|A|-1) \frac{\xi_{\uH}(\alpha)^2}{|Q|^2} 
				\int_{(Q^c)^{N-|A|}} \int_{Q^{|A|}} |\psi|^2 \prod_{l \in A} dx_l \prod_{l \notin A} dx_l \\
			= \frac{\xi_{\uH}(\alpha)^2}{|Q|^2} \int_{\R^N} \left( \sum_{j=1}^N \chi_Q(x_j) - 1 \right) |\psi|^2 d\sx,
		\end{multline*}
		hence \eqref{local_exclusion_density_H},
		where we again used $V_{\uH} \ge 0$ and
		in the last step the partition of unity
		\eqref{partition_of_unity_Q}.
	\end{proof}

	In the Lieb-Liniger case, we observe from 
	Lemma \ref{lem:Schroedinger_properties} 
	with \\ $x := |Q|/\delta(R) \ge 1$ that
	\begin{equation} \label{Schroedinger_estimate}
		\frac{ \xi_{\uS}(\eta \delta(R))^2 }{ \delta(R)^2 } 
		\ge \frac{ \xi_{\uS}(\eta |Q|/x)^2 x }{ |Q|^2 } 
		\ge \frac{ \xi_{\uS}(\eta |Q|)^2 }{ |Q|^2 }
		\quad \text{for all $R \in Q$},
	\end{equation}
	and can thereafter proceed exactly as above 
	for the Calogero-Sutherland case.

	\begin{lem}[Local exclusion for LL] \label{lem:local_exclusion_S}
		We have for $\eta \ge 0$
		\begin{equation} \label{local_exclusion_S}
			\int_{Q^n} T_{\uS}^n(\psi;\sx) \,d\sx
			\ \ge \ (n-1) \frac{\xi_{\uS}(\eta |Q|)^2}{|Q|^2} 
				\int_{Q^n} |\psi|^2 \,d\sx,
		\end{equation}
		and hence, defining the 
		local kinetic energy on $Q$ for a normalized 
		$N$-particle wave function
		$\psi \in \Dc_{\uS}^\eta$
		\begin{equation} \label{local_kinetic_energy_S}
			T_{\uS}^Q := \sum_{j=1}^N \int_{\R^N} \frac{1}{2} \left( |\partial_j \psi|^2 
				+ \sum_{(j\neq)k=1}^N V_{\uS}(x_j - x_k) |\psi|^2 \right) \chi_Q(x_j) \,d\sx,
		\end{equation}
		we obtain the local energy bound
		\begin{equation} \label{local_exclusion_density_S}
			T_{\uS}^Q \ \ge \ \frac{\xi_{\uS}(\eta |Q|)^2}{|Q|^2} 
					\left( \int_Q \rho(x) dx \ - 1 \right)_+.
		\end{equation}
	\end{lem}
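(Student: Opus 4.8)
The plan is to follow the proof of Lemma~\ref{lem:local_exclusion_H} for the Calogero--Sutherland case almost verbatim, the only genuinely new feature being that the pairwise exclusion strength now depends on the length of the local relative interval rather than being the constant $\xi_{\uH}(\alpha)$. I would begin from the relative kinetic energy bound \eqref{relative_kinetic_bound}, which for the Lieb--Liniger potential $V_{\uS}$ and by Lemma~\ref{lem:relative_S} applied on each symmetric interval $[-\delta(R),\delta(R)]$ yields
$$
	\int_{Q^n} T_{\uS}^n(\psi;\sx)\,d\sx
	\ge \frac{2}{n}\sum_{j<k}\int_{Q^{n-2}}\int_Q
		\frac{\xi_{\uS}(\eta\delta(R))^2}{\delta(R)^2}
		\int_{[-\delta(R),\delta(R)]}|\psi|^2\,dr\,dR\,d\sx'.
$$
Here the coefficient $\xi_{\uS}(\eta\delta(R))^2/\delta(R)^2$ replaces the constant $\xi_{\uH}(\alpha)^2/|Q|^2$ of the Calogero--Sutherland argument.

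The key step is to make this coefficient independent of $R$. Since $\delta(R)\le|Q|$ for every $R\in Q$, I would set $x:=|Q|/\delta(R)\ge 1$ and invoke the monotonicity and concavity of $y\mapsto\xi_{\uS}(y)^2$ recorded in Lemma~\ref{lem:Schroedinger_properties}; this is precisely the estimate \eqref{Schroedinger_estimate}, giving $\xi_{\uS}(\eta\delta(R))^2/\delta(R)^2\ge\xi_{\uS}(\eta|Q|)^2/|Q|^2$ uniformly in $R$. This is the one place where the proof departs from the constant-strength Calogero--Sutherland case, and it is the main (though modest) obstacle; it is exactly why Lemma~\ref{lem:Schroedinger_properties} was isolated in advance.

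With the coefficient now the constant $\xi_{\uS}(\eta|Q|)^2/|Q|^2$, the remaining integrations over $(r,R,\sx')$ reassemble into $\int_{Q^n}|\psi|^2\,d\sx$ for each of the $\binom{n}{2}$ pairs, and the prefactor $\tfrac{2}{n}\cdot\tfrac{n(n-1)}{2}$ produces the factor $(n-1)$, establishing \eqref{local_exclusion_S}. Finally, to obtain the density bound \eqref{local_exclusion_density_S}, I would insert the partition of unity \eqref{partition_of_unity_Q} into the local energy \eqref{local_kinetic_energy_S}, apply \eqref{local_exclusion_S} to the particles of each subset $A$ lying in $Q$ (discarding, via $V_{\uS}\ge 0$, the cross-terms coupling particles inside and outside $Q$), and resum using $\sum_A|A|\prod_{l\in A}\chi_Q(x_l)\prod_{l\notin A}\chi_{Q^c}(x_l)=\sum_{j=1}^N\chi_Q(x_j)$ together with the normalization $\|\psi\|=1$, exactly as in the proof of Lemma~\ref{lem:local_exclusion_H}; the positive part then arises from $T_{\uS}^Q\ge 0$.
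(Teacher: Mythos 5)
Your proposal is correct and follows the paper's own argument essentially verbatim: the paper's proof is precisely ``use the estimate \eqref{Schroedinger_estimate} in \eqref{relative_kinetic_bound} and continue as in the proof of Lemma \ref{lem:local_exclusion_H}'', which is exactly the route you take, including the identification of the concavity estimate from Lemma \ref{lem:Schroedinger_properties} as the one genuinely new ingredient. Nothing to add.
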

	\begin{proof}
		Use the estimate \eqref{Schroedinger_estimate}
		in \eqref{relative_kinetic_bound}
		and continue as in the proof of Lemma \ref{lem:local_exclusion_H}.
	\end{proof}

\section{Local uncertainty} \label{sec:local_uncertainty}

	We shall use the following local form of the uncertainty principle
	on a $d$-dimensional cube $Q$.
	
	\begin{lem}[Local uncertainty principle] 
		\label{lem:local_uncertainty}
		For any $\ep \in (0,1)$ we have
		\begin{equation} \label{local_uncertainty}
			T^Q \ \ge \ 
			\frac{C_d'}{2} \ep^{1+4/d} \ \frac{\int_Q \rho^{1 + 2/d}}{(\int_Q \rho)^{2/d}}
				- \frac{C_d'}{2} \left(1 + \left(\frac{\ep}{1-\ep}\right)^{1+4/d} \right)
					\frac{\int_Q \rho}{|Q|^{2/d}},
		\end{equation}
		where the local kinetic energy $T^Q$ 
		denotes either $T_{\uS}^Q$ with $\eta \ge 0$,
		or $T_{\uH}^Q$ with $\alpha \ge 1$,
		or $T_{\uA}^Q$ with any $\alpha \in \R$,
		or simply $T_0^Q$, the free kinetic energy on a $d$-cube $Q$ 
		in any dimension $d \ge 1$
		(i.e. half of the l.h.s. of \eqref{uncertainty-cube}).
	\end{lem}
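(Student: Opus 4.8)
The plan is to reduce every case to the free local kinetic energy $T_0^Q$ and then prove a purely one-particle Gagliardo--Nirenberg--Sobolev inequality on the cube. First I would dispose of the interaction and magnetic terms. Since $\eta\ge 0$ and $\alpha\ge 1$ make the statistics potentials $V_{\uS},V_{\uH}$ nonnegative, the definitions \eqref{local_kinetic_energy_S} and \eqref{local_kinetic_energy_H} immediately give $T_{\uS}^Q\ge T_0^Q$ and $T_{\uH}^Q\ge T_0^Q$, where $T_0^Q:=\tfrac12\sum_j\int_{\R^{dN}}|\nabla_j\psi|^2\chi_Q(\bx_j)\,d\sx$. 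In the anyon case I would instead invoke the diamagnetic inequality (Lemma \ref{lem:diamagnetic_inequality}), $|\nabla_j|\psi||\le|D_j\psi|$ pointwise, so that $T_{\uA}^Q\ge T_0^Q[|\psi|]$; since $|\psi|$ carries the same one-particle density $\rho$, it suffices to establish \eqref{local_uncertainty} for $T_0^Q$ in all cases.

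Second, I would pass from $T_0^Q$ to the density via a local Hoffmann--Ostenhof inequality, $T_0^Q\ge\tfrac12\int_Q|\nabla\sqrt{\rho}|^2\,d\bx$. Writing $\rho=\sum_j\rho_j$ with $\rho_j$ the $j$-th marginal of $|\psi|^2$, Cauchy--Schwarz gives pointwise $|\nabla\rho_j(\bx)|^2\le 4\rho_j(\bx)\int|\nabla_j\psi|^2|_{\bx_j=\bx}\prod_{k\ne j}d\bx_k$, and the joint convexity of $(\rho_j,\nabla\rho_j)\mapsto|\nabla\rho_j|^2/\rho_j$ yields $|\nabla\rho|^2/\rho\le\sum_j|\nabla\rho_j|^2/\rho_j$; integrating over $Q$ and summing in $j$ recovers exactly $2T_0^Q$.

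The remaining and hardest step is the one-particle inequality on the cube: with $f:=\sqrt{\rho}$,
$$
\int_Q|\nabla f|^2 \ \ge\ C_d'\,\ep^{1+4/d}\,\frac{\int_Q f^{2+4/d}}{(\int_Q f^2)^{2/d}} \ -\ C_d'\Big(1+\big(\tfrac{\ep}{1-\ep}\big)^{1+4/d}\Big)\frac{\int_Q f^2}{|Q|^{2/d}} .
$$
By dimensional analysis all three terms scale identically under $Q\mapsto\lambda Q$, so I would first reduce to a unit cube. The input is the sharp global inequality, namely the $N=1$ instance of \eqref{kinetic-energy-inequality}, which in homogeneous form reads $\int_{\R^d}|\nabla f|^2\ge 2C_d'\,(\int f^{2+4/d})/(\int f^2)^{2/d}$ for $f\in H^1(\R^d)$.

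The difficulty is that $f\chi_Q\notin H^1(\R^d)$ and that the constant (zero-frequency) mode on $Q$ is annihilated by the gradient, which is precisely the source of the $\int_Q\rho/|Q|^{2/d}$ correction. I would handle this by truncating $\rho$ at the level $\ep\bar\rho_Q$, with $\bar\rho_Q:=|Q|^{-1}\int_Q\rho$, applying the global inequality to the excess part (extended off $Q$ without increasing its Dirichlet energy) and using Hölder/Young to reabsorb the low-density remainder into the correction term; the two pieces of the factor $1+(\ep/(1-\ep))^{1+4/d}$ reflect the near-average and below-threshold contributions, while the loss $\ep^{1+4/d}$ on the leading term is the price of the truncation. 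Getting the explicit constants, and verifying that the excess extends with controlled gradient, is where the real work lies.
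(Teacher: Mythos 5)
Your reduction to the free local kinetic energy $T_0^Q$ (dropping the nonnegative potentials $V_{\uS},V_{\uH}$, and using the diamagnetic inequality of Lemma \ref{lem:diamagnetic_inequality} for anyons, noting that $|\psi|$ has the same density $\rho$) is exactly the paper's first step. From there, however, the paper does not pass through a one-body Gagliardo--Nirenberg inequality at all: it simply applies the already-available many-body Lemma \ref{lem:uncertainty-cube}, which bounds $T_0^Q$ below by $\frac{C_d'}{2}(\int_Q\rho)^{-2/d}\int_Q[\rho^{1/2}-(\int_Q\rho/|Q|)^{1/2}]_+^{2+4/d}$, and then finishes with two elementary manipulations: dropping the positive part at the cost of $-(\int_Q\rho)^{1+2/d}/|Q|^{2/d}$, the triangle inequality in $L^{2+4/d}(Q)$, and the convexity bound $(A-B)^p\ge\ep^{p-1}A^p-(\tfrac{\ep}{1-\ep})^{p-1}B^p$ --- which is precisely where the factors $\ep^{1+4/d}$ and $1+(\ep/(1-\ep))^{1+4/d}$ come from. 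The constant $C_d'$ in the lemma is the one pinned down by the Neumann eigenvalue count \eqref{Neumann_eigenvalue_bound} entering Lemma \ref{lem:uncertainty-cube}, not the Lieb--Thirring constant of \eqref{kinetic-energy-inequality}.

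Your alternative route (Hoffmann--Ostenhof to reduce to $\int_Q|\nabla\sqrt\rho|^2$, then a local GNS inequality for $f=\sqrt\rho$) is structurally sound up to and including the Hoffmann--Ostenhof step, but the final step has a genuine gap as written. You cannot extend the truncated excess off $Q$ ``without increasing its Dirichlet energy'': a nonzero function that is constant on $Q$ has zero Dirichlet energy there but no $H^1(\R^d)$ extension with zero Dirichlet energy, so any quantitative extension (reflection plus cutoff, or a Stein-type operator) necessarily costs an additional term proportional to $|Q|^{-2/d}\int_Q f^2$ and multiplies the gradient term by an explicit factor; moreover the denominator $(\int_{\R^d}\tilde f^2)^{2/d}$ grows under extension. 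These losses can in principle be tracked, but they would produce a constant different from (and in practice much worse than) the stated $C_d'$, whereas the numerical value $C_1'=\pi^2/60$ is used explicitly downstream in the proof of Theorem \ref{thm:LT-Schroedinger} to get $C_{\uS}\ge 3\cdot 10^{-5}$. So to obtain the lemma as stated you should invoke Lemma \ref{lem:uncertainty-cube} rather than rebuild a local Sobolev inequality from the global one.
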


	This follows from the following lemma
	concerning the free kinetic energy for an arbitrary 
	$N$-particle wave function,
	which was proved 
	as Theorem 14 in \cite{Lundholm-Solovej:anyon}
	using methods from \cite{Rumin:10}.

	\begin{lem}[Uncertainty on a cube] \label{lem:uncertainty-cube}
		Let $Q$ be a cube in $\R^d$ with volume $|Q|$,
		and let $u \in H^1(\R^{dN})$ be an arbitrary $N$-particle wave function.
		Then 
		\begin{equation} \label{uncertainty-cube}
			\sum_{j=1}^N \int_{\R^{dN}} |\nabla_j u|^2 \,\chi_Q(\bx_j) \,d\sx
			\ \ge \ \frac{C_d'}{(\int_Q \rho)^{2/d}} \int_Q \left[
				\rho(\bx)^{\frac{1}{2}} - \left( \frac{\int_Q \rho}{|Q|} \right)^{\frac{1}{2}}
				\right]_+^{\frac{2(d+2)}{d}} \!\! d\bx,
		\end{equation}
		where $\rho(\bx) := \sum_{j=1}^N \int_{\R^{d(N-1)}} 
			|u(\bx_1,\ldots,\bx_{j-1},\bx,\bx_{j+1},\ldots,\bx_N)|^2 \prod_{k \neq j} d\bx_k$.
	\end{lem}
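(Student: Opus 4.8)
The plan is to decouple the many-body and one-body aspects of the bound and then run Rumin's spectral layer-cake argument on a single function. First I would reduce the left-hand side to a one-particle quantity by a \emph{localized} Hoffmann--Ostenhof inequality. Let $\tau(\bx)$ be the kinetic-energy density defined as the analogue of \eqref{one-particle-density} with $|\nabla_j u|^2$ in place of $|\psi|^2$, so that the left-hand side of \eqref{uncertainty-cube} equals $\int_Q \tau\,d\bx$. Since $\nabla\rho(\bx)$ is, pointwise in $\bx$, a sum over $j$ of $\int 2\re(\bar u\,\nabla_j u)$, the Cauchy--Schwarz estimate $|\nabla\rho(\bx)|\le 2\rho(\bx)^{1/2}\tau(\bx)^{1/2}$ holds with no symmetry assumption on $u$, giving $|\nabla\sqrt\rho|^2\le\tau$ and hence
\[
	\sum_{j=1}^N \int_{\R^{dN}} |\nabla_j u|^2 \chi_Q(\bx_j)\,d\sx \ \ge\ \int_Q |\nabla\sqrt\rho|^2\,d\bx.
\]
It therefore suffices to prove a single-function Gagliardo--Nirenberg--Sobolev inequality on $Q$ for $f:=\sqrt\rho\in H^1(Q)$, with $\|f\|_{L^2(Q)}^2=\int_Q\rho$. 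The rank-one character of this reduction is exactly what lets me avoid the operator-norm obstruction that would plague a direct density-matrix treatment of non-fermionic $u$ (where the one-body density operator can have large eigenvalues from bosonic bunching).

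For the single-function bound I would use the Neumann Laplacian $-\Delta_Q$ on $Q$, whose quadratic form is precisely $\int_Q|\nabla f|^2$, together with its spectral projections $\Pi_{\le t}$ and $\Pi_{>t}$ onto eigenvalues $\le t$ respectively $>t$. The layer-cake identity $\int_Q|\nabla f|^2=\int_0^\infty\|\Pi_{>t}f\|_{L^2(Q)}^2\,dt$ is the entry point of Rumin's method \cite{Rumin:10}. Writing $D_t(\bx):=\Pi_{\le t}(\bx,\bx)$ for the diagonal of the low-energy projection, Cauchy--Schwarz gives the pointwise bound $|(\Pi_{\le t}f)(\bx)|\le\|f\|_{L^2(Q)}D_t(\bx)^{1/2}$, whence $|(\Pi_{>t}f)(\bx)|\ge\big[|f(\bx)|-\|f\|_{L^2(Q)}D_t(\bx)^{1/2}\big]_+$ and
\[
	\int_Q|\nabla f|^2 \ \ge\ \int_Q \int_0^\infty \Big[ |f(\bx)| - \|f\|_{L^2(Q)}\,D_t(\bx)^{1/2} \Big]_+^2 \,dt\,d\bx.
\]

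The decisive input is the uniform local Weyl bound $D_t(\bx)\le |Q|^{-1}+C_d\,t^{d/2}$ on the cube. The constant eigenfunction alone already forces the floor $D_t(\bx)\ge|Q|^{-1}$, which on insertion produces exactly the subtracted threshold $\|f\|_{L^2(Q)}|Q|^{-1/2}=\bar\rho_Q^{1/2}$ with $\bar\rho_Q:=\int_Q\rho/|Q|$, while the term $C_d\,t^{d/2}$ (the product-of-cosines modes, counted by the Weyl law) controls the decay in $t$. Using $D_t^{1/2}\le|Q|^{-1/2}+\sqrt{C_d}\,t^{d/4}$ and carrying out the elementary inner $t$-integral at fixed $\bx$ — the integrand being supported on $t\lesssim([\rho^{1/2}-\bar\rho_Q^{1/2}]_+/\|f\|_{L^2(Q)})^{4/d}$, so that a single rescaling of $t$ factors out all dependence — collapses the double integral to a constant multiple of $(\int_Q\rho)^{-2/d}\int_Q[\rho^{1/2}-\bar\rho_Q^{1/2}]_+^{2+4/d}\,d\bx$, which is the claimed right-hand side.

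I expect the main obstacle to be quantitative rather than structural: making the diagonal bound sharp near $t=0$ so that the zero mode contributes precisely the coefficient $1$ in front of $\bar\rho_Q^{1/2}$ (rather than a larger dimensional constant coming from the nonzero modes that accumulate as $t\downarrow0$), and thereby identifying the universal constant $C_d'$ with its value in \eqref{kinetic-energy-inequality}. The reduction step and the layer-cake manipulation are routine once the localized Hoffmann--Ostenhof inequality and the pointwise estimate on $\Pi_{>t}f$ are in place; the remaining work is the one-dimensional integral and the careful bookkeeping of the spectral-density constant.
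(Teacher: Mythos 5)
Your proposal is correct and arrives at exactly the stated constant $C_d' = d^2 C_d^{-2/d}/\big((d+2)(d+4)\big)$, but it takes a different route from the one the paper relies on. The paper does not reprove the lemma; it invokes Theorem 14 of \cite{Lundholm-Solovej:anyon}, whose proof applies Rumin's layer-cake argument \emph{directly to the $N$-body function}: for each $j$ one projects the $j$-th coordinate onto the low/high Neumann spectral subspaces on $Q$, bounds $|P^{\le e}_j u|$ pointwise by the diagonal density of states times the partial $L^2(Q)$-norm in that variable, and then assembles the one-particle density $\rho=\sum_j\rho_j$ via Minkowski's inequality in the remaining variables. You instead first collapse the many-body left-hand side to $\int_Q|\nabla\sqrt{\rho}|^2$ by the localized Hoffmann--Ostenhof inequality (your pointwise estimate $|\nabla\rho|\le 2\rho^{1/2}\tau^{1/2}$ is correct without any symmetry assumption, via $\rho=\sum_j\rho_j$ and two applications of Cauchy--Schwarz) and then run Rumin's argument on the single function $\sqrt{\rho}$; since the one-body computation is identical, the two routes give the same final inequality and constant. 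Your reduction is the more modular and arguably more elementary of the two (it avoids the Minkowski step over $j$), at the cost of discarding many-body information that the direct argument retains but which is not needed here. One remark on your anticipated obstacle: there is none. The first nonzero Neumann eigenvalue on $Q$ is $\pi^2/|Q|^{2/d}>0$, so the nonzero modes do not accumulate at $t=0$; the bound $D_t(\bx)\le |Q|^{-1}+C_d t^{d/2}$ together with $\sqrt{a+b}\le\sqrt{a}+\sqrt{b}$ already delivers the coefficient $1$ in front of $\big(\int_Q\rho/|Q|\big)^{1/2}$, and the $t$-integral $\int_0^1(1-s^{d/4})^2\,ds=\tfrac{d^2}{(d+2)(d+4)}$ reproduces $C_d'$ exactly.
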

	
	The constant $C_d' := d^2 C_d^{-\frac{2}{d}} \big/ (d+2)(d+4)$,
	with $C_d$ given below,
	enters via a bound for the growth of the sequence of eigenvalues 
	$\{ \lambda_k \}_{k=0}^\infty$ 
	for the Neumann Laplacian on $Q$:
	\begin{equation} \label{Neumann_eigenvalue_bound}
		\sum_{0 < \lambda_k < e} \frac{2^d}{|Q|} \le C_d \, e^{\frac{d}{2}},
		\qquad e \ge 0.
	\end{equation}
	In the case $d=1$ we have $\lambda_k = \frac{\pi^2}{|Q|^2} k^2$,
	$k=0,1,2,\ldots$, and
	$$
		\#\{ k : 0 < \lambda_k < e \}
		= \#\{ k \in \Z : 0 < k < e^{\frac{1}{2}} |Q|/\pi \}
		\ \le \ e^{\frac{1}{2}} |Q|/\pi,
	$$
	and hence we take $C_1 := 2/\pi$ and $C_1' = \pi^2/60$.
	In the case $d \ge 2$ we have 
	$\lambda_{\bk} = \pi^2 |\bk|^2 / |Q|^{\frac{2}{d}}$,
	$\bk \in \Z_{\ge 0}^d$, and by geometric considerations
	$$
		\#\{ \bk : 0 < \lambda_{\bk} < e \}
		= \#\{ \bk \in \Z_{\ge 0}^d : 0 < |\bk| < e^{\frac{1}{2}} |Q|^{\frac{1}{d}}/\pi \}
		\ \le \ d (e^{\frac{1}{2}} |Q|^{\frac{1}{d}}/\pi)^d,
	$$
	i.e. the optimal constant 
	in \eqref{Neumann_eigenvalue_bound} is
	$C_d := d2^d/\pi^d$, so 
	\begin{equation} \label{optimal_Cprime}
		C_d' = \frac{\pi^2}{4} \frac{d^{2-2/d}}{(d+2)(d+4)}.
	\end{equation}
	In particular, $C_2 = 8/\pi^2$ and $C_2' = \pi^2/48$.
	We remark that these constants are probably far from the sharp ones 
	in the inequality \eqref{uncertainty-cube}.

	\begin{proof}[Proof of Lemma \ref{lem:local_uncertainty}]
		We reduce to the case of free bosons,
		$T^Q \ge T_0^Q$, by discarding the statistics potentials
		or using the diamagnetic inequality (Lemma \ref{lem:diamagnetic_inequality}) 
		$|D_j \psi| \ge |\nabla_j |\psi||$ for anyons 
		(which also preserves $\rho$).
		Then,
		from Lemma \ref{lem:uncertainty-cube}, 
		$$
			T_0^Q
			\ \ge \ \frac{1}{2} \frac{C_d'}{(\int_Q \rho)^{2/d}} \int_Q \left[
				\rho(\bx)^{\frac{1}{2}} - \left( \frac{\int_Q \rho}{|Q|} \right)^{\frac{1}{2}}
				\right]_+^{2+4/d} d\bx,
		$$
		with 
		\begin{multline*}
			\int_Q \left[
				\rho(\bx)^{\frac{1}{2}} - \left( \frac{\int_Q \rho}{|Q|} \right)^{\frac{1}{2}}
				\right]_+^{2+4/d} d\bx \\
			\ge \int_Q \left|
				\rho(\bx)^{\frac{1}{2}} - \left( \frac{\int_Q \rho}{|Q|} \right)^{\frac{1}{2}}
				\right|^{2+4/d} d\bx \ 
				- \int_Q \left( \frac{\int_Q \rho}{|Q|} \right)^{1+2/d} d\bx \\
			= \left\| \rho^{\frac{1}{2}} - \left( \frac{\int_Q \rho}{|Q|} \right)^{\frac{1}{2}} \right\|^{2+4/d}_{2+4/d} 
				- \frac{(\int_Q \rho)^{1+2/d}}{|Q|^{2/d}},
		\end{multline*}
		where the norm $\| \cdot \|_p$ is that of $L^p(Q)$.
		The first term is bounded below by 
		$\left( \| \rho^{\frac{1}{2}} \|_{2+4/d} 
			- \| (\int_Q \rho / |Q|)^{\frac{1}{2}} \|_{2+4/d} \right)^{2+4/d}$
		using the triangle inequality. 
		Furthermore, by convexity we have for any $a,b \in \R$,
		$\ep \in (0,1)$, and $p \ge 1$ that
		$$
			\left( \ep a + (1-\ep) b \right)^p \le \ep a^p + (1-\ep) b^p,
		$$
		and hence with $a = A-B$ and $b = \frac{\ep}{1-\ep}B$,
		$$
			(A-B)^p \ge \ep^{p-1} A^p - \left( \frac{\ep}{1-\ep} \right)^{p-1} B^p.
		$$
		Applying this inequality to the norms
		above with $p=2+4/d$, we finally arrive at 
		\eqref{local_uncertainty}.
	\end{proof}

	We remark that bounds of the form \eqref{local_uncertainty}
	could also be obtained by application of standard Sobolev and Poincar\'e
	inequalities, but would then need to be dealt with differently for
	$d = 1,2$ and $d \ge 3$ 
	(see \cite{Frank-Seiringer:12} concerning the latter case).
	For $d=2$ an alternative form of the local uncertainty principle
	than \eqref{local_uncertainty}
	was given and used in \cite{Lundholm-Solovej:anyon}.

\section{Lieb-Thirring inequalities} \label{sec:Lieb-Thirring}

\subsection{A Lieb-Thirring inequality for anyons}

	In \cite{Lundholm-Solovej:anyon} we introduced an approach to
	Lieb-Thirring inequalities based on local exclusion and uncertainty,
	resulting in the following energy bounds for anyons in $\R^2$.

	\begin{thm}[Kinetic energy inequality for anyons] \label{thm:kinetic-LT-anyons}
		Let $\psi \in \Dc_{\uA}^\alpha$, $\alpha \in \R$,
		be a normalized $N$-anyon wave function 
		on $\R^2$. 
		Then
		\begin{equation} \label{kinetic-LT-anyons}
			\int_{\R^{2N}} T_{\uA}^N(\psi;\sx) \,d\sx
			\ \ge \ C_{\uA} C_{\alpha,N}^2 \int_{\R^2} \rho(\bx)^2 \,d\bx,
		\end{equation}
		for some positive constant $10^{-4} \le C_{\uA} \le \pi$.
	\end{thm}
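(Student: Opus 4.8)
The plan is to combine the two local principles already in hand --- the local exclusion bound of Lemma~\ref{lem:local_exclusion_A} and the $d=2$ case of the local uncertainty principle of Lemma~\ref{lem:local_uncertainty}, both of which apply directly to $T_{\uA}^Q$ --- over a suitably chosen partition of $\R^2$ into squares, and then to sum. Since there is no statistics potential for anyons, the kinetic energy localizes additively: for any partition $\{Q\}$ into squares, $T_{\uA} = \sum_Q T_{\uA}^Q$ with $T_{\uA}^Q$ as in \eqref{local_kinetic_energy_A}, because $\sum_Q \chi_Q(\bx_j)=1$. On each square I would bound $T_{\uA}^Q$ from below by a convex combination, with weight $s\in[0,1]$ on the exclusion bound \eqref{local_exclusion_density_A} and weight $1-s$ on \eqref{local_uncertainty} (so that $\rho^{1+2/d}=\rho^2$ and $|Q|^{2/d}=|Q|$):
\begin{multline*}
	T_{\uA}^Q \ \ge\ (1-s)\frac{C_2'}{2}\ep^{3}\frac{\int_Q \rho^2}{\int_Q \rho}
	+ s\,\frac{c_\Omega C_{\alpha,N}^2}{2|Q|}\Big(\textstyle\int_Q \rho - 1\Big)_+ \\
	-\ (1-s)\frac{C_2'}{2}\Big(1+\big(\tfrac{\ep}{1-\ep}\big)^{3}\Big)\frac{\int_Q \rho}{|Q|}.
\end{multline*}

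The key point is the mechanism that produces the factor $C_{\alpha,N}^2$. The two terms carrying $|Q|^{-1}$ must be balanced against each other: on squares where $\int_Q\rho$ is bounded below (say $\ge 2$) one has $(\int_Q\rho-1)_+\ge\tfrac12\int_Q\rho$, and I would require the positive exclusion contribution --- which \emph{alone} carries the factor $C_{\alpha,N}^2$ --- to dominate the negative uncertainty error. This forces $s\,c_\Omega C_{\alpha,N}^2 \gtrsim (1-s)C_2'(1+(\tfrac{\ep}{1-\ep})^3)$. Choosing $1-s = \kappa\,C_{\alpha,N}^2$ with $\kappa$ a small constant independent of $\alpha$ and $N$ (legitimate since $C_{\alpha,N}\le 1$ by \eqref{C_alpha}) then renders the error term nonpositive, while leaving a surviving coefficient $(1-s)C_2'\ep^3/2 \propto C_{\alpha,N}^2$ in front of $\int_Q\rho^2/\int_Q\rho$. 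This is exactly how the statistics-dependent strength of exclusion $C_{\alpha,N}^2$ migrates into the final constant.

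To convert $\sum_Q \int_Q\rho^2/\int_Q\rho$ into $\int_{\R^2}\rho^2$ I need $\int_Q\rho$ bounded \emph{above} on every square, for then $\int_Q\rho^2/\int_Q\rho \ge \Lambda^{-1}\int_Q\rho^2$ whenever $\int_Q\rho\le\Lambda$, and the sum is $\ge \Lambda^{-1}\int_{\R^2}\rho^2$. I would construct such a partition by a dyadic stopping-time refinement: start from a large square and repeatedly quarter any square $Q$ with $\int_Q\rho\ge\Lambda$, collecting the maximal squares with $\int_Q\rho<\Lambda$. Optimizing the free parameters $\ep$, $s=1-\kappa C_{\alpha,N}^2$, and $\Lambda$ should then yield the explicit lower constant $C_{\uA}\ge 10^{-4}$, while the upper bound $C_{\uA}\le\pi$ follows by comparison with the exact free energy of the ideal two-dimensional Fermi gas \eqref{fermion_gas_energy}, taking $\alpha=1$ (so $C_{\alpha,N}=1$) and homogeneous trial densities.

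The main obstacle is the interface between the two regimes in this covering. Absorption of the uncertainty error needs $\int_Q\rho\gtrsim 1$, whereas the stopping rule only guarantees $\int_Q\rho<\Lambda$ on the selected squares, with \emph{no} lower bound; a square of small mass thus carries an unabsorbed negative error on which exclusion is vacuous. The real work is to show that the aggregate error from such low-mass squares is nonetheless controlled --- for instance by charging it against the exclusion gain at the parent scale, where $\int_{\hat Q}\rho\ge\Lambda$ and $|\hat Q|=4|Q|$, through a telescoping over the dyadic tree --- and to verify that these squares contribute negligibly to $\int_{\R^2}\rho^2$. Making this bookkeeping uniform in $N$, together with checking termination of the refinement for a general $\rho\in L^1$, is where the effort concentrates; the per-square algebra sketched above is routine by contrast.
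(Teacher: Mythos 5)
Your proposal follows essentially the same route as the paper's (which defers the details to \cite{Lundholm-Solovej:anyon} but spells out the identical template in the proof of Theorem \ref{thm:LT-Schroedinger}): a convex combination of the local exclusion bound \eqref{local_exclusion_density_A} and the local uncertainty bound \eqref{local_uncertainty} with uncertainty weight $1-s=\kappa\,C_{\alpha,N}^2$ (the paper takes $\kappa=1/12$ and $\ep=1/7$), summed over a stopping-time dyadic decomposition into squares of bounded mass. The one imprecision is in the bookkeeping you defer: a low-mass square cannot be charged to its \emph{parent}, since the parent is subdivided and contributes no energy term of its own to the sum over leaves; instead, as in the A$_1$/A$_2$ splitting of the Lieb-Liniger proof, each such square is charged to a high-mass leaf at the deepest level of the same branch, the geometric growth of $|Q|^{-1}$ down the branch making the total charge summable against that leaf's exclusion surplus.
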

	\begin{cor}[Lieb-Thirring inequality for anyons]
		Let $\psi \in \Dc_{\uA}^\alpha$  be a normalized $N$-anyon wave function 
		and $V$ a real-valued potential on $\R^2$. Then
		\begin{equation} \label{LT-anyons}
			\int_{\R^{2N}} \left( T_{\uA}^N(\psi;\sx) + \sum_{j=1}^N V(\bx_j)|\psi|^2 \right) \,d\sx
			\ \ge \ -C_{\uA}' C_{\alpha,N}^{-2} \int_{\R^2} |V_-(\bx)|^2 \,d\bx,
		\end{equation}
		for a positive constant $C_{\uA}' = (4C_{\uA})^{-1}$.
	\end{cor}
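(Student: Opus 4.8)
The plan is to deduce the Lieb-Thirring bound \eqref{LT-anyons} directly from the kinetic energy inequality of Theorem \ref{thm:kinetic-LT-anyons} by a pointwise (Legendre-type) optimization in the density $\rho$. First I would rewrite the potential term using the definition \eqref{one-particle-density} of the one-particle density: since $\psi$ is completely symmetric, each of the $N$ terms $\int V(\bx_j)|\psi|^2 \,d\sx$ contributes equally and by Fubini one gets
$$
	\int_{\R^{2N}} \sum_{j=1}^N V(\bx_j) |\psi|^2 \,d\sx = \int_{\R^2} V(\bx)\, \rho(\bx) \,d\bx.
$$
Combining this with \eqref{kinetic-LT-anyons} yields
$$
	\int_{\R^{2N}} \Big( T_{\uA}^N(\psi;\sx) + \sum_{j=1}^N V(\bx_j)|\psi|^2 \Big) \,d\sx \ \ge \ \int_{\R^2} \left( C_{\uA} C_{\alpha,N}^2 \,\rho(\bx)^2 + V(\bx)\,\rho(\bx) \right) d\bx.
$$

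Next I would bound the integrand from below pointwise over the admissible values $\rho(\bx)\ge 0$. For each fixed $\bx$ the scalar map $t \mapsto C_{\uA} C_{\alpha,N}^2\, t^2 + V(\bx)\, t$ on $t \ge 0$ is a convex parabola whose unconstrained minimizer is $t_* = -V(\bx)/(2 C_{\uA} C_{\alpha,N}^2)$; when $V(\bx)<0$ this lies in $[0,\infty)$ and gives minimal value $-|V_-(\bx)|^2/(4 C_{\uA} C_{\alpha,N}^2)$, while when $V(\bx)\ge 0$ the constrained minimum is attained at $t=0$ with value $0$, consistent with $V_-(\bx)=0$. Thus in all cases the integrand is at least $-|V_-(\bx)|^2/(4 C_{\uA} C_{\alpha,N}^2)$, and integrating this pointwise inequality gives
$$
	\int_{\R^{2N}} \Big( T_{\uA}^N(\psi;\sx) + \sum_{j=1}^N V(\bx_j)|\psi|^2 \Big) \,d\sx \ \ge \ -\frac{1}{4 C_{\uA} C_{\alpha,N}^2} \int_{\R^2} |V_-(\bx)|^2 \,d\bx,
$$
which is precisely \eqref{LT-anyons} with $C_{\uA}' = (4 C_{\uA})^{-1}$ and the factor $C_{\alpha,N}^{-2}$ appearing in the denominator.

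There is essentially no genuine obstacle in this argument: it is the standard duality between a kinetic-energy inequality of the form $T \ge c \int \rho^2$ and the corresponding Lieb-Thirring inequality, and the only work is the elementary one-variable minimization above. The few points meriting a word of care are that the pointwise minimizer $t_*$ is an admissible (nonnegative) value whenever $V<0$, so the bound is genuinely achievable at the level of the integrand; that we never need a wave function realizing this minimizing $\rho$, since we only use the displayed chain as a lower bound; and that the right-hand side is finite exactly under the natural hypothesis $V_- \in L^2(\R^2)$, with no further regularity required of $V$.
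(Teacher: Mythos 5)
Your proof is correct and is exactly the standard duality argument the paper relies on (stated without proof there, as in the analogous 1D case where the authors simply cite Theorem 4.3 of Lieb--Seiringer): rewrite the potential term as $\int_{\R^2} V\rho$, invoke the kinetic inequality \eqref{kinetic-LT-anyons}, and minimize $C_{\uA} C_{\alpha,N}^2 t^2 + V t$ pointwise over $t\ge 0$ to get the constant $C_{\uA}' = (4C_{\uA})^{-1}$. Nothing to add.
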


	The rough lower bound $C_{\uA} \ge 10^{-4}$ that is given here
	for the optimal constant $C_{\uA}$ in \eqref{kinetic-LT-anyons} can be
	computed in a similar way as we do below for the Lieb-Liniger case
	(in the proof of the corresponding 
	Theorem 11 in \cite{Lundholm-Solovej:anyon}
	one can take e.g. $\epsilon = 1/7$ 
	and $\kappa = C_{\alpha,N}^2/12$).
	Note that from \eqref{kinetic-LT-anyons} follows also 
	the bound for the free anyon gas 
	(see \cite{Lundholm-Solovej:anyon})
	\begin{equation} \label{anyon_gas_energy}
		\frac{T_{\uA}}{|\Omega|} \ge C_{\uA} C_{\alpha,N}^2 \bar{\rho}^2,
	\end{equation}
	with mean density $\bar{\rho} = N/|\Omega|$,
	assuming that $\rho$ is supported on a domain $\Omega \subseteq \R^2$.
	Comparing with \eqref{fermion_gas_energy} 
	it is clear
	that the optimal constant in 
	\eqref{kinetic-LT-anyons} and \eqref{anyon_gas_energy}
	satisfies $C_{\uA} \le C_{\uA}^{\,\textup{cl}} := \pi$,
	which is the exact constant for fermions in the semi-classical approximation.
	We should also compare with Theorem \ref{thm:anyon_gas}
	which produced a significantly better numerical bound for the 
	constant for the anyon gas.

\subsection{Lieb-Thirring inequalities in one dimension}

	By proceeding as for the anyons,
	combining local uncertainty with local exclusion, 
	we can prove the following 
	for intermediate statistics in one dimension,
	summarized in Theorem \ref{thm:main}.
	
	\begin{thm}[Lieb-Thirring inequalities for 1D Lieb-Liniger] \label{thm:LT-Schroedinger}
		For $\psi \in \Dc_{\uS}^\eta$ with $\eta \ge 0$ and $N \ge 1$
		we have
		\begin{equation} \label{kinetic-LT-Schroedinger}
			\int_{\R^N} T_{\uS}^N(\psi;\sx) \,d\sx
			\ \ge \ C_{\uS} \int_{\R} \xi_{\uS}(2\eta/\rho^*(x))^2 \rho(x)^3 \,dx,
		\end{equation}
		for some positive constant 
		$3 \cdot 10^{-5} \le C_{\uS} \le 2/3$,
		where $\rho^*$ is
		the Hardy-Littlewood maximal function of $\rho$
		(see \eqref{Hardy-Littlewood-rho}).
		In particular, if $\frac{\int_Q \rho}{|Q|} \le \gamma \bar{\rho}$
		for all intervals $Q$ and some $\gamma > 0$, then
		\begin{equation} \label{kinetic-LT-SchroedingerBd}
			\int_{\R^N} T_{\uS}^N(\psi;\sx) \,d\sx
			\ \ge \ C_{\uS} \,\xi_{\uS}(2\eta/(\gamma\bar{\rho}))^2 \int_{\R} \rho(x)^3 \,dx,
		\end{equation}
		\begin{equation} \label{LT-SchroedingerBd}
			\int_{\R^N} \left( T_{\uS}^N(\psi;\sx) + \sum_{j=1}^N V(x_j) |\psi|^2 \right) d\sx
			\ \ge \ -\frac{C_{\uS}'}{ \xi_{\uS}(2\eta/(\gamma\bar{\rho})) } \int_{\R} |V_-(x)|^{\frac{3}{2}} \,dx,
		\end{equation}
		with $C_{\uS}' := \frac{2}{3}(3C_{\uS})^{-\frac{1}{2}}$, 
		and if $\rho$ is supported on an interval of length $L$
		\begin{equation} \label{energy-SchroedingerBd-gas}
			T_{\uS}/L \ \ge \ C_{\uS} \,\xi_{\uS}(2\eta/(\gamma\bar{\rho}))^2 \bar{\rho}^3,
			\qquad \bar{\rho} := N/L.
		\end{equation}
	\end{thm}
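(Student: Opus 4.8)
The plan is to establish the fundamental kinetic bound \eqref{kinetic-LT-Schroedinger} by combining the local exclusion principle (Lemma~\ref{lem:local_exclusion_S}) with the local uncertainty principle (Lemma~\ref{lem:local_uncertainty}) on a single interval, and then to globalize by an adaptive equal-mass partition of the line; the inequalities \eqref{kinetic-LT-SchroedingerBd}--\eqref{energy-SchroedingerBd-gas} should then follow by elementary manipulations. For the single-interval step I would fix an interval $Q$ with $\int_Q\rho\ge2$ and exploit that both lemmas (for $d=1$ with $T^Q=T_{\uS}^Q$) furnish lower bounds for the \emph{same} quantity $T_{\uS}^Q$. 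Writing $m:=\int_Q\rho$ and $\xi:=\xi_{\uS}(\eta|Q|)$, the exclusion bound \eqref{local_exclusion_density_S} together with $m\ge2\Rightarrow(m-1)_+\ge m/2$ gives $T_{\uS}^Q\ge\frac{\xi^2}{2}\frac{m}{|Q|^2}$, while \eqref{local_uncertainty} gives $T_{\uS}^Q\ge a\,\frac{\int_Q\rho^3}{m^2}-b\,\frac{m}{|Q|^2}$ with $a=\frac{C_1'}{2}\ep^5$ and $b=\frac{C_1'}{2}(1+(\ep/(1-\ep))^5)$. Taking the convex combination $t\cdot(\text{exclusion})+(1-t)\cdot(\text{uncertainty})$ and choosing $1-t=\xi^2/(\xi^2+2b)$ so that the coefficient of $m/|Q|^2$ vanishes, the negative term cancels and one is left with
\begin{equation*}
T_{\uS}^Q\ \ge\ \frac{a\,\xi^2}{\xi^2+2b}\,\frac{\int_Q\rho^3}{(\int_Q\rho)^2}.
\end{equation*}
The point that tames the interval-size dependence of the exclusion strength, peculiar to the Lieb--Liniger case, is that $\xi_{\uS}\le\pi/2$ uniformly, so that $\xi^2/(\xi^2+2b)\ge\xi^2/(\pi^2/4+2b)$ and the prefactor is $\ge c_0\,\xi^2$ with $c_0:=\sup_{\ep\in(0,1)}\frac{a}{\pi^2/4+2b}$ a strictly positive universal constant. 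Thus for every $Q$ with $\int_Q\rho\ge2$ one obtains $T_{\uS}^Q\ge c_0\,\xi_{\uS}(\eta|Q|)^2\,(\int_Q\rho)^{-2}\int_Q\rho^3$.

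For the globalization I would use that $\int_\R\rho=N\ge2$ and that the cumulative density $t\mapsto\int_{-\infty}^t\rho$ is continuous, so that $\R$ can be partitioned into consecutive intervals each of mass exactly $2$ (merging the final short piece into its neighbour, so $2\le\int_Q\rho\le4$ on every mass-carrying piece), extended by two unbounded zero-mass intervals. Since $\sum_Q\chi_Q(x_j)=1$ the local energies sum exactly, $\sum_Q T_{\uS}^Q=T_{\uS}$, and applying the single-interval bound on each piece together with $\int_Q\rho\le4$ gives $T_{\uS}\ge\frac{c_0}{16}\sum_Q\xi_{\uS}(\eta|Q|)^2\int_Q\rho^3$. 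Here the Hardy--Littlewood maximal function $\rho^*(x)=\sup_{Q\ni x}\frac1{|Q|}\int_Q\rho$ enters: for $x\in Q$ one has $\rho^*(x)\ge\frac1{|Q|}\int_Q\rho\ge2/|Q|$, hence $\eta|Q|\ge2\eta/\rho^*(x)$, and by monotonicity of $\xi_{\uS}$ (Lemma~\ref{lem:Schroedinger_properties}) $\xi_{\uS}(\eta|Q|)\ge\xi_{\uS}(2\eta/\rho^*(x))$ pointwise on $Q$. Summing over the partition yields \eqref{kinetic-LT-Schroedinger} with $C_{\uS}=c_0/16$. The residual case $N=1$ (total mass below $2$, where no exclusion is available) I would dispatch directly from the one-dimensional Gagliardo--Nirenberg inequality $\|\psi\|_6^6\lesssim\|\psi'\|_2^2\|\psi\|_2^4$ combined with the crude bound $\xi_{\uS}\le\pi/2$.

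The consequences are then routine. Inequality \eqref{kinetic-LT-SchroedingerBd} is immediate: the hypothesis $\frac1{|Q|}\int_Q\rho\le\gamma\bar{\rho}$ forces $\rho^*\le\gamma\bar{\rho}$, so by monotonicity $\xi_{\uS}(2\eta/\rho^*)\ge\xi_{\uS}(2\eta/(\gamma\bar{\rho}))$ and this constant may be pulled out of the integral. The Lieb--Thirring form \eqref{LT-SchroedingerBd} follows by the usual duality: with $c:=C_{\uS}\xi_{\uS}(2\eta/(\gamma\bar{\rho}))^2$ one writes $T_{\uS}+\sum_j V(x_j)|\psi|^2\ge c\int\rho^3-\int V_-\rho$ and minimizes $c\rho^3-V_-\rho$ pointwise over $\rho\ge0$, the minimum $-\tfrac23(3c)^{-1/2}V_-^{3/2}$ producing exactly $C_{\uS}'=\tfrac23(3C_{\uS})^{-1/2}$. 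Finally, if $\rho$ is supported on an interval of length $L$ then $\int\rho^3\ge(\int\rho)^3/L^2=N^3/L^2$ by Hölder, and dividing \eqref{kinetic-LT-SchroedingerBd} by $L$ gives \eqref{energy-SchroedingerBd-gas}.

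I expect the genuine obstacle to be the globalization, specifically reconciling the interval-size dependence of the exclusion strength $\xi_{\uS}(\eta|Q|)$ with a lower bound that is pointwise in $x$; this is precisely what the maximal function $\rho^*$ is designed to absorb, through the elementary but essential estimate $|Q|\ge2/\rho^*(x)$ for any partition interval of mass $2$. Obtaining a constant uniform in both $\eta$ and $|Q|$ in the single-interval bound likewise hinges on the boundedness $\xi_{\uS}\le\pi/2$, without which the combination of exclusion and uncertainty would degrade as the interaction grows.
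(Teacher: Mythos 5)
Your proof is correct in substance but globalizes by a genuinely different route than the paper. The single-interval step is essentially the paper's treatment of its ``B-intervals'': a convex combination of the exclusion bound \eqref{local_exclusion_density_S} and the uncertainty bound \eqref{local_uncertainty} with the weight tuned so that the two terms proportional to $\int_Q\rho/|Q|^2$ cancel, the uniform bound $\xi_{\uS}\le\pi/2$ keeping the resulting prefactor $\gtrsim\xi_{\uS}(\eta|Q|)^2$ with a universal constant (the paper's choice $\kappa=\tfrac12\xi_{\uS}(\eta|Q_B|)^2/(\pi/2)^2$ is the same device). Where you diverge is the covering: the paper bisects a large interval $Q_0$ into a binary tree, sorts the leaves into B-intervals ($2\le\int\rho<4$), A$_2$-intervals (low mass, handled by uncertainty alone), and A$_1$-intervals (low mass and low $\int\rho^3$), and must then show that the would-be contribution of the A$_1$-intervals is dominated by surplus exclusion energy retained on an associated B-interval --- this is what forces the concavity/convexity estimates of Lemma~\ref{lem:Schroedinger_properties} and the geometric summation over tree levels. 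You instead partition $\R$ directly into consecutive intervals each carrying mass in $[2,4)$, which exist by continuity of $t\mapsto\int_{-\infty}^t\rho$; then every piece supports the exclusion bound, no low-density leftovers arise, and only the monotonicity of $\xi_{\uS}$ is needed to pass to $\xi_{\uS}(2\eta/\rho^*(x))$ via $\rho^*(x)\ge\int_Q\rho/|Q|\ge 2/|Q|$. This exploits the total order of $\R$ (such equal-mass partitions into cubes are unavailable in the 2D anyon setting from which the paper's tree scheme is inherited), is shorter, and with $\ep=1/2$ already yields $C_{\uS}\approx 6\cdot10^{-5}$, consistent with and slightly better than the paper's $c_1=2^{-11}/15$.

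Three small repairs are needed. First, the two outermost pieces of your partition are unbounded and in general carry mass (they are ``zero-mass'' only if $\rho$ has compact support); on such a piece the exclusion bound degenerates since $\xi_{\uS}(\eta|Q|)^2/|Q|^2\to 0$, but the negative term of \eqref{local_uncertainty} vanishes as well, so the uncertainty bound alone together with $\xi_{\uS}\le\pi/2$ gives the claim with the same constant --- alternatively, work on a large finite $Q_0$ and let $Q_0\nearrow\R$ by monotone convergence, as the paper does. Second, Lemma~\ref{lem:local_uncertainty} is stated for finite intervals, so its use on an unbounded piece should be justified by exhaustion (the paper itself only asserts the extension to $Q=\R$). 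Third, you do not address the upper bound $C_{\uS}\le 2/3$ on the optimal constant, which the paper obtains by letting $\eta\to+\infty$ in \eqref{energy-SchroedingerBd-gas} and comparing with the fermionic semiclassical energy; this should be added for completeness.
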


	\begin{proof}
		For $N=1$ and $N=2$ we use $T_{\uS} \ge T_0$
		and Lemma \ref{lem:uncertainty-cube}
		(which generalizes to $Q=\R$) with a resulting constant
		$C_1'/8 \ge 0.020$.
		
		For $N \ge 3$ we can consider the kinetic energy $T_{\uS}^Q$
		on an arbitrary finite interval $Q_0 \subset \R$ 
		s.t. $\int_{Q_0} \rho \ge 2$.
		We split the interval $Q_0$ 
		in halves iteratively, organizing the resulting subintervals $Q$
		in a (full binary) tree $\mathbb{T}$; cp. \cite{Lundholm-Solovej:anyon}.
		The procedure can be arranged so that $Q_0$ is finally
		covered by intervals $Q_B$ marked B 
		s.t. $2 \le \int_{Q_B} \rho < 4$, and
		$Q_A$ marked A 
		s.t. $0 \le \int_{Q_A} \rho < 2$,
		sitting at the leaves of the tree
		and s.t. at least one B-interval is at the highest level
		of every branch of the tree.
		On the B-intervals we use local exclusion, 
		\eqref{local_exclusion_density_S}, 
		together with local uncertainty,
		Lemma \ref{lem:local_uncertainty},
		to obtain
		\begin{equation}
			T_{\uS}^{Q_B} \ge 
			\kappa \frac{C_1'}{2} \ep^5 \frac{\int_{Q_B} \rho^3}{4^2}
			- \kappa \frac{C_1'}{2} \left( 1 + \left(\frac{\ep}{1-\ep}\right)^5 \right) \frac{4}{|Q_B|^2}
			+ (1-\kappa) \frac{\xi_{\uS}(\eta |Q_B|)^2}{|Q_B|^2},
		\end{equation}
		for any $\ep,\kappa \in (0,1)$.
		For simplicity we set $\ep = \frac{1}{2}$ and 
		$\kappa = \frac{1}{2} \xi_{\uS}(\eta|Q_B|)^2/(\pi/2)^2 \\ \le \frac{1}{2}$
		to find
		\begin{equation} \label{B-bound}
			T_{\uS}^{Q_B} \ge \xi_{\uS}(\eta |Q_B|)^2 
			\left( c_1 \int_{Q_B} \rho^3 + \frac{c_2}{|Q_B|^2} \right),
		\end{equation}
		with $c_1 = 2/\pi^2 \cdot C_1'/2^{10} = 2^{-11}/15$
		and $c_2 = -2/\pi^2 \cdot 4C_1' + 1/2 = 11/30$.
		Note by monotonicity of $\xi_{\uS}$ that
		$$
			\xi_{\uS}(\eta |Q_B|) \ge \xi_{\uS}(2\eta / \tilde{\rho}(x)),
			\quad \text{where} \quad
			\tilde{\rho}|_{Q_B} := \frac{\int_{Q_B}\rho}{|Q_B|},
		$$
		i.e. $\tilde{\rho}$ is defined to be the mean of $\rho$
		on each B-interval.
		The A-intervals are further divided into a subclass A$_2$ on which
		$$
			\int_{Q_A} \rho^3 > c \frac{(\int_{Q_A} \rho)^3}{|Q_A|^2}
		$$
		with $c := 2^7$,
		so that by Lemma \ref{lem:local_uncertainty} 
		(again using $\ep = \frac{1}{2}$)
		\begin{equation} \label{A2-bound}
			T_{Q_{A_2}} \ge c_3 \int_{Q_{A_2}} \rho^3 
			= \frac{4c_3}{\pi^2} \int_{Q_{A_2}} \xi_{\uS}(2\eta / \tilde{\rho})^2 \rho^3 ,
			\quad \text{where} \quad
			\tilde{\rho}|_{Q_{A_2}} := 0,
		\end{equation}
		$c_3 = \frac{1}{2} C_1'/2^8 = \pi^2 c_1$,
		and a remaining subclass A$_1$ for which
		$$
			\int_{Q_A} \rho^3 \le c\frac{(\int_{Q_A} \rho)^3}{|Q_A|^2}.
		$$
		Consider the set $\mathcal{A}_1(Q_B)$
		of such intervals $Q_{A_1}$ which can be found by going
		back in the tree $\mathbb{T}$ from a fixed B-interval $Q_B$
		at level $k \in \N$ (possibly all the way to $Q_0$), 
		and then one step forward.
		On each level $1 \le j \le k$ there is at most one such 
		interval which we denote by $Q_j$
		(and otherwise we can define $|Q_j| := |Q_0|/2^j$ 
		for the below expressions to make sense).
		Using Lemma \ref{lem:Schroedinger_properties} we have
		$$
			\xi_{\uS}\left( \eta|Q_B|/(|Q_B|/|Q_j|) \right)^2 \frac{|Q_B|}{|Q_j|} 
			\le \xi_{\uS}(\eta |Q_B|)^2
		$$
		and hence
		\begin{multline*}
			\sum_{j=1}^k \xi_{\uS}(\eta|Q_j|)^2 \frac{|Q_B|^2}{|Q_j|^2}
			\le \sum_{j=1}^k \xi_{\uS}(\eta|Q_B|)^2 \frac{|Q_B|}{|Q_j|}
			= \xi_{\uS}(\eta|Q_B|)^2 \sum_{j=1}^k \frac{2^{-k}}{2^{-j}} \\
			\le 2\xi_{\uS}(\eta|Q_B|)^2.
		\end{multline*}
		Defining
		$\tilde{\rho}|_{Q_{A_1}} := \int_{Q_{A_1}} \rho/|Q_{A_1}|$
		to be the mean also on A$_1$-intervals,
		we have by 
		$0 \le \tilde{\rho} \le 2/|Q_{A_1}|$ 
		and monotonicity of $\xi_{\uS}(\eta/x)^2x^3$ that
		$$
			\int_{Q_{A_1}} \xi_{\uS}(2\eta/\tilde{\rho})^2 \rho^3
			\le \xi_{\uS}(2\eta/\tilde{\rho})^2 \,c \tilde{\rho}^3 |Q_{A_1}|
			\le c \,\xi_{\uS}(\eta|Q_{A_1}|)^2 \frac{2^3}{|Q_{A_1}|^2},
		$$
		and therefore by the above
		$$
			\sum_{Q_{A_1} \in \mathcal{A}_1(Q_B)} \int_{Q_{A_1}} \xi_{\uS}(2\eta/\tilde{\rho})^2 \rho^3
			\le \sum_{j=1}^k 8c \frac{\xi_{\uS}(\eta|Q_j|)^2}{|Q_j|^2}
			\le 16c \frac{\xi_{\uS}(\eta|Q_B|)^2}{|Q_B|^2},
		$$
		for every B-interval $Q_B$.
		In other words the energy on all intervals with almost constant
		low density is dominated by that from exclusion on the B-intervals.
		Using \eqref{B-bound} we then have
		\begin{equation} \label{B-A1-bound}
			T_{\uS}^{Q_B} \ge c_1 \int_{Q_B} \xi_{\uS}(2\eta/\tilde{\rho})^2 \rho^3
			+ c_2' \sum_{Q_{A_1} \in \mathcal{A}_1(Q_B)} 
				\int_{Q_{A_1}} \xi_{\uS}(2\eta/\tilde{\rho})^2 \rho^3,
		\end{equation}
		$c_2' = c_2/(16c) = 2^{-12} \cdot 11/15$,
		and hence by \eqref{A2-bound} and \eqref{B-A1-bound}
		\begin{equation} \label{kinetic-LT-Schroedinger-Q_0}
			T_{\uS} \ge T_{\uS}^{Q_0}
			= \sum_{Q_{A,B} \in \mathbb{T}} T_{\uS}^{Q_{A,B}} 
			\ge C_{\uS} \int_{Q_0} \xi_{\uS}(2\eta/\tilde{\rho})^2 \rho^3,
		\end{equation}
		where $C_{\uS} := \min\{c_1, c_2', 4c_3/\pi^2\} = c_1 > 3 \cdot 10^{-5}$.
		We now use
		$\tilde{\rho} \le \rho^*$, with 
		\begin{equation} \label{Hardy-Littlewood-rho}
			\rho^*(x) := \sup \left\{ \frac{\int_Q \rho}{|Q|} : \text{$Q$ is a finite interval containing $x$} \right\}
		\end{equation}
		the (uncentered) Hardy-Littlewood maximal function of $\rho$,
		and finally let $Q_0$ approach $\R$ to obtain 
		\eqref{kinetic-LT-Schroedinger}.
		
		The bounds
		\eqref{LT-SchroedingerBd} and \eqref{energy-SchroedingerBd-gas}
		follow in a standard way from 
		\eqref{kinetic-LT-SchroedingerBd}
		(see e.g. Theorem 4.3 in \cite{Lieb-Seiringer:10}).
		The bound $C_{\uS} \le 2/3$ for the optimal constant in
		\eqref{kinetic-LT-Schroedinger} follows by taking the limit
		$\eta \to +\infty$ and comparing \eqref{energy-SchroedingerBd-gas}
		with the semiclassics for fermions.
	\end{proof}

	\begin{thm}[Lieb-Thirring inequalities for 1D Calogero-Sutherland] \label{thm:LT-Heisenberg}
		Let $\psi \in \Dc_{\uH}^\alpha$ 
		with $\alpha \ge 1$ and $N \ge 2$.
		Given any finite interval $Q_0$ s.t. $\int_{Q_0} \rho \ge 2$,
		we have
		\begin{equation} \label{kinetic-LT-Heisenberg-interval}
			T_{\uH}^{Q_0}
			\ \ge \ C_{\uH} \, \xi_{\uH}(\alpha)^2 \int_{Q_0} \tilde{\rho}(x)^3 \,dx
			\ \ge \ C_{\uH} \, \xi_{\uH}(\alpha)^2 \frac{(\int_{Q_0} \rho(x) \,dx)^3}{|Q_0|^2},
		\end{equation}
		for some positive constant 
		$1/32 \le C_{\uH} \le 2/3$,
		and $\tilde{\rho}$ defined below (see \eqref{rho-tilde})
		as a piecewise constant approximation to $\rho$. 
		In particular,
		\begin{equation} \label{kinetic-LT-Heisenberg}
			T_{\uH} = \int_{\R^N} T_{\uH}^N(\psi;\sx) \,d\sx
			\ \ge \ C_{\uH} \, \xi_{\uH}(\alpha)^2 \int_{\R} \tilde{\rho}(x)^3 \,dx,
		\end{equation}
		and if $\rho$ is confined to a length $L$
		\begin{equation} \label{energy-Heisenberg-gas}
			T_{\uH}/L \ \ge \ C_{\uH} \,\xi_{\uH}(\alpha)^2 \bar{\rho}^3,
			\qquad \bar{\rho} := N/L.
		\end{equation}
	\end{thm}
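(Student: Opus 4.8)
The plan is to mirror the stopping-time / binary-tree argument already carried out for the Lieb-Liniger case (Theorem \ref{thm:LT-Schroedinger}) and for anyons in \cite{Lundholm-Solovej:anyon}, while exploiting two features special to the Calogero-Sutherland setting that make the argument considerably cleaner: the exclusion strength $\xi_{\uH}(\alpha)$ is a \emph{constant} (independent of the interval), and the quantity to be bounded from below involves the piecewise-constant approximant $\tilde{\rho}$ rather than $\rho$ itself. First I would bisect $Q_0$ repeatedly to build a full binary tree $\mathbb{T}$ of subintervals, stopping on any interval $Q$ with $\int_Q \rho < 4$ and marking it B if $2 \le \int_Q \rho < 4$ and A if $\int_Q \rho < 2$; since $\int_{Q_0}\rho \ge 2$, the rule places at least one B-interval at the highest level of every branch. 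I would then define $\tilde{\rho}$ to equal the mean $\int_Q \rho/|Q|$ on each leaf $Q$, so that $\tilde{\rho}$ preserves the mass $\int_{Q_0}\tilde{\rho} = \int_{Q_0}\rho$ and satisfies $\int_Q \tilde{\rho}^3 = (\int_Q \rho)^3/|Q|^2$ on every leaf.

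The heart of the argument is that \emph{local exclusion alone} (Lemma \ref{lem:local_exclusion_H}, in the form \eqref{local_exclusion_density_H}) already controls $\int_{Q_0}\tilde{\rho}^3$; in contrast to the Lieb-Liniger proof, the local uncertainty principle (Lemma \ref{lem:local_uncertainty}) and the A$_1$/A$_2$ subdivision are not needed, precisely because $\tilde{\rho}$ is bounded by $4/|Q|$ on every leaf. On a B-interval I would combine $T_{\uH}^{Q_B} \ge \xi_{\uH}(\alpha)^2(\int_{Q_B}\rho - 1)/|Q_B|^2$ with $\int_{Q_B}\tilde{\rho}^3 = (\int_{Q_B}\rho)^3/|Q_B|^2$; optimizing $t \mapsto t^3/(t-1)$ over $t = \int_{Q_B}\rho \in [2,4)$ gives $\int_{Q_B}\tilde{\rho}^3 \le \tfrac{64}{3}\,\xi_{\uH}(\alpha)^{-2}T_{\uH}^{Q_B}$. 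For the A-intervals I would reuse the geometric bookkeeping of Theorem \ref{thm:LT-Schroedinger}: each A-leaf, being an ancestor-sibling along the path from $Q_0$ to some B-leaf $Q_B$, satisfies $\int_{Q_A}\tilde{\rho}^3 < 8/|Q_A|^2$, and the dyadic sum of these over the path to a fixed $Q_B$ telescopes to at most $\tfrac{32}{3}|Q_B|^{-2} \le \tfrac{32}{3}\xi_{\uH}(\alpha)^{-2}T_{\uH}^{Q_B}$. Summing the B- and A-contributions over all leaves then yields $\int_{Q_0}\tilde{\rho}^3 \le 32\,\xi_{\uH}(\alpha)^{-2}T_{\uH}^{Q_0}$, i.e. the first inequality in \eqref{kinetic-LT-Heisenberg-interval} with $C_{\uH} = 1/32$.

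The remaining statements follow routinely. The second inequality in \eqref{kinetic-LT-Heisenberg-interval} is Jensen's inequality $\int_{Q_0}\tilde{\rho}^3 \ge (\int_{Q_0}\rho)^3/|Q_0|^2$, valid because $\tilde{\rho}$ preserves mass. Letting $Q_0 \uparrow \R$ gives \eqref{kinetic-LT-Heisenberg}, and applying the interval bound with $Q_0$ the support of $\rho$ gives the gas bound \eqref{energy-Heisenberg-gas}. For the upper bound on the optimal constant I would specialize to $\alpha = 1$, where $\xi_{\uH}(1) = \pi/2$ and the model reduces to free fermions, whose ground-state energy per unit length tends to $\tfrac{\pi^2}{6}\bar{\rho}^3$ in the thermodynamic limit; comparison with \eqref{energy-Heisenberg-gas} forces $C_{\uH} \le 2/3$.

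The main obstacle I anticipate is the tree bookkeeping rather than any analytic estimate: one must verify that the stopping rule indeed places a B-interval atop every branch, that every A-leaf is assigned to at least one B-leaf through its sibling subtree (so that no A-contribution is lost), and that the overcounting incurred by summing ancestor-siblings over \emph{all} B-leaves only inflates the right-hand side, which is the favorable direction. Granting these combinatorial facts, the dyadic summation and the two elementary per-interval estimates above are straightforward and the clean value $C_{\uH} = 1/32$ drops out; the only place where a genuine choice enters is the optimization producing the $64/3$ and $32/3$ factors, which could presumably be sharpened by additionally invoking Lemma \ref{lem:local_uncertainty} on the densest intervals.
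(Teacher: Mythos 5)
Your proposal is correct and follows essentially the same route as the paper's proof: the same bisection tree with A/B stopping rule, reliance on local exclusion alone (no local uncertainty), the same per-interval estimates (your optimization of $t^3/(t-1)$ on $[2,4)$ is exactly the paper's inequality $x-1\ge 3x^3/4^3$, and your dyadic A-sum bound $\tfrac{32}{3}|Q_B|^{-2}$ matches the paper's $\tfrac{2^5/3}{|Q_B|^2}$), yielding the same constant $C_{\uH}=1/32$, the same Jensen step, and the same semiclassical comparison at $\alpha=1$ for the upper bound. The only cosmetic difference is that you add the B- and A-contributions directly where the paper splits $T_{\uH}^{Q_B}$ via a convex combination with $\kappa=2/3$; these are equivalent.
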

	
	Here the assumption $N>1$ is important because of the
	dependence on $\alpha$ in the bound \eqref{kinetic-LT-Heisenberg}.
	Note that the integral in the r.h.s. of \eqref{kinetic-LT-Heisenberg} 
	actually becomes a Riemann sum approximation of 
	$\int_{\R} \rho(x)^3 \,dx$ 
	in the limit of many particles with relatively uniform density.
	Also note that the usual Lieb-Thirring inequality 
	\eqref{kinetic-energy-inequality} for fermions
	(with a constant independent of $\alpha$)
	is valid as a lower bound for $T_{\uH}$ for all $N$ and $\alpha \ge 1$.
	However, we will prove below that $\tilde{\rho}$ cannot be replaced
	with the actual density $\rho$ 
	in the r.h.s. of \eqref{kinetic-LT-Heisenberg}
	as long as $C_{\uH}$ is a positive constant independent of $\alpha$,
	using that $\xi_{\uH}(\alpha)$ can become arbitrarily large 
	with $\alpha \to \infty$.

	\begin{proof}
		We split an arbitrary finite interval 
		$Q_0$ s.t. $\int_{Q_0}\rho \ge 2$
		iteratively as in the proof of the previous theorem
		(in the case $N=2$ we can approximate this condition
		arbitrarily well with a finite interval, and some
		minor adjustments to the constants below 
		will need to be incorporated accordingly).
		In the Calogero-Sutherland case we cannot combine 
		\eqref{local_exclusion_density_H} 
		with Lemma \ref{lem:local_uncertainty} 
		to yield a uniform bound w.r.t. $\alpha$ 
		(better than the one for fermions)
		since $\xi_{\uH}(\alpha) \to \infty$ as $\alpha \to \infty$.
		Instead we need to rely solely on energy from exclusion
		(also resulting in a slightly better lower bound 
		for the constant $C_{\uH}$).
		We define $\tilde{\rho}|_{Q_0^c} := 0$ and
		\begin{equation} \label{rho-tilde}
			\tilde{\rho}|_Q := \frac{\int_Q\rho}{|Q|}
		\end{equation}
		for all A- and B-intervals $Q \in \mathbb{T}$, 
		and use for $0<\kappa<1$
		\begin{equation}
			T_{\uH}^{Q_B} \ge \frac{\xi_{\uH}(\alpha)^2}{|Q_B|^2} 
			\left( \int_{Q_B} \rho  \ - 1 \right)_+
			\ge \xi_{\uH}(\alpha)^2 \left( 
				\kappa \frac{3}{4^3} \int_{Q_B} \tilde{\rho}^3
				+ (1-\kappa) \frac{1}{|Q_B|^2}
				\right)
		\end{equation}
		(with $x-1 \ge 3x^3/4^3$ when $2 \le x := \int_{Q_B} \rho \le 4$)
		for the B-intervals, and
		$$
			\sum_{Q_A \in \mathcal{A}(Q_B)} \int_{Q_A} \tilde{\rho}^3 
			= \sum_{j=1}^k \frac{(\int_{Q_j} \rho)^3}{|Q_j|^2}
			\le \sum_{j=1}^k \frac{8}{4^{-j}|Q_0|^2}
			\le \frac{8}{3} \frac{4^{k+1}}{|Q_0|^2}
			= \frac{2^5/3}{|Q_B|^2}
		$$
		for all the A-intervals associated to a B-interval $Q_B$ 
		at level $k$ in $\mathbb{T}$ 
		(similarly as for $\mathcal{A}_1(Q_B)$ in the previous proof).
		Hence,
		$$
			T_{\uH}^{Q_B} \ge \xi_{\uH}(\alpha)^2 \left( 
			\kappa \frac{3}{2^6} \int_{Q_B} \tilde{\rho}^3
			+ (1-\kappa) \frac{3}{2^5} \sum_{Q_A \in \mathcal{A}(Q_B)} 
				\int_{Q_A} \tilde{\rho}^3 
				\right),
		$$
		and
		$$
			T_{\uH} \ge T_{\uH}^{Q_0}
			\ge \sum_{Q_B \in \mathbb{T}} T_{\uH}^{Q_B} 
			\ge C_{\uH} \,\xi_{\uH}(\alpha)^2 \int_{Q_0} \tilde{\rho}^3,
		$$
		where $\kappa := 2/3$, $C_{\uH} := 2^{-5}$.
		
		For \eqref{kinetic-LT-Heisenberg-interval} 
		and \eqref{energy-Heisenberg-gas} we use that
		$\int_{Q_0} \tilde{\rho}^3 \ge |Q_0|^{-2} (\int_{Q_0} \tilde{\rho} )^3$
		and $\int_{Q_0} \tilde{\rho} = \int_{Q_0} \rho$.
		The upper bound on the optimal constant $C_{\uH}$ follows by comparing
		\eqref{energy-Heisenberg-gas} with $\alpha=1$ to the semiclassical
		limit for fermions.
	\end{proof}
	
	\begin{thm}[Necessity of $\tilde{\rho}$ in \eqref{kinetic-LT-Heisenberg}]
		Assume that for some nonnegative constant $C_{\uH}$ 
		the inequality
		\begin{equation} \label{kinetic-LT-Heisenberg-rho}
			\int_{\R^N} T_{\uH}^N(\psi;\sx) \,d\sx
			\ \ge \ C_{\uH} \, \xi_{\uH}(\alpha)^2 \int_{\R} \rho(x)^3 \,dx
		\end{equation}
		holds for all $\psi \in \Dc_{\uH}^\alpha$ and $\alpha \ge 1$.
		Then we must have $C_{\uH} = 0$.
	\end{thm}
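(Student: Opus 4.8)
The plan is to exploit the only feature that distinguishes the two sides of \eqref{kinetic-LT-Heisenberg-rho}: as $\alpha \to \infty$ the factor $\xi_{\uH}(\alpha)^2$ diverges (by \eqref{approx_H_large} it is at least $\pi^2/4 + \alpha(\alpha-1)$), whereas one can design Calogero--Sutherland states whose kinetic energy and whose weight $\int \rho^3$ both stay bounded uniformly in $\alpha$. The mechanism is that the singular interaction $\alpha(\alpha-1)/r^2$ can be made weak by pushing the particles far apart, while $\xi_{\uH}(\alpha)$ still grows like $\alpha$. Since the constant $C_{\uH}$ in \eqref{kinetic-LT-Heisenberg-rho} is meant to be universal (in particular independent of the particle number), it suffices to produce a counterexample family in the simplest nontrivial case $N=2$; the inequality would then be forced to hold along that family and drive $C_{\uH}$ to zero.

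Concretely, I would fix a normalized bump $g \in C_c^\infty(\R)$ with $\supp g \subseteq [-1,1]$ and $\|g\|_{L^2}=1$, introduce a separation parameter $d>2$, and set
$$
	\psi_d(x_1,x_2) := \tfrac{1}{\sqrt 2}\big( g(x_1+d)g(x_2-d) + g(x_1-d)g(x_2+d) \big),
$$
which is symmetric, normalized, and --- crucially --- supported in the region $|x_1-x_2| \ge 2d-2 > 0$, hence away from the diagonal $\bDelta$. Thus $\psi_d \in C_c^\infty(\R^2\setminus\bDelta)\cap L^2_{\textup{sym}} \subseteq \Dc_{\uH}^\alpha$ (Theorem \ref{thm:Heisenberg_domains}), and on such functions the form \eqref{Heisenberg_form} reduces to the naive expression $T_{\uH}(\psi_d) = \tfrac12\int_{\R^2}|\nabla\psi_d|^2 + \alpha(\alpha-1)\int_{\R^2}|\psi_d|^2/(x_1-x_2)^2$. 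Because the two product terms have disjoint supports, the free kinetic part equals a constant $K_0 := \|g'\|_{L^2}^2$ independent of $\alpha$ and $d$, while the interaction part is bounded by $\alpha(\alpha-1)/(2d-2)^2$. Likewise the one-particle density is $\rho(x) = g(x+d)^2 + g(x-d)^2$, a pair of disjoint bumps with $\int_\R\rho = 2$, so that $\int_\R \rho^3 = 2\|g\|_{L^6}^6 =: 2K_6 > 0$, again independent of $\alpha$ and $d$.

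It then remains to couple the separation to the statistics parameter: choosing $d = \alpha$ gives the interaction bound $\alpha(\alpha-1)/(2\alpha-2)^2 = \alpha/(4(\alpha-1)) \le 1$ for $\alpha \ge 2$, so that $T_{\uH}(\psi_\alpha) \le K_0 + 1$ uniformly in $\alpha$. Feeding this family into \eqref{kinetic-LT-Heisenberg-rho} yields $K_0 + 1 \ge C_{\uH}\,\xi_{\uH}(\alpha)^2\,(2K_6) \ge 2K_6\,C_{\uH}\,(\pi^2/4 + \alpha(\alpha-1))$ for all $\alpha \ge 2$; letting $\alpha\to\infty$ forces $C_{\uH}=0$. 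The only point needing genuine care is the first equality for $T_{\uH}(\psi_d)$: one must confirm that $\tfrac12\|Q_\alpha^{\textup{max}}\psi_d\|^2$ really equals the free-plus-potential expression, which follows from the pointwise identity \eqref{pointwise-Q_alpha} after integrating the divergence term $-\alpha\nabla\cdot(W|\psi_d|^2)$ to zero (legitimate since $\psi_d$ is smooth and compactly supported off $\bDelta$). Everything else is an elementary disjoint-support computation. For general $N$ the same idea works by spreading the particles into $N$ mutually well-separated clusters, but this is unnecessary given the universality of $C_{\uH}$.
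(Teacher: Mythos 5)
Your proposal is correct, and it reaches the conclusion by a genuinely different construction than the paper's. The paper keeps the $N$ particles at \emph{fixed unit spacing} (a symmetrized product $\frac{1}{\sqrt{N!}}\sum_\sigma\prod_j\varphi_\ep(x_j-\sigma(j))$) and shrinks the bump width $\ep\to0$: this inflates $\int\rho^3\sim\ep^{-2}N$ on the right-hand side while the interaction energy stays bounded by $\tfrac92\alpha(\alpha-1)N(N-1)$, so one first tunes $\ep$ small enough that the $\alpha(\alpha-1)$ coefficient on the right beats the one on the left, and then sends $\alpha\to\infty$. You instead keep the bump width fixed (so $\int\rho^3$ is a constant) and send the \emph{separation} to infinity along with $\alpha$, which makes the interaction energy vanish rather than merely stay proportional to $\alpha(\alpha-1)$; then no tuning of a second parameter is needed and the contradiction is immediate from $\xi_{\uH}(\alpha)^2\ge\pi^2/4+\alpha(\alpha-1)\to\infty$. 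Your version is arguably more transparent in isolating the mechanism (exclusion energy is a pairwise, short-range effect, while $\int\rho^3$ is insensitive to interparticle distances), and it works already at $N=2$; the paper's version exhibits the failure at every fixed $N$ with a single two-parameter family and, as a bonus, records that $\tilde\rho$ (hence the right-hand side of the true inequality \eqref{kinetic-LT-Heisenberg}) is unaffected by the limit, confirming consistency with the proven bound. Your family has the same consistency property, since spreading the bumps apart makes $\tilde\rho$ small, though you do not remark on it. All the computational steps you flag (disjoint supports, the identification $T_{\uH}(\psi_d)=\frac12\|Q_\alpha\psi_d\|^2$ with the naive form via \eqref{pointwise-Q_alpha} for smooth $\psi_d$ supported off $\bDelta$, and membership $\psi_d\in\Dc_{\uH}^\alpha$) check out.
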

	\begin{proof}
		The idea is to spread out the particles 
		one by one but with high individual localization,
		and then compare the behavior of the 
		left and right hand sides of \eqref{kinetic-LT-Heisenberg-rho}
		as $\alpha \to \infty$.
		Let $\varphi \in C_c^\infty([-1,1];\R_{\ge 0})$
		be a bump function normalized s.t. $\int_{-1}^1 \varphi^2 \,dx = 1$,
		and define
		$\varphi_\ep(x) := \ep^{-\frac{1}{2}} \varphi(x/\ep)$,
		and the completely symmetric $N$-particle wave function
		$$
			\psi(\sx) := \frac{1}{\sqrt{N!}} \sum_{\sigma \in S_N} \prod_{j=1}^N \varphi_\ep(x_j - \sigma(j)).
		$$
		Assuming $0<\ep < 1/3$ we have on the support of $\psi$ that
		$|x_j - x_k| > 1/3$ for all $j \neq k$, so this wave function 
		is certainly in the domain of $T_{\uH}$.
		It is furthermore normalized:
		$$
			\int_{\R^N} |\psi|^2 \,d\sx 
			= \frac{1}{N!} \sum_{\sigma,\tau \in S_N} \int_{\R^N} \prod_{j=1}^N 
				\underbrace{ \varphi_\ep(x_j - \sigma(j)) \varphi_\ep(x_j - \tau(j)) }_{ \delta_{\sigma(j),\tau(j)} \,\varphi_\ep(x_j - \sigma(j))^2 } 
				\,d\sx
			= 1,
		$$
		with one-particle density
		\begin{multline*}
			\rho(x) = N \int_{\R^{N-1}} |\psi(x,x_2,\ldots,x_N)|^2 \,d\sx' \\
			= \frac{N}{N!} \sum_{\sigma,\tau \in S_N} \int_{\R^{N-1}} 
				\varphi_\ep(x - \sigma(1)) \varphi_\ep(x - \tau(1))
				\prod_{j=2}^N \varphi_\ep(x_j - \sigma(j)) \varphi_\ep(x_j - \tau(j))
				\,d\sx' \\
			= \sum_{j=1}^N \varphi_\ep(x - j)^2,
		\end{multline*}
		supported on the interval $[0,N+1]$,
		and s.t. $\rho(x)^3 = \sum_{j=1}^N \varphi_\ep(x-j)^6$.
		Furthermore,
		$$
			\int_{\R^N} \sum_{j<k} \frac{\alpha(\alpha-1)}{(x_j - x_k)^2} \,|\psi|^2 \,d\sx
			< 9 \alpha(\alpha-1) \binom{N}{2},
		$$
		and
		$$
			\partial_k \psi = \frac{\ep^{-1}}{\sqrt{N!}} \sum_{\sigma \in S_N} \ep^{-\frac{1}{2}} \varphi'((x_j - \sigma(k))/\ep) \prod_{j\neq k} \varphi_\ep(x_j - \sigma(j)),
		$$
		implying
		$
			\int_{\R^N} |\partial_k \psi|^2 \,d\sx = \ep^{-2} \int_{-1}^1 (\varphi')^2 \,dx.
		$
		Hence,
		\begin{equation} \label{TH_lhs}
			T_{\uH} < \ep^{-2} N \int_{-1}^1 (\varphi')^2 \,dx + \frac{9}{2} \alpha(\alpha-1) N(N-1),
		\end{equation}
		while the r.h.s. of \eqref{kinetic-LT-Heisenberg-rho} is
		\begin{equation} \label{TH_rhs}
			C_{\uH} \,\xi_{\uH}(\alpha)^2 \sum_{j=1}^N \int_{\R} \ep^{-3} \varphi((x - j)/\ep)^6 \,dx
			= \ep^{-2} N C_{\uH} \,\xi_{\uH}(\alpha)^2 \int_{-1}^1 \varphi^6 \,dx.
		\end{equation}
		Now, taking $\ep$ depending on $N$ so small that 
		$\ep^{-2} N C_{\uH} \int_{-1}^1 \varphi^6 \,dx > 9N(N-1)$,
		we find using \eqref{approx_H_large},
		$\xi_{\uH}(\alpha)^2 > \alpha(\alpha-1)$,
		that \eqref{TH_rhs} is strictly greater than \eqref{TH_lhs}
		for $\alpha$ sufficiently large, 
		unless $C_{\uH} = 0$.
		
		Note on the other hand that $\tilde{\rho}$, 
		and hence the r.h.s. of \eqref{kinetic-LT-Heisenberg},
		is independent of $\ep$ for the above choice of $\psi$.
	\end{proof}

\section{Some applications} \label{sec:applications}

\subsection{Many anyons in a harmonic oscillator potential} \label{sec:anyon_osc}

	As an application of the kinetic energy inequality
	\eqref{kinetic-LT-anyons} for anyons, we can consider the ground 
	state energy of $N$ anyons in a harmonic oscillator potential,
	a problem which has been discussed substantially in the literature.
	The Hamiltonian operator is in our conventions given by
	$$
		H = \sum_{j=1}^N \left( \frac{1}{2} D_j \cdot D_j + V(\bx_j) \right),
		\qquad
		V(\bx) := \frac{\omega^2}{2} |\bx|^2.
	$$ 
	For $N=2$ the exact spectrum was computed already in 
	\cite{Leinaas-Myrheim:77}, while for $N \ge 3$ the ground
	state energy is only known exactly for fermions $\alpha=1$ and in
	a small neighborhood around bosons $\alpha=0$. 
	The lower part of the spectrum has been computed numerically 
	for the full range of $\alpha \in [0,1]$ for $N=3,4$ 
	\cite{Sporre-Verbaarschot-Zahed:91,Murthy_et_al:91,Sporre-Verbaarschot-Zahed:92}
	and for general $N$ there are some known families of
	exact but excited or singular eigenstates 
	\cite{Wu:84,Chou:91,Bhaduri_et_al:92}.
	Some general features extend to arbitrary $N$,
	such as large numbers of level crossings in the ground state
	as $N$ is fixed large and $\alpha$ varied \cite{Chitra-Sen:92}.
	See \cite{Khare:05} for a more recent summary 
	of the status of this problem.

	With the kinetic energy inequality \eqref{kinetic-LT-anyons} at hand, 
	we would like to find a lower bound for the quadratic
	form
	\begin{multline} \label{harm_osc_qf}
		\langle \psi, H\psi \rangle 
		= \int_{\R^{2N}} \sum_j \left( \frac{1}{2} |D_j \psi|^2 + V(\bx_j) |\psi|^2 \right) \,d\sx \\
		\ge \int_{\R^2} \left( C_{\uA} C_{\alpha,N}^2 \rho(\bx)^2 + \frac{\omega^2}{2} |\bx|^2 \rho(\bx) \right) d\bx
	\end{multline}
	subject to the conditions $\int_{\R^2} \rho = N$
	and $\rho \ge 0$.
	We extremize the functional
	$$
		F[\rho,\lambda] := \int_{\R^d} \left( C_{\uA} C_{\alpha,N}^2 \rho(\bx)^2 
			+ \frac{\omega^2}{2} |\bx|^2 \rho(\bx) - \lambda \rho(\bx) \right) d\bx 
			+ \lambda N
	$$
	to find
	$$
		2C_{\uA} C_{\alpha,N}^2 \rho(\bx) + \frac{\omega^2}{2} |\bx|^2 - \lambda = 0,
		\qquad \text{and} \qquad
		N = \int_{\R^2} \rho(\bx) \,d\bx,
	$$
	i.e.
	$$
		\rho(\bx) = \frac{[\lambda - \omega^2 |\bx|^2/2]_+}{2C_{\uA} C_{\alpha,N}^2},
	$$
	with 
	$$
		2C_{\uA} C_{\alpha,N}^2 N 
		= 2\pi \int_0^{\sqrt{2\lambda}/\omega} 
			(\lambda - \omega^2 r^2/2) r\,dr
		= 2\pi\left[ \lambda\frac{r^2}{2} - \omega^2\frac{r^4}{8} \right]_0^{\sqrt{2\lambda}/\omega}
		= \frac{\pi\lambda^2}{\omega^2},
	$$
	and hence $\lambda = \omega C_{\alpha,N} \sqrt{2C_{\uA} N/\pi}$.
	Plugging in this form for $\rho$, 
	the r.h.s. of \eqref{harm_osc_qf} is
	\begin{multline*}
		\frac{2\pi}{2C_{\uA} C_{\alpha,N}^2} \int_0^{\sqrt{2\lambda}/\omega} \left(
			\frac{1}{2}(\lambda - \omega^2 r^2/2)^2
			+ \frac{\omega^2}{2} r^2 (\lambda - \omega^2 r^2/2)
			\right) r\,dr \\
		= \frac{\pi}{C_{\uA} C_{\alpha,N}^2} \left[ 
			\frac{\lambda^2}{2} \frac{r^2}{2}
			+ \frac{\omega^4}{4} \left( \frac{1}{2} - 1 \right) \frac{r^6}{6} 
			\right]_0^{\sqrt{2\lambda}/\omega} 
		= \frac{\pi}{2C_{\uA} C_{\alpha,N}^2} \left( 
			\frac{\lambda^3}{\omega^2} - \frac{\lambda^3}{3\omega^2} 
			\right) \\
		= \frac{\pi(\omega C_{\alpha,N} \sqrt{2C_{\uA} N} / \sqrt{\pi})^3}{3C_{\uA} C_{\alpha,N}^2 \omega^2},
	\end{multline*}
	and hence
	\begin{equation} \label{harm_osc_bound}
		\langle \psi,H\psi \rangle \ge \frac{1}{3} \sqrt{\frac{8C_{\uA}}{\pi}} C_{\alpha,N} \omega N^{\frac{3}{2}}.
	\end{equation}
	This resulting bound for the energy in terms of 
	$C_{\alpha,N} \omega N^{\frac{3}{2}}$
	allows us to compare the graph of $\alpha \mapsto C_{\alpha,N}$ 
	(the limiting graph for 
	$C_\alpha := \lim_{N \to \infty} C_{\alpha,N}$
	is sketched in Figure 2 in \cite{Lundholm-Solovej:anyon})
	with the previously found exact and numerical spectra 
	for 2--4 anyons
	and the extensions of certain features in such spectra to more anyons
	(note e.g. the partial reflection symmetry about $\alpha=1/2$;
	cp. \cite{Sen:92}).
	The correct ground state energy in the case $\alpha=1$ is
	to leading order
	$\sim \frac{\sqrt{8}}{3} \omega N^{\frac{3}{2}}$.
	We also note that for bosons, which satisfy the weaker kinetic
	energy inequality $T_0 \ge \frac{C_{\uA}}{N}\int_{\R^2} \rho^2$,
	the above bound reduces to
	$$
		\langle \psi,H\psi \rangle
		\ge \frac{1}{3} \sqrt{\frac{8C_{\uA}}{\pi}} \omega N.
	$$
	The correct ground state energy for bosons is of course
	$\omega N$.

	The bound \eqref{harm_osc_bound} improves,
	for odd numerator rational $\alpha$, 
	the previously known best lower bound for the energy
	of $N$ anyons
	in a harmonic oscillator
	\cite{Sen:92,Chitra-Sen:92}:
	\begin{equation} \label{harm_osc_bound_L}
		\langle \psi,H\psi \rangle
		\ge \omega\left( N + \left|L + \alpha\frac{N(N-1)}{2}\right| \right),
	\end{equation}
	where $L$ is the total angular momentum of the state $\psi$;
	$(-i\sum_j \bx_j \wedge \nabla_j)\psi = L\psi$.
	We see from this inequality that the energy 
	actually grows like $N^2$, unless $L \sim -\alpha \binom{N}{2}$ 
	(equality is possible for certain $N$ and $\alpha$)
	for which \eqref{harm_osc_bound_L} 
	reduces to the bosonic bound for the energy.
	The non-trivial dependence 
	on $\alpha$ and $N$
	of these energy bounds, 
	together with other aspects of the spectra mentioned above,
	raises the question whether the limiting constant 
	in the kinetic energy inequality \eqref{kinetic-LT-anyons}
	actually cannot be improved for even numerator and irrational 
	values of $\alpha$
	(for which $C_{\alpha,N} \to 0$ as $N \to \infty$).
	We refer to \cite{Lundholm-Solovej:exclusion} 
	for further discussion on this possibility.

\subsection{Stability of odd-fractional anyonic matter with Coulomb interactions} \label{anyon_stability}

	In the context of the fractional quantum Hall effect it is
	relevant to consider anyons arising in the form of quasiparticles 
	with fractional charge and statistics 
	\cite{Arovas-Schrieffer-Wilczek:84} 
	embedded in a three-dimensional system,
	possibly with an effective 3D Coulomb repulsion.
	Since these are then arising in a material 
	there could also be a large number
	of oppositely charged particles 
	which impose a Coulomb attraction on the anyons.
	We can ask if such an interacting system is 
	stable in the thermodynamic limit, 
	i.e. if the negative ground state energy grows at most linearly with 
	the total number of particles
	(cp. stability for ordinary fermionic matter \cite{Lieb-Seiringer:10}).
	Consider therefore the model Hamiltonian
	$$
		H(N,\sR) := \frac{1}{2m} \sum_{j=1}^N D_j \cdot D_j + V_C(\sx,\sR),
	$$
	$$
		V_C(\sx,\sR) := \sum_{1 \le j<k \le N} \frac{1}{|\bx_j - \bx_k|}
			- \sum_{j=1}^N \sum_{k=1}^K \frac{Z}{|\bx_j - \bR_k|}
			+ \sum_{1 \le j<k \le K} \frac{Z^2}{|\bR_j - \bR_k|},
	$$
	which can be taken to describe 
	$N$ anyons with mass $m$ and unit charge $-1$ moving in the plane $\R^2$,
	together with $K$ static `nuclei' of charge $Z \ge 1$
	at positions $\sR = (\bR_1,\ldots,\bR_K) \in \R^{2K}$,
	and 
	where all particles are
	interacting through electromagnetic forces in 3D.
	The condition $Z \ge 1$ is only technical and can be relaxed,
	while the overall scale of the unit charge can be adjusted with
	the variable mass $m$.

	By the usual electromagnetic screening in 3D, 
	we have the following important lemma to simplify the problem
	(given e.g. as Theorem 5.4 in \cite{Lieb-Seiringer:10}).
	
	\begin{lem}[Baxter's electrostatic inequality]
		$$
			V_C(\sx,\sR) \ge -(2Z+1)\sum_{j=1}^N \frac{1}{d_R(\bx_j)} 
			+ \frac{Z^2}{8} \sum_{k=1}^K \frac{1}{D_k},
		$$
		where
		$D_k := \frac{1}{2} \min_{j \neq k} |\bR_j - \bR_k|$,
		and
		$d_{\sR}(\bx) := \min_{k=1,\ldots,K} |\bx - \bR_k|$.
	\end{lem}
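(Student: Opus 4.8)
The plan is to treat this as a statement of classical electrostatics in the ambient space $\R^3$ (the Coulomb kernel $1/|\bx-\by|$ is the $3$D one even though the anyons move in a plane), and to prove it from three ingredients: the Voronoi (nearest-nucleus) decomposition, Newton's theorem on spherically smeared charges, and positivity of the Coulomb energy, $\iint |\bx-\by|^{-1}\,d\sigma(\bx)\,d\sigma(\by)\ge 0$ for any finite-energy signed measure $\sigma$. This is essentially the argument behind Theorem 5.4 in \cite{Lieb-Seiringer:10}, which we are entitled to cite, but recording the strategy is instructive. First I would fix the geometry: assign to each nucleus its Voronoi cell $\Gamma_k:=\{\bx: |\bx-\bR_k|=d_R(\bx)\}$ and its exclusion ball $B_k:=\{\bx:|\bx-\bR_k|<D_k\}$. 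Since $2D_k\le |\bR_k-\bR_l|$ for all $l\ne k$, the balls $B_k$ are mutually disjoint and each satisfies $B_k\subseteq\Gamma_k$. An electron at $\bx_j$ then lies in a unique cell $\Gamma_{k(j)}$ with $d_R(\bx_j)=|\bx_j-\bR_{k(j)}|$, and for every other nucleus $l\ne k(j)$ the point $\bx_j$ lies outside $B_l$.

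Next I would smear each nuclear point charge $Z$ uniformly over the sphere $\partial B_k$. By Newton's theorem the resulting potential equals $Z/|\bx-\bR_k|$ for $\bx\notin B_k$ and is capped at the constant $Z/D_k$ inside. I would split the attraction $-\sum_{j,k}Z/|\bx_j-\bR_k|$ into the contribution of the nearest nucleus, which is exactly $-\sum_j Z/d_R(\bx_j)$ and supplies one factor $Z$, and the far-nucleus contribution; by the geometry above the latter is unchanged under smearing, hence controllable by a global energy estimate. Simultaneously I would reserve one half of the nucleus--nucleus repulsion: since the nearest neighbour of $\bR_k$ sits at distance $2D_k$, a counting argument gives $Z^2\sum_{k<l}|\bR_k-\bR_l|^{-1}\ge \tfrac{Z^2}{4}\sum_k 1/D_k$, so this reserved half is already bounded below by the required positive term $\tfrac{Z^2}{8}\sum_k 1/D_k$.

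To absorb the far-nucleus attraction I would also smear each electron over a sphere of radius proportional to $d_R(\bx_j)$, chosen small enough that the sphere stays outside every $B_l$ (so that Newton's equalities for the far nuclei are preserved) and apply positivity to the signed measure given by the smeared electrons minus the smeared nuclei carrying the \emph{remaining} half of their charge. Completing the square trades the cross attraction against the electron--electron repulsion and the reserved nuclear repulsion, at the price of the now-finite smeared self-energies, which are of order $1/d_R(\bx_j)$ and assemble into $-(2Z+1)\sum_j 1/d_R(\bx_j)$: the nearest nucleus and the far-nucleus screening each account for a $Z$, and the electron self-repulsion contributes the $+1$.

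The hard part is the constant bookkeeping in this final step rather than the structural inequality. A naive positivity argument that completes the square against the \emph{full} nuclear charge consumes the entire nuclear self-energy and yields $-\tfrac12\sum_k Z^2/D_k$, i.e. the wrong sign for the nucleus term; the positive $\tfrac{Z^2}{8}\sum_k 1/D_k$ survives only because a fixed fraction of the nucleus--nucleus repulsion is set aside \emph{before} completing the square. One must also choose the electron smearing radius large enough that its self-energy stays $\lesssim 1/d_R(\bx_j)$ yet small enough to preserve the Newton equalities for the far nuclei, and then verify that the various pieces combine into precisely the coefficients $2Z+1$ and $Z^2/8$. Checking these numerical constants, not the underlying positivity, is where the real effort goes.
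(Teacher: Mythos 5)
The paper offers no proof of this lemma at all --- it is quoted verbatim as Theorem 5.4 of \cite{Lieb-Seiringer:10} --- and your proposal ultimately rests on the same citation, with your sketch being precisely the standard Voronoi-cell/Newton's-theorem/Coulomb-positivity argument behind that theorem, so the approaches coincide. Two small cautions if you were to write the sketch out in full: the condition you actually need for Newton's theorem is that each far nucleus $\bR_l$ lies outside the electron's smearing sphere (automatic once the radius is below $d_{\sR}(\bx_j)\le|\bx_j-\bR_l|$), not that the sphere avoids the balls $B_l$ --- the latter can fail when $\bx_j$ sits on $\partial B_l$ --- and, as you yourself note, the coefficients $2Z+1$ and $Z^2/8$ are asserted rather than verified, so the citation is still doing the real work.
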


	\noindent
	Hence, we would like to find a lower bound,
	independent of $\sR$, for the quadratic form
	\begin{multline} \label{stability_qf}
		\langle \psi, H(N,\sR)\psi \rangle
		= \int_{\R^{2N}} \left( \frac{1}{2m} \sum_j |D_j\psi|^2 + V_C(\sx,\sR) |\psi|^2 \right) \,d\sx \\
		\ge \frac{1}{2m} \int_{\R^{2N}} \sum_j \left( 
			|D_j\psi|^2 - 2m\frac{2Z+1}{d_{\sR}(\bx_j)} |\psi|^2 \right) \,d\sx
			+ \frac{Z^2}{8} \sum_{k=1}^K \frac{1}{D_k}.
	\end{multline}
	With the Lieb-Thirring inequality \eqref{LT-anyons} for anyons at hand, 
	we will prove the following theorem.
	
	\begin{thm}[Stability of `anyonic matter' for odd-fractional statistics]
		For a normalized $N$-anyon wave function 
		$\psi \in \Dc_{\uA}^\alpha$, with
		$\alpha = \mu/\nu$ an odd numerator reduced fraction, 
		we have the lower bound
		$$
			\langle \psi,H(N,\sR)\psi \rangle \ge -C \,\nu^2 mZ^2 (K+N),
		$$
		for some positive constant $C$.
	\end{thm}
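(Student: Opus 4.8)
The plan is to feed the anyonic Lieb--Thirring inequality \eqref{LT-anyons} into the variational bound \eqref{stability_qf}, which via Baxter's electrostatic inequality has already reduced the problem to bounding, uniformly in the nuclear configuration $\sR$, the effective one-body energy
$$\frac{1}{2m}\int_{\R^{2N}}\sum_j|D_j\psi|^2\,d\sx-(2Z+1)\int_{\R^{2N}}\sum_j\frac{|\psi|^2}{d_{\sR}(\bx_j)}\,d\sx+\frac{Z^2}{8}\sum_{k=1}^K\frac{1}{D_k}.$$
The first point to record is where $\nu^2$ comes from: for $\alpha=\mu/\nu$ with $\mu$ odd and $\gcd(\mu,\nu)=1$, the integer $(2p+1)\mu-2q\nu$ is odd (odd times odd minus even), hence nonzero, so $|(2p+1)\alpha-2q|\ge 1/\nu$ for all admissible $p,q$ and therefore $C_{\alpha,N}\ge 1/\nu$, i.e. $C_{\alpha,N}^{-2}\le\nu^2$, uniformly in $N$. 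This is exactly the feature of odd-numerator fractions that lets the constant in \eqref{LT-anyons} survive into the bound, and it is the only place the anyonic statistics enters.

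The strategy then follows the classical Lieb--Thirring route to stability of matter (cp. \cite{Lieb-Seiringer:10}): reserve a fraction of the kinetic energy, apply \eqref{LT-anyons} to the remaining fraction against the attractive potential $V(\bx):=-m(2Z+1)/d_{\sR}(\bx)$ (the factor $m$ absorbing the $1/(2m)$ after writing $\tfrac{1}{2m}\sum_j|D_j\psi|^2=\tfrac1m T_{\uA}^N$), and work on the Voronoi cells $\Gamma_k:=\{\bx:|\bx-\bR_k|\le|\bx-\bR_j|\ \forall j\}$ on which $d_{\sR}(\bx)=|\bx-\bR_k|$. On each cell the positive nuclear self-energy $\tfrac{Z^2}{8}D_k^{-1}$ is available to absorb the cost of nuclei clustering; this is essential, because the one-body operator $-\tfrac1{2m}\Delta-(2Z+1)/d_{\sR}$ is \emph{not} bounded below uniformly in $\sR$ (clustered nuclei deepen the well), so the naive bound $N\inf\operatorname{spec}$ fails. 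With $C_{\alpha,N}^{-2}\le\nu^2$, inequality \eqref{LT-anyons} contributes a term of order $-C_{\uA}'\nu^2 mZ^2\int d_{\sR}^{-2}$ over the bulk, and the bookkeeping over cells together with the particle-number constraint $\int\rho=N$ is arranged to turn this into the asserted $-C\nu^2 mZ^2(K+N)$.

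The hard part, and the reason \eqref{LT-anyons} cannot be applied to the bare Coulomb potential, is that in two dimensions $1/d_{\sR}$ fails to lie in $L^2$ at \emph{both} ends: the singularity gives $\int_{|\bx-\bR_k|<1}|\bx-\bR_k|^{-2}\,d\bx=\infty$ and the tail gives $\int_{|\bx|>1}|\bx|^{-2}\,d\bx=\infty$ (both logarithmically), so the critical-exponent right-hand side $\int|V_-|^2$ of \eqref{LT-anyons} is infinite. One therefore cannot pass all the way to a Thomas--Fermi problem either: the functional $C_{\uA}C_{\alpha,N}^2\int\rho^2-(2Z+1)\int\rho/d_{\sR}$ coming from \eqref{kinetic-LT-anyons} is already unbounded below (stacking annuli of density $\sim 1/d_{\sR}$ at dyadic scales around a single nucleus yields finite mass but energy $-\infty$). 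Thus $V$ must be split into a bounded bulk piece, to which \eqref{LT-anyons} applies, a short-distance piece near each $\bR_k$, and a far-field piece. The short-distance piece is to be dominated by the reserved fraction of kinetic energy: via the diamagnetic inequality $|D_j\psi|\ge|\nabla_j|\psi||$ (Lemma \ref{lem:diamagnetic_inequality}) one passes to $-\Delta$ and invokes a two-dimensional one-body Coulomb bound of the type $\int|\nabla u|^2-b\int|u|^2/|\bx|\ge-Cb^2\|u\|^2$, localized to a disk around $\bR_k$ at a scale set by $D_k$; the far-field and cell overlaps are absorbed by the nuclear repulsion and the $\int\rho=N$ constraint. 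Choosing the two cutoff scales and the size of the reserved kinetic fraction so that every error term is of order $\nu^2 mZ^2(K+N)$ — while keeping all constants independent of $\sR$, $N$ and $K$ — is the principal technical obstacle, and it is the $\nu^2$ entering here that makes the bound degrade relative to fermions and diverge as $\nu\to\infty$ (bosons).
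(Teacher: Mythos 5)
Your proposal follows essentially the same route as the paper's proof: Baxter's inequality, the observation that $(2p+1)\mu-2q\nu$ is odd and hence $C_{\alpha,N}\ge 1/\nu$ uniformly in $N$, the identification that $1/d_{\sR}$ fails to be square-integrable both at the nuclei and at infinity, the use of the diamagnetic inequality plus a local two-dimensional hydrogen-type bound to regularize the singularities, the subtraction of a constant $b$ to cut off the tail, and the absorption of the nuclear clustering cost by $\frac{Z^2}{8}\sum_k D_k^{-1}$. Two small points of comparison: the paper works on disjoint disks $B_{\ep_k}(\bR_k)$ with $\ep_k\le D_k$ rather than Voronoi cells, and the correct Neumann-localized hydrogen bound on a disk of radius $\ep$ is $-C(b^2+b/\ep)$, not $-Cb^2$ — the extra $b/\ep_k$ term is precisely what forces the choice $\ep_k=\min\{b^{-1},D_k\}$ and the use of the repulsion term $Z^2/(8D_k)$ (via the elementary inequality $-\ln x - c/x \le [-\ln c-1]_+$ to tame the logarithm from $\int[d_{\sR}^{-1}-b]_+^2$). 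The bookkeeping you defer is genuinely carried out in the paper with the final choice $b=\nu^2 m(2Z+1)$, and it closes exactly as you predict, so the plan is sound; but as written it is a correct strategy rather than a complete proof.
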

	
	There are some technical difficulties related to the dimensionality.
	In order to handle the non-square-integrability of the one-particle
	potential in \eqref{stability_qf} close to the nuclei, 
	we split the integral into two regions.
	On a region $B_\bep := \bigcup_{k=1}^K B_{\ep_k}(\bR_k)$,
	$0 < \ep_k \le D_k$,
	close to the nuclei we use
	$|D_j \psi| \ge \left| \nabla_j |\psi| \right|$
	together with the following lemma to deduce
	\begin{multline} \label{bound_close_to_nuclei}
		\int_{\R^{2N}} \left( 
			\frac{1}{2}|D_j \psi|^2 - 2m\frac{2Z+1}{d_{\sR}(\bx_j)} |\psi|^2 
			\right) \chi_{B_\bep}(\bx_j) \,d\sx \\
		\ge -2m(2Z+1) \int_{\R^{2N}} 
			\left( 4m(2Z+1) + \frac{2}{\ep(\bx_j)} \right)|\psi|^2 
			\chi_{B_\bep}(\bx_j) \,d\sx,
	\end{multline}
	where $\ep(\bx) := \ep_k$ on 
	the disk $B_{\ep_k}(\bR_k)$ around the $k$th nucleus.

	\begin{lem}
		For $v \in H^1(B_\ep(0))$ and $\ep,\mu>0$ we have
		$$
			\int_{B_\ep(0)} \left( 
				\mu |\nabla v|^2 - \frac{1}{|\brr|}|v|^2
				\right) d\brr
			\ge -\left( \frac{1}{\mu} + \frac{2}{\ep} \right)
				\int_{B_\ep(0)} |v|^2 \,d\brr.
		$$
	\end{lem}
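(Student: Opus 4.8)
The plan is to reduce this two-dimensional inequality to a one-dimensional weighted estimate by passing to polar coordinates $\brr = (r,\theta)$ on $B_\ep(0)\subseteq\R^2$. Writing $d\brr = r\,dr\,d\theta$ and discarding the angular part of the gradient, $|\nabla v|^2 \ge |\partial_r v|^2$, while the attractive term simplifies because $\frac{1}{|\brr|}|v|^2\,r\,dr\,d\theta = |v|^2\,dr\,d\theta$. It then suffices to prove, for each fixed angle and with $w(r):=v(r,\theta)$, the one-dimensional inequality
$$
\int_0^\ep |w|^2\,dr \ \le\ \mu\int_0^\ep |w'|^2\,r\,dr + \Big(\tfrac{1}{\mu}+\tfrac{2}{\ep}\Big)\int_0^\ep |w|^2\,r\,dr,
$$
and to integrate the result over $\theta\in[0,2\pi)$, which reproduces exactly the claimed constants $\tfrac1\mu+\tfrac2\ep$.

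To establish this one-dimensional estimate I would work first with smooth $v$ and combine two integration-by-parts identities. Integrating $1 = \frac{d}{dr}r$ against $|w|^2$ gives
$$
\int_0^\ep |w|^2\,dr = \ep\,|w(\ep)|^2 - 2\int_0^\ep r\,\re(\bar w\,w')\,dr,
$$
the boundary term at $r=0$ vanishing since $r|w(r)|^2\to0$ for smooth $w$. The positive boundary term $\ep|w(\ep)|^2$ is the obstruction: a naive Cauchy--Schwarz bound of the remaining integral leaves it uncontrolled. The key is a second identity, obtained by writing $\tfrac{\ep^2}{2}|w(\ep)|^2 = \int_0^\ep r|w(\ep)|^2\,dr$, inserting $|w(\ep)|^2 = |w(r)|^2 + \int_r^\ep \frac{d}{ds}|w|^2\,ds$, and exchanging the order of integration, which yields
$$
\ep\,|w(\ep)|^2 = \frac{2}{\ep}\int_0^\ep r|w|^2\,dr + \frac{2}{\ep}\int_0^\ep s^2\,\re(\bar w\,w')\,ds.
$$
Substituting this into the first identity, the two cross terms combine into $2\int_0^\ep s\big(\tfrac{s}{\ep}-1\big)\re(\bar w\,w')\,ds$, whose integrand has magnitude at most $2s|w||w'|$ because $0 \le s(1-\tfrac{s}{\ep}) \le s$ on $[0,\ep]$. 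Applying Young's inequality in the form $2s|w||w'| \le \mu\,s|w'|^2 + \tfrac{1}{\mu}s|w|^2$ then produces precisely the desired right-hand side.

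The main point, rather than the algebra, is justifying the passage from smooth functions to general $v \in H^1(B_\ep(0))$. I would first prove the inequality for $v$ smooth up to the boundary, so that $w(0)$ is finite and all the boundary manipulations and the Fubini step above are legitimate, and then invoke density of such functions in $H^1(B_\ep(0))$. This requires that both sides be continuous in the $H^1$-norm; for the attractive term this follows from H\"older's inequality together with the two-dimensional Sobolev embedding $H^1(B_\ep)\hookrightarrow L^q$ for all $q<\infty$, since $\tfrac{1}{|\brr|}\in L^p(B_\ep)$ for every $p<2$, so that $\int_{B_\ep}\tfrac{1}{|\brr|}|v|^2 \le \| \,|\brr|^{-1} \|_{L^p}\,\|v\|_{L^{2p'}}^2$ depends continuously on $v$. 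The Fubini argument also needs that $w(\cdot,\theta)$ lies in a weighted $H^1$ space with finite outer value $w(\ep,\theta)$ for a.e.\ $\theta$, which is again immediate once the estimate is secured on the dense smooth class and then extended by this continuity.
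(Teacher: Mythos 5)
Your proof is correct, but it takes a genuinely different route from the paper's. The paper adapts Lemma 2 of Dyson--Lenard: it symmetrizes to reduce to a radial Neumann problem on an annulus $B_\ep\setminus B_\delta$, writes the nodeless ground state via the logarithmic derivative $\omega=-v'/v$, turns the eigenvalue equation into a Riccati-type identity, proves $0\le\omega\le1/\mu$ by examining interior extrema of $\omega$, and then integrates to get $\lambda\ge-\tfrac1\mu-\tfrac2{\ep+\delta}$ uniformly in $\delta$. You instead slice into rays, discard the angular gradient, and prove the one-dimensional weighted inequality directly by two integration-by-parts identities: the second one, which rewrites the troublesome boundary term $\ep|w(\ep)|^2$ as $\tfrac2\ep\int_0^\ep r|w|^2\,dr+\tfrac2\ep\int_0^\ep s^2\operatorname{Re}(\bar w w')\,ds$, is the key step, after which the cross terms combine into $2\int_0^\ep s(\tfrac{s}{\ep}-1)\operatorname{Re}(\bar w w')\,ds$ and Young's inequality gives exactly the constants $\tfrac1\mu+\tfrac2\ep$ (I checked the algebra; it closes). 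What your approach buys is elementarity and rigor at the function-space level: no appeal to the existence and radial symmetry of a minimizer, no nodeless-ground-state or ODE maximum-principle argument, and a clean density-plus-continuity extension to all of $H^1(B_\ep)$ (your Sobolev/H\"older justification for continuity of the Coulomb term is the right one). What the paper's variational/ODE route buys is that it is the natural two-dimensional transcription of the Dyson--Lenard argument the authors are explicitly following, and in principle the Riccati analysis could yield a sharper eigenvalue bound than a ray-by-ray estimate that throws away the angular kinetic energy; for the constant actually claimed, however, the two methods are equally good.
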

	\begin{proof}
		We simply adapt the proof of Lemma 2 in \cite{Dyson-Lenard:67}
		to two dimensions.
		It is clear that the ratio
		$$
			\int_{B_\ep(0)} \left( 
				\mu |\nabla v|^2 - \frac{1}{|\brr|}|v|^2
				\right) d\brr 
				\Big/ \int_{B_\ep(0)} |v|^2 \,d\brr
		$$
		is minimized by a radially symmetric function $v(r)$,
		and that we may consider a lower bound 
		when the domain is replaced by the annulus
		$B_\ep(0) \setminus B_\delta(0)$ 
		if the bound is uniform in $\delta>0$.
		We have therefore reduced to the Neumann problem
		$$
			-\mu\left( v'' + \frac{v'}{r} \right) - \frac{v}{r} = \lambda v,
			\qquad
			v'(\delta) = v'(\ep) = 0,
		$$
		on $[\delta,\ep]$,
		and would like to show that the lowest eigenvalue
		$\lambda \ge -\frac{1}{\mu} - \frac{2}{\ep}$.
		Noting that the ground state $v$ is nodeless
		and defining $\omega := -v'/v$, we have
		\begin{equation} \label{omega-Neumann}
			\frac{\mu}{r}\partial_r (r\omega) - \mu \omega^2 - \frac{1}{r} = \lambda,
			\qquad
			\omega(\delta) = \omega(\ep) = 0,
		\end{equation}
		and after multiplying by $r$ and integrating over 
		$[\delta,\ep]$,
		\begin{equation} \label{omega-integral}
			0 - \mu \int_{\delta}^{\ep} \omega^2 rdr - (\ep-\delta) 
			= \lambda \frac{\ep^2-\delta^2}{2}.
		\end{equation}
		Now, differentiating the differential equation in 
		\eqref{omega-Neumann} by $r$ we have
		$$
			\omega'' + \omega'\left( \frac{1}{r} - 2\omega \right) 
			+ \frac{1}{r^2}\left( \frac{1}{\mu} - \omega \right) = 0,
		$$
		and therefore an extremal point $r_0 \in (\delta,\ep)$ for $\omega$
		cannot be a minimum if $\omega(r_0)<0$ 
		and cannot be a maximum if $\omega(r_0)>1/\mu$,
		hence $0 \le \omega \le 1/\mu$.
		It then follows by \eqref{omega-integral} that
		$\lambda \ge -\frac{1}{\mu} - \frac{2}{\ep + \delta}$,
		which proves the lemma.
	\end{proof}
	
	Now, 
	combining the integral term in
	\eqref{stability_qf} with \eqref{bound_close_to_nuclei} 
	using half of the kinetic energy,
	produces the lower bound
	\begin{multline*}
		\langle \psi,H(N,\sR)\psi \rangle \ge \\
		\frac{1}{2m} \int_{\R^{2N}} \sum_j \left( 
			\frac{1}{2} |D_j \psi|^2 - 2m(2Z+1) \bigg(
				2(2m(2Z+1) + \ep^{-1}) \chi_{B_\bep}(\bx_j) 
					\right. \\ \left.\left.
				+ \frac{\chi_{{B_\bep}^c}(\bx_j)}{d_{\sR}(\bx_j)}
				\right) |\psi|^2 \right) \,d\sx \\
		\ge \frac{1}{2m} \int_{\R^{2N}} \sum_j \left( 
			\frac{1}{2} |D_j \psi|^2 
				- V_1(\bx_j)
				|\psi|^2 \right) \,d\sx
			\ - (2Z+1)bN,
	\end{multline*}
	with the one-particle potential $V_1(\bx) := $
	$$
		2m(2Z+1) \left(
				2\left( 2m(2Z+1) + \frac{1}{\ep(\bx)} \right) \chi_{B_\bep}(\bx) 
				+ \left( \frac{1}{d_{\sR}(\bx)} - b\right) \chi_{{B_\bep}^c}(\bx)
				\right),
	$$
	where we have here also used a standard method to handle the 
	non-square-integrability of the potential at infinity, 
	by adding and subtracting a constant $b>0$.
	Applying the Lieb-Thirring inequality \eqref{LT-anyons} 
	with the resulting 
	potential $V_1$ given above, we find
	\begin{multline*}
		\langle \psi, H(N,\sR)\psi \rangle \\
		\ge -2mC_{\uA}' \,\nu^2 (2Z+1)^2 \int_{\R^2} \left[
			2\left( 2m(2Z+1) + \ep^{-1} \right) \chi_{B_\bep} 
			+ \left( \frac{1}{d_{\sR}} - b\right) \chi_{{B_\bep}^c}
			\right]_+^2 \\
			- (2Z+1)bN + \frac{Z^2}{8} \sum_{k=1}^K \frac{1}{D_k} 
	\end{multline*}
	\begin{multline*}
		\ge -2mC_{\uA}' \,\nu^2 (2Z+1)^2 \sum_{k=1}^K \bigg(
				\int_{B_{\ep_k}(\bR_k)} 4\left( 2m(2Z+1)+\ep_k^{-1} \right)^2 \,d\bx \\
				+ \int_{{B_{\ep_k}(\bR_k)}^c} \left[ \frac{1}{|\bx - \bR_k|} - b \right]_+^2 d\bx
				- \frac{Z^2/(2Z+1)^2}{16mC_{\uA}' \nu^2 D_k}
			\bigg) \ - (2Z+1)bN 
	\end{multline*}
	\begin{multline*}
		\ge -2mC_{\uA}' \,\nu^2 (2Z+1)^2 \sum_{k=1}^K \bigg(
				4\left( 2m(2Z+1)+\ep_k^{-1} \right)^2 \pi\ep_k^2 \\
				+ 2\pi \int_{\ep_k < r < b^{-1}} (r^{-1}-b)^2 r\,dr
				- \frac{1}{144mC_{\uA}' \nu^2 D_k}
			\bigg) \ - (2Z+1)bN 
	\end{multline*}
	\begin{multline*}
		\ge -4\pi mC_{\uA}' \,\nu^2 (2Z+1)^2 \sum_{k=1}^K \bigg( 
			4 + 16\frac{m^2(2Z+1)^2}{b^2} (\ep_k b)^2 \\
			+ \left[ - \frac{1}{2}(\ep_k b)^2 + 2\ep_k b - \ln{\ep_k b} - \frac{3}{2} \right]_{\ep_k b < 1}
			- \frac{b}{288\pi mC_{\uA}' \nu^2} (D_k b)^{-1}
			\bigg) \\
			- (2Z+1)bN.
	\end{multline*}
	Finally, taking $\ep_k := \min \{b^{-1}, D_k\}$
	and in the latter case using that 
	$$
		-\ln x - \frac{c}{x} \ \le \ [-\ln c - 1]_+
	$$
	for $0<x\le 1$ and $c>0$, 
	we have the further bound
	\begin{multline*}
		\ge -4\pi mC_{\uA}' \,\nu^2 (2Z+1)^2 K \left( 
			5 + 16\frac{m^2(2Z+1)^2}{b^2} 
			+ \left[ \ln \frac{288\pi mC_{\uA}' \nu^2}{b} \right]_+
			\right) \\ 
			- (2Z+1)bN \\
		\ge -C \,\nu^2 m(2Z+1)^2 (K + N),
	\end{multline*}
	for some $C>0$,
	where in the last step we chose $b := \nu^2 m(2Z+1)$.
	\qed

\subsection{External potentials in the Calogero-Sutherland case} \label{sec:Heisenberg_apps}

	Let us end with also considering an application of the above local exclusion 
	and Lieb-Thirring inequalities for one dimensional statistics
	in the Calogero-Sutherland case. We think of $N$ identical such particles
	placed in an external confining potential $V$,
	and would like to obtain a lower bound for the ground state energy $E_0$.
	We have from Theorem \ref{thm:LT-Heisenberg} that for $\alpha \ge 1$
	\begin{multline*}
		E_0 := \inf \left\{ 
			\int_{\R^N} \left( T_{\uH}^N(\psi;\sx) + \sum_{j=1}^N V(x_j) |\psi|^2 \right) d\sx
			\ : \ \psi \in \Dc_{\uH}^\alpha,\ \|\psi\| = 1 \right\} \\
		\ge \inf \left\{ 
			\int_{\R} \left( C_{\uH} \, \xi_{\uH}(\alpha)^2 \tilde{\rho}(x)^3 
				+ V(x) \rho(x) \right) dx
			\ : \ \rho: \R \to \R_{\ge 0}, 
			\int_{\R} \rho = N \right\}.
	\end{multline*}
	Because of the local approximation $\tilde{\rho}$ to the density,
	this minimization problem is less tractable than 
	e.g. the anyonic case in Section \ref{sec:anyon_osc}.
	However, what we can do instead is to consider a partition $\Pc$ of the real
	line into a collection of finite subintervals $\{I_j\}_{j=1}^M$ 
	and an exterior domain $I_{\textup{ext}}$,
	and on each such interval apply Lemma \ref{lem:local_exclusion_H}
	or the local form of Theorem \ref{thm:LT-Heisenberg}:
	\begin{multline*}
		T_{\uH} + \int_{\R} V \rho 
		\ \ge \ \sum_{j=1}^M \left( T_{\uH}^{I_j} + \int_{I_j} V\rho \right) 
			+ \int_{I_{\textup{ext}}} V\rho \\
		\ge \sum_{j=1}^M \left( \frac{\xi_{\uH}(\alpha)^2}{|I_j|^2} \Ec\left( \int_{I_j}\rho \right) 
			+ V_j \int_{I_j}\rho \right) 
			+ V_\textup{ext} \int_{I_\textup{ext}}\rho,
	\end{multline*}
	where we define
	$V_j := \inf_{I_j} V$ and
	$$
		\Ec(\rho_j) := \max \{ 0,\ \rho_j-1,\ \chi_{\{\rho_j \ge 2\}} \,\rho_j^3/32 \},
		\qquad \rho_j := \int_{I_j} \rho.
	$$
	Here we used \eqref{local_exclusion_density_H}
	and \eqref{kinetic-LT-Heisenberg-interval}
	with $C_{\uH} \ge 1/32$, 
	and note that $\rho_j-1 \le \rho_j^3/32$ 
	and $\rho_j \ge 2$ implies 
	$\rho_j \ge \rho_c$, with the critical point
	$\rho_c \approx 5.068 > 2$.
	We have hence reduced to a finite-dimensional minimization problem
	$$
		E_0 \ge E[\Pc] := \inf \left\{ \Ec_\Pc[\brho]
			\ : \ \brho = ((\rho_j),\rho_\textup{ext}) \in \R_{\ge 0}^{M+1},
			\ \sum_j \rho_j + \rho_\textup{ext} = N \right\},
	$$
	$$
		\Ec_\Pc[\brho] :=
			\sum_{j=1}^M \left( \frac{\xi_{\uH}(\alpha)^2}{|I_j|^2} \Ec(\rho_j) 
			+ V_j \rho_j \right) 
			+ V_\textup{ext} \rho_\textup{ext},
	$$
	for which a minimizer $\brho$ exists due to the convexity of $\Ec_\Pc$.
	Depending on $N$ and the details of $V$, 
	a hopefully good lower bound for $E_0$ 
	(and an estimate for the ground state density $\rho$)
	can then be found by maximizing over all suitable partitions $\Pc$,
	$$
		E_0 \ge \sup \{ E[\Pc] : \Pc = ((I_j)_{j=1}^M, I_\textup{ext}) \ \text{partition of $\R$, $M \in \N$} \}.
	$$
	As an illustration of this procedure we can 
	consider the following example, which in the harmonic case 
	$V(x) = \frac{\omega^2}{2}|x|^2$
	can be compared to the exact
	ground state energy for the corresponding Calogero-Sutherland model
	\cite{Calogero:69,Sutherland:71},
	$E_0 = \frac{1}{2}\omega N(1 + \alpha (N-1))$.
	We can also compare to the approximative Thomas-Fermi method discussed in
	\cite{Sen-Bhaduri:95,Smerzi:96}.
	
	\begin{prop}
		Given an external confining potential 
		$V(x) = c^\mu |x|^\mu$, with $c,\mu > 0$,
		and assuming
		$N \gg (\xi_{\uH}(\alpha) c)^{2/\mu}$ and $\xi_{\uH}(\alpha) c N \gg 1$,
		we have for the ground state energy of
		$H = \hat{T}_{\uH} + \sum_{j=1}^N V(x_j)$
		\begin{equation} \label{Heisenberg_energy_bound}
			E_0 \ \ge \ C(\mu,N)
				\left( \xi_{\uH}(\alpha) \,c \right)^{\frac{2\mu}{\mu+2}} 
				N^{\frac{3\mu+2}{\mu+2}},
		\end{equation}
		where $C(\mu,N) \ge 0$ is a constant s.t.
		$$
			\lim_{N \to \infty} C(\mu,N) = \left( \frac{\sqrt{3}}{4\sqrt{2\pi}} \right)^{\frac{2\mu}{\mu+2}}
				\frac{\mu+2}{2\mu^2} \frac{ \Gamma(\frac{1}{\mu}) }{ \Gamma(\frac{5}{2}+\frac{1}{\mu}) }
				\left( \frac{ \Gamma(\frac{3}{2}+\frac{1}{\mu}) }{ \Gamma(1+\frac{1}{\mu}) }  \right)^{\frac{3\mu+2}{\mu+2}}.
		$$
		In particular, 
		$\liminf_{N \to \infty} E_0/N^2 \ge \frac{\sqrt{3}}{8\pi} \xi_{\uH}(\alpha) \omega$ 
		for $\mu=2$ and $c=\omega/\sqrt{2}$.
	\end{prop}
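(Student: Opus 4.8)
The plan is to extract from the variational lower bound $E_0 \ge \sup_\Pc E[\Pc]$ established above a continuum Thomas--Fermi problem, to solve the latter explicitly, and then to show that the discrete minimization converges to it as the partition is refined. First I would observe that, since $\rho_j^3/|I_j|^2 = (\rho_j/|I_j|)^3|I_j|$ and $V_j = \inf_{I_j}V$, the functional $\Ec_\Pc[\brho]$, with $\Ec(\rho_j)$ replaced by its cubic branch, coincides with the Thomas--Fermi functional
\begin{equation} \label{TF-functional}
  \Ec^{\mathrm{TF}}[\rho] := \int_\R \Bigl( \tfrac{1}{32}\,\xi_{\uH}(\alpha)^2\,\rho(x)^3 + c^\mu|x|^\mu\,\rho(x) \Bigr)\,dx,
\end{equation}
with $V$ replaced by its cell-minima, evaluated at the piecewise-constant density taking the value $\rho_j/|I_j|$ on $I_j$, and minimized over $\rho \ge 0$ with $\int_\R \rho = N$. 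Because restricting to piecewise-constant densities can only raise the infimum, while the coarse-graining $V_j \le V$ can only lower it, $E[\Pc]$ will sit within a controllable gap of the continuum minimum $\Ec^{\mathrm{TF}}_{\min}$, and the strategy reduces to (i) computing $\Ec^{\mathrm{TF}}_{\min}$ and (ii) showing the gap is $o(1)$ relative to it.

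For step (i) I would solve \eqref{TF-functional} by a Lagrange multiplier $\lambda$: the Euler--Lagrange equation $\tfrac{3}{32}\xi_{\uH}(\alpha)^2\rho^2 + c^\mu|x|^\mu = \lambda$ gives the minimizer
\begin{equation} \label{TF-minimizer}
  \rho_{\mathrm{TF}}(x) = \Bigl( \tfrac{32}{3\,\xi_{\uH}(\alpha)^2} \Bigr)^{1/2} \bigl( \lambda - c^\mu|x|^\mu \bigr)_+^{1/2},
\end{equation}
supported on $|x| \le R := (\lambda/c^\mu)^{1/\mu}$. Substituting $x = Rt$ reduces the normalization $\int\rho_{\mathrm{TF}} = N$ to a Beta integral $\int_{-1}^1(1-|t|^\mu)^{1/2}\,dt = \tfrac{\sqrt\pi}{\mu}\Gamma(1/\mu)/\Gamma(\tfrac32+1/\mu)$, which fixes $\lambda$ as an explicit constant times $(\xi_{\uH}(\alpha)\,c\,N)^{2\mu/(\mu+2)}$. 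Writing the energy density with the help of \eqref{TF-minimizer} as $\rho_{\mathrm{TF}}(\lambda/3 + \tfrac23 c^\mu|x|^\mu)$, the minimal energy becomes $\tfrac{\lambda N}{3} + \tfrac23 c^\mu\int|x|^\mu\rho_{\mathrm{TF}}$; a second Beta integral $\int_{-1}^1|t|^\mu(1-|t|^\mu)^{1/2}\,dt = \tfrac{\sqrt\pi}{\mu}\Gamma(1+1/\mu)/\Gamma(\tfrac52+1/\mu)$ then yields the announced scaling $(\xi_{\uH}(\alpha)c)^{2\mu/(\mu+2)}N^{(3\mu+2)/(\mu+2)}$ with exactly the Gamma-function prefactor claimed for $\lim_{N\to\infty}C(\mu,N)$, and the harmonic case $\mu=2$, $c=\omega/\sqrt2$ follows by inserting the explicit values of the Gamma functions. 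Note that $\Ec^{\mathrm{TF}}_{\min}$ obeys this scaling \emph{exactly} for every $N$ by homogeneity, so all $N$-dependence of $C(\mu,N)$ comes solely from the discretization gap.

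Step (ii) is where the real work lies, and it is the reason only the limit of $C(\mu,N)$ is pinned down. I would choose a partition $\Pc_N$ adapted to \eqref{TF-minimizer}, with fine cells (each carrying $O(1)$ particles, hence $\sim N$ cells in all) in the bulk where $\rho_{\mathrm{TF}}$ is large, and coarser cells toward the edge. Three errors must then be controlled: the Riemann-sum error of replacing $\int\rho_{\mathrm{TF}}^3$ by $\sum_j(\rho_j/|I_j|)^3|I_j|$, which is $o(1)$ relative to $\Ec^{\mathrm{TF}}_{\min}$ because the number of cells diverges; the potential coarse-graining error $V-V_j$, which vanishes as $\max_j|I_j|/R \to 0$; and the deficit from the non-cubic branches of $\Ec$, for which I would use the uniform estimate $\Ec(\rho_j) \ge \rho_j^3/32 - 1/4$, the loss $1/4$ being incurred only on cells where the discrete minimizer places $\rho_j < 2$. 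The main obstacle is precisely this last point: the minimizer of the (convex) discrete functional must be prevented from depositing mass in many low-occupancy cells, where it would pay only the linear penalty $(\rho_j-1)_+$ rather than the cubic one, which would otherwise spoil the constant. This is exactly what the hypotheses $\xi_{\uH}(\alpha)cN \gg 1$ and $N \gg (\xi_{\uH}(\alpha)c)^{2/\mu}$ secure: together they place the Thomas--Fermi profile \eqref{TF-minimizer} in the regime of large chemical potential $\lambda$ and large support $R$, so that the bulk density is high enough for the cubic branch to be the active one throughout the bulk (whence $\rho_j \ge \rho_c$ there, cf.\ the critical value $\rho_c \approx 5.068$ noted above) and the support accommodates the $\sim N$ cells needed for the Riemann sum to converge. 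Carrying this out shows $E[\Pc_N]/\Ec^{\mathrm{TF}}_{\min} \to 1$, and taking the supremum over partitions gives \eqref{Heisenberg_energy_bound} with $\lim_{N\to\infty}C(\mu,N)$ as stated.
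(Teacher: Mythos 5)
Your step (i) --- the continuum Thomas--Fermi minimization via a Lagrange multiplier, the profile $(\lambda-c^\mu|x|^\mu)_+^{1/2}$, and the two Beta integrals --- is exactly the computation the paper performs (there it appears as the explicit minimizer of the convex discrete functional on a uniform partition, with the sums identified as Riemann sums), and your Gamma-function values agree with the paper's $I(\mu)$ and $J(\mu)$. The gap is in step (ii), specifically in the choice of partition: fine cells ``each carrying $O(1)$ particles, hence $\sim N$ cells in all'' destroys the bound. The quantity entering the exclusion term is the \emph{occupancy} $\rho_j=\int_{I_j}\rho$, not the density $\rho_{\mathrm{TF}}(x)$, and $\Ec(\rho_j)$ vanishes identically for $\rho_j\le 1$ and is only linear up to $\rho_j=2$; your statement that the hypotheses make ``the bulk density high enough for the cubic branch to be active'' conflates the two, since the occupancy is made small precisely by making the cells small. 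Quantitatively, the relaxation $\Ec(\rho_j)\ge\rho_j^3/32-1/4$ produces a deficit $\tfrac14\xi_{\uH}(\alpha)^2\sum_j|I_j|^{-2}$, which for $M\sim N$ cells of width $\sim R/N$ is $\sim\xi_{\uH}(\alpha)^2N^3/(4R^2)$, i.e.\ eight times the main term $\tfrac{1}{32}\xi_{\uH}(\alpha)^2\int\rho_{\mathrm{TF}}^3\sim\tfrac{1}{32}\xi_{\uH}(\alpha)^2N^3/R^2$. Equivalently, once there are $\gtrsim N/2$ cells where $V$ is small, the admissible configuration spreading fewer than two particles into each cell pays no exclusion energy at all, so $E[\Pc_N]$ collapses far below $\Ec^{\mathrm{TF}}_{\min}$; the hypotheses do not rescue this partition.

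The fix is the paper's: take a uniform partition of width $a=1/c$, so the support of the minimizer contains $M\sim\lambda^{1/\mu}\sim(\xi_{\uH}(\alpha)cN)^{2/(\mu+2)}\ll N$ cells, each bulk cell carrying $\rho_k\sim\sqrt{\lambda}/(\xi_{\uH}(\alpha)c)\sim(\xi_{\uH}(\alpha)c)^{-2/(\mu+2)}N^{\mu/(\mu+2)}$ particles. The hypothesis $N\gg(\xi_{\uH}(\alpha)c)^{2/\mu}$ is exactly what makes these occupancies exceed $\rho_c$ (so the cubic branch is genuinely active with negligible deficit), while $\xi_{\uH}(\alpha)cN\gg1$ is exactly what makes $M$ large (so the Riemann sums converge to the Beta integrals). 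With that replacement your argument reproduces the paper's proof; note also that the paper sidesteps your ``piecewise-constant densities raise the infimum'' reduction by computing the minimizer of the discrete convex functional explicitly, which has the added benefit of exhibiting on which cells each branch of $\Ec$ is active.
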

	\begin{proof}
		Consider a simple partition of the form
		$$
			\Pc = \left( [-(k+1)a,-ka]_{k=0}^{M-1}, [ka,(k+1)a]_{k=0}^{M-1}, (-\infty,-Ma] \cup [Ma, +\infty) \right),
		$$
		$a>0$, for which we have 
		(making use of the symmetry, and with a suitable relabeling of $\rho_j$)
		$$
			\Ec_\Pc[\brho] =
			2\sum_{k=0}^{M-1} \left(
				\frac{\xi_{\uH}(\alpha)^2}{a^2} \Ec(\rho_k)
				+ c^\mu (ka)^\mu \rho_k
				\right) + c^\mu (Ma)^\mu \rho_\textup{ext}.
		$$
		We extremize the functional
		\begin{multline*}
			F[\brho,\lambda] :=
			\Ec_\Pc[\brho] - \lambda \left(\sum_j \rho_j + \rho_\textup{ext}\right) 
				+ \lambda N \\
			= 2\sum_{k=0}^{M-1} \left(
				\frac{\xi_{\uH}(\alpha)^2}{a^2} \Ec(\rho_k)
				+ \left( c^\mu a^\mu k^\mu - \lambda \right) \rho_k
				\right) 
				+ \left( c^\mu a^\mu M^\mu - \lambda \right) \rho_\textup{ext}
				+ \lambda N
		\end{multline*}
		to find, if $M$ is chosen large enough,
		$\rho_\textup{ext} = 0$ and
		$\rho_k = 0$ for $c^\mu a^\mu k^\mu - \lambda \ge 0$,
		$\rho_k = 1$ for $0 < \lambda - c^\mu a^\mu k^\mu < \xi_{\uH}(\alpha)^2/a^2$,
		and otherwise
		$$
			\rho_k = \max \left\{ \rho_c, \frac{a\sqrt{32/3}}{\xi_{\uH}(\alpha)} \sqrt{\lambda - c^\mu a^\mu k^\mu} \right\}.
		$$
		Hence, $N = 2\sum_k \rho_k + \rho_\textup{ext} =$
		\begin{multline*}
			= \sum_{\substack{k \ge 0\ \text{s.t.}\\ \lambda - c^\mu a^\mu k^\mu \ge \xi_{\uH}(\alpha)^2/a^2}} 
				2\max \left\{ \rho_c, \frac{a\sqrt{32/3}}{\xi_{\uH}(\alpha)} \sqrt{\lambda - c^\mu a^\mu k^\mu} \right\}
				\\ 
			+ \sum_{\substack{k \ge 0\ \text{s.t.}\\ 0<\lambda - c^\mu a^\mu k^\mu < \xi_{\uH}(\alpha)^2/a^2}} 2.
		\end{multline*}
		Now, let $a := 1/c$ and assume that 
		$N \gg (\xi c)^{2/\mu}$ and $\xi c N \gg 1$.
		Then
		\begin{multline*}
			\xi c N 
			\sim \sum_{\substack{k \ge 0\ \text{s.t.}\\ \lambda - k^\mu \ge \xi^2 c^2}} 
				2\sqrt{32/3} \sqrt{\lambda - k^\mu} \\
			\sim 8\sqrt{2/3} \int_0^{\lambda^{1/\mu}} \sqrt{\lambda-k^\mu} \,dk 
			= 8\sqrt{2/3} \,I(\mu) \lambda^{1/2+1/\mu},
		\end{multline*}
		where
		$$
			I(\mu) := \int_0^1 \sqrt{1-x^\mu} \,dx 
			= \frac{\sqrt{\pi}}{2} \frac{ \Gamma(1+\frac{1}{\mu}) }{ \Gamma(\frac{3}{2}+\frac{1}{\mu}) }.
		$$
		Hence, 
		$\lambda \sim \left( \frac{\sqrt{3/2} }{ 8 I(\mu) } \xi c N \right)^{\frac{2\mu}{\mu+2}}$ 
		and 
		\begin{multline*}
			E_0 \ge \Ec_\Pc[\brho] \\
			\ge 2\sum_{\substack{k \ge 0\ \text{s.t.}\\ \lambda - k^\mu \ge \xi^2 c^2}} 
				\left( \frac{\xi^2 c^2}{32} \left( \frac{\sqrt{32/3}}{\xi c} \sqrt{\lambda - k^\mu} \right)^3
				+ k^\mu \left( \frac{\sqrt{32/3}}{\xi c} \sqrt{\lambda - k^\mu} \right) 
				\right) \\
			= \frac{8\sqrt{2/3}}{\xi c} \sum_k \left(
				\frac{1}{3} (\lambda - k^\mu)^{\frac{3}{2}}
				+ k^\mu \sqrt{\lambda - k^\mu} \right) 
			\ \sim 8\sqrt{2/3} \,J(\mu) \frac{ \lambda^{\frac{3}{2}+\frac{1}{\mu}} }{ \xi c },
		\end{multline*}
		where
		$$
			J(\mu) := \frac{1}{3} \int_0^1 (1-x^\mu)^{\frac{3}{2}} \,dx 
			+ \int_0^1 x^\mu \sqrt{1-x^\mu} \,dx 
			= \frac{\sqrt{\pi}}{2} \frac{\mu+2}{2\mu^2} \frac{ \Gamma(\frac{1}{\mu}) }{ \Gamma(\frac{5}{2}+\frac{1}{\mu}) }.
		$$
		It follows under the above conditions that
		$$
			\liminf_{N \to \infty} \frac{E_0}{N^\frac{3\mu+2}{\mu+2}} 
			\ge \left( 8\sqrt{2/3} \right)^{-\frac{2\mu}{\mu+2}} \frac{J(\mu)}{I(\mu)^{\frac{3\mu+2}{\mu+2}}} 
				\left( \xi c \right)^{\frac{3\mu+2}{\mu+2} - 1},
		$$
		which is the asymptotics of the r.h.s. of 
		\eqref{Heisenberg_energy_bound}.
	\end{proof}

\end{document}